\tikzset{
    > = stealth,
    every node/.append style = {
        draw = none,
        text = black
    },
    every path/.append style = {
        arrows = ->
    },
    hidden/.style = {
        draw = black,
        shape = circle,
        inner sep = 2pt
    }
}
\newtheorem{assumption}{Assumption}
\newtheorem{theorem}{Theorem}
\newtheorem{condition}{Condition}
\newtheorem{lemma}{Lemma}
\theoremstyle{remark}
\newtheorem{remark}{Remark}
\begin{document}

\title{A Semiparametric Instrumented Difference-in-Differences Approach to Policy Learning}

\author{Pan Zhao$^1$\thanks{Email: \href{mailto:pan.zhao@inria.fr}{pan.zhao@inria.fr}}, Yifan Cui$^2$\thanks{Email: \href{mailto:cuiyf@zju.edu.cn}{cuiyf@zju.edu.cn}} \\
    $^1$ PreMeDICaL, Inria \\
    Desbrest Institute of Epidemiology and Public Health, University of Montpellier, France \\
    $^2$ Center for Data Science, Zhejiang University, China}
\maketitle
 
\begin{abstract}
Recently, there has been a surge in methodological development for the difference-in-differences (DiD) approach to evaluate causal effects. Standard methods in the literature rely on the parallel trends assumption to identify the average treatment effect on the treated. However, the parallel trends assumption may be violated in the presence of unmeasured confounding, and the average treatment effect on the treated may not be useful in learning a treatment assignment policy for the entire population. In this article, we propose a general instrumented DiD approach for learning the optimal treatment policy. Specifically, we establish identification results using a binary instrumental variable (IV) when the parallel trends assumption fails to hold. Additionally, we construct a Wald estimator, novel inverse probability weighting (IPW) estimators, and a class of semiparametric efficient and multiply robust estimators, with theoretical guarantees on consistency and asymptotic normality, even when relying on flexible machine learning algorithms for nuisance parameters estimation. Furthermore, we extend the instrumented DiD to the panel data setting. We evaluate our methods in extensive simulations and a real data application.
\end{abstract}

\noindent
{\bf Keywords:} individualized treatment rule, instrumental variable, multiple robustness, semiparametric efficiency, unmeasured confounding
\vfill
\newpage

\section{Introduction}

Data-driven individualized decision making has received increasing interests in many fields, such as precision medicine \citep{luedtke2016statistical,tsiatis2019dynamic}, econometrics and  quantitative social sciences \citep{imai2004causal,athey2021policy}, computer science and operations research \citep{shi2022off,kallus2022doubly}. The common goal is to learn optimal treatment assignment policies (also known as regimes, rules or plans) which map individual characteristics to treatment assignments so as to optimize some functional of the counterfactual outcome distributions, leveraging observational data where causal effects can be identified under various strategies and assumptions.

Popular existing methods in the statistical and machine learning literature include model-based approaches such as Q-learning \citep{watkins1992q,murphy2003optimal,linn2017interactive}, A-learning \citep{robins2000marginal,shi2018high}, and direct model-free policy search approaches \citep{zhang2012estimating,zhao2012estimating}. Recent advances of policy learning have also considered a variety of data structures, optimization objectives, criteria or constraints, such as survival and longitudinal data \citep{goldberg2012q,ertefaie2018constructing,zhao2023efficient}, networks \citep{viviano2019policy,sherman2020general}, distributional robustness \citep{mo2021learning,sahoo2022learning}, budget, fairness, or interpretability constraints \citep{luedtke2016optimal,fang2022fairness}, among others \citep{luedtke2020performance,hadad2021confidence,nie2021learning,hu2022fast,jin2023sensitivity}.

With few exceptions, most methods in prior work rely on the pivotal assumption that there is no unmeasured confounding. This is a key threat to credible causal inference in observational studies, and may lead to suboptimal policies, because this assumption is impossible to verify or test in practice. An ad hoc work-around commonly adopted by practitioners is to collect and appropriately adjust for a large number of covariates, which still lacks theoretical guarantee and seems likely to be error-prone. To address this limitation, there has been recent progress made in several directions. \cite{kallus2018confounding} propose to minimize the worst-case regret of a policy under a marginal sensitivity model for the unmeasured confounding. \cite{zhang2021selecting} utilize a randomization test to rank by a partial order and select treatment rules within a given finite collection. While partial identification results provide certain improvement, the performance of such a learned policy may still be suboptimal. \cite{qi2023proximal} build on the semiparametric proximal causal inference framework introduced by \cite{cui2023semiparametric} to establish point identification results on different policy classes and accordingly propose several classification-based approaches; but this framework requires the analyst to correctly classify the measured covariates into three types of proxies, and it may be difficult to estimate the confounding bridge functions.

Instrumental variable methods are widely used to handle unmeasured confounding in observational studies or randomized trials with non-compliance. The core requirements for a pretreatment variable to be a valid IV are: (i) it is associated with the treatment; (ii) it is independent of all unmeasured confounders; (iii) it does not have a direct causal effect on the outcome other than through the treatment. Along with the seminal work of \cite{imbens1994identification,angrist1996identification}, extensive development has been made in using the IV to estimate the local average treatment effect \citep{tan2006regression,ogburn2015doubly}, defined as the average treatment effect for the complier subgroup who would always comply with their treatment assignments. Since the complier subgroup is unknown and may have systematically different characteristics from the population, the population (conditional) average treatment effect is arguably the causal parameter of primary interest in most studies \citep{hernan2006instruments,aronow2013beyond}, especially for policy learning. More recently, \cite{pu2021estimating} consider a partial identification approach to optimal treatment rule estimation; and \cite{wang2018bounded} formally establish point identification of the population average treatment effect under alternative no-interaction assumptions, upon which \cite{cui2021semiparametric} propose various IV methods for estimating optimal treatment regimes. It is notable that all of these IV methods in the literature only consider the setting with a single time point, with the only exception of \cite{pmlr-v202-xu23x}, where the authors propose an IV approach to off-policy evaluation in confounded Markov decision processes with infinite horizons.

There has always been interest in exploiting the longitudinal structure common in datasets such as electronic health records and medical claims in epidemiology and biomedicine \citep{robins2000marginal}, as well as cross-sectional or panel data in program evaluations, economic censuses, and surveys \citep{athey2017state}. DiD methods have been an important tool widely used by empirical researchers \citep{card1994minimum}. The key identification assumption of DiD is that the trend in outcome of the control group over time is informative
about what the trend would have been for the treatment group in the absence of the
treatment. Specifically, under the standard (conditional) parallel trends assumption, which states that the (conditional) expected trends in the potential outcomes of the two groups in the absence of the treatment are identical, the average treatment effect on the treated can be identified \citep{abadie2005semiparametric,sant2020doubly}; we refer interested readers to \cite{lechner2011estimation} and \cite{roth2023s} for detailed reviews. However, concerns often arise that the parallel trends assumption may be violated due to unmeasured confounding. \cite{athey2006identification} develop a new changes-in-changes model that relates outcomes to an individual's group, time, and unobservable characteristics; and various recent extensions for DiD include partial identification \citep{ye2020negative}, sensitivity analysis \citep{keele2019patterns} and negative control \citep{sofer2016negative}, among others \citep{dukes2022semiparametric,park2023universal}. Moreover, DiD methods focus on the identification and estimation of the average treatment effect on the treated, which limits its application in policy learning since the treated cannot represent the population. To the best of our knowledge, this is the first work to systematically study policy learning under the DiD setting.

In this article, we combine the two natural experiments and propose an instrumented DiD approach to policy learning when the parallel trends assumption fails to hold in the presence of unmeasured confounding. Specifically, we adapt and extend the recent progress in \cite{ye2022instrumented} and \cite{vo2022structural}, relaxing some key assumptions of the conventional IV and DiD methods. We allow for the violation of the parallel trends assumption by leveraging an IV which has no direct effect on the the trend in outcome, and does not modify the average treatment effect. Notably, this exogenous variable is not necessarily a valid instrument for the conventional treatment-outcome association, since we allow it to have a direct effect on the outcome not just through the treatment at each time point.

The contributions of this article are summarized as follows. First, we propose the direct policy search approach to learn optimal treatment assignment policies, based on the conditional average treatment effect estimators using instrumented DiD. This approach essentially allows us to learn the optimal policy that maximizes the estimated value within a restricted policy class. Second, we establish novel identification results of optimal policies for the instrumented DiD design subject to unmeasured confounding. The new results give rise to new inverse probability weighting estimators of optimal policies without necessarily identifying the value function for a given policy. Another interesting progress is also made towards identifying optimal policies without necessarily using the subjects' realized treatment values. In summary, we construct a Wald estimator and novel inverse probability weighting estimators. A class of semiparametric efficient and multiply robust estimators is also proposed, which is consistent provided that a subset of several posited models indexing the observed data distribution is correctly specified. Third, we prove theoretical guarantees for the proposed multiply robust policy learning approaches. Specifically, we consider both parametric models and flexible data-adaptive machine learning algorithms with the cross-fitting procedure to estimate the nuisance parameters, to draw valid inferences under mild regularity conditions and certain rate of convergence conditions. In particular, we consider a restricted policy class indexed by an Euclidean parameter $\eta$ and establish the $n^{-1/3}$ convergence rate of $\hat{\eta}$, even though its resultant limiting distribution is not standard. Fourth, we extend our proposed methods to the panel data setup. We establish identification of the conditional average treatment effect under alternative assumptions and provide the direct policy search approaches for panel data. The theoretical results for panel data can be similarly derived.

The rest of this article is organized as follows. In Section~\ref{sec:sf}, we introduce the statistical framework of instrumental variable, DiD and policy learning. Section~\ref{sec:idid} develops our main methodology of learning the optimal policy using the instrumented DiD. Semiparametric efficiency results and multiply robust estimators are presented in Section~\ref{sec:semr}. Section~\ref{sec:asym} establishes the asymptotic properties of the proposed estimators. Extensive simulations are reported in Section~\ref{sec:simu} to demonstrate the proposed methods, followed by a real data application in Section~\ref{sec:da}. Next, we consider the extension of our methods to panel data in Section~\ref{sec:panel}. The article concludes in Section~\ref{sec:disc} with a discussion of some remarks and future work. All proofs and additional results are provided in the Supplementary Material. 

\section{Statistical framework}
\label{sec:sf}

We first introduce some notation. Let $X$ denote the $p$-dimensional vector of covariates that belongs to a covariate space $\mathcal{X} \subset \mathbb{R}^p$, $A \in \mathcal{A} = \{0, 1\}$ denote the binary treatment, $Y \in \mathbb{R}$ denote the outcome of interest, and $T \in \mathcal{T} = \{0, 1\}$ denote the time period. Suppose that $U = (U_{0}, U_{1})$ is an unmeasured confounder of the effect of $A$ on $Y$, and $Z \in \{0, 1\}$ is a binary instrumental variable; the observed data are $O = (X, A, Y, T, Z)$. We assume that the random samples $(O_{1}, \ldots, O_{n})$ collected at the two time periods are independent and identically distributed (i.i.d.) observations of $O \sim P_{0}$, and there is no overlap between individuals in these two time periods. This setup is commonly known as the repeated cross-sectional data. Extension to panel data setting is studied in Section~\ref{sec:panel}.

We use the potential outcomes framework \citep{neyman1923applications,rubin1974estimating} to define causal effects. Let $A_{t}(z)$ denote the potential exposure at time $t$ if the instrument were set to level $z$, $Y_{t}(a)$ denote the potential outcome at time $t$ if the exposure were set to level $a$ and the instrument would take the same value it actually had, and $Y_{t}(z, a)$ denote the potential outcome at time $t$ had the instrument and exposure been set to $z, a$ respectively.

Without loss of generality, we assume that larger values of $Y$ are more desirable. Our aim is to identify and estimate an policy $d: \mathcal{X} \to \mathcal{A}$, that maximizes the expected potential outcome in a counterfactual world had this policy been implemented on the population. The optimal policy at time $t$ is given by $d_{{\rm opt},t} (x) = I\{\tau_{t} (x) > 0\}$, where $\tau_{t} (x) = E[Y_{t}(1) - Y_{t}(0) \mid X = x]$ is the conditional average treatment effect (CATE) at time $t$.

Let $Y_{t}(d) = d(X) Y_{t}(1) + (1 - d(X)) Y_{t}(0)$ denote the potential outcome under a hypothetical intervention that assigns treatment according to policy $d$. The value function of a policy $d$ at time $t$ is defined as $V_{t}(d) = E[Y_{t}(d)]$. Let $\mathcal{D}$ be the class of candidate policies of primary interest. The optimal policy can be obtained by directly maximizing the value function:
\begin{equation}\label{eq:opt.policy}
d_{{\rm opt},t} = \arg\max_{d \in \mathcal{D}} V_{t}(d) = \arg\max_{d \in \mathcal{D}} E[\tau_{t} (X) d(X)].
\end{equation}

Throughout this article, we assume that the stable treatment effect over time assumption holds, which says that the CATE does not vary over time, and thus ensures that the optimal policy remains the same between the two time periods. The subscript $t$ is omitted when it is clear from the context.

\begin{remark}
Our proposed instrumented DiD methodology can also be readily formulated in the weighted classification perspective. Pioneered by \cite{zhang2012estimating}, this perspective has been widely used in the biostatistics and precision medicine literature, and enjoys certain robustness empirically. Specifically, the above maximization problem~\eqref{eq:opt.policy} can be transformed into the following equivalent weighted classification problem:
\begin{equation}\label{eq:wei.cla}
d_{\rm{opt}} (x) = \arg\max_{d \in \mathcal{D}} E[W I\{A = d(X)\}],
\end{equation}
where $W$ is regarded as a weight that is motivated by standard outcome regression, inverse probability weighting and doubly robust methods. Many robust classification methods and off-the-shelf implementations can be utilized.
\end{remark}

\section{Instrumented difference-in-differences}
\label{sec:idid}

In this section, we introduce a general instrumented DiD framework for policy learning under endogeneity, and provide novel identification results. Let $\pi (t, z, x) = Pr(T = t, Z = z \mid X = x)$, and for any random variable $C \in \{A, Y\}$, we define $\mu_{C}(t, z, x) = E[C \mid T = t, Z = z, X = x]$, $\delta_{C}(x) = \mu_{C}(1, 1, x) - \mu_{C}(0, 1, x) - \mu_{C}(1, 0, x) + \mu_{C}(0, 0, x)$. We make the following identification assumptions.

\begin{assumption}[Consistency]\label{asmp:cons}
$A = A_{T}(Z)$ and $Y = Y_{T}(A)$.
\end{assumption}

\begin{assumption}[Positivity]\label{asmp:posi}
$c_1 < \pi (t, z, x) < 1 - c_1$ for some $0< c_1 < 1/2$.
\end{assumption}

\begin{assumption}[Random sampling]\label{asmp:ran.samp}
$T \perp \{A_{t}(z), Y_{t}(a) : t = 0, 1, z = 0, 1, a = 0, 1\} \,|\, X, Z$.
\end{assumption}

\begin{assumption}[Stable treatment effect over time]\label{asmp:stab.eff}
$E[Y_{0}(1) - Y_{0}(0) \mid X] = E[Y_{1}(1) - Y_{1}(0) \mid X]$.
\end{assumption}

Assumption~\ref{asmp:cons} is also known as the stable unit treatment value assumption, which states that there is no interference between subjects and no multiple versions of the instrument and treatment. Assumption~\ref{asmp:posi} ensures the same support of $X$ for each $(T, Z)$ level. Assumption~\ref{asmp:ran.samp} is commonly assumed for repeated cross-sectional data \citep{abadie2005semiparametric}. Assumption~\ref{asmp:stab.eff} requires that the CATE $\tau (x)$ does not vary over time, and thus ensures that the optimal policy remains the same between the two time periods.

\begin{assumption}[Trend relevance]\label{asmp:trend.rele}
$E[A_{1}(1) - A_{0}(1) \mid Z = 1, X] \neq E[A_{1}(0) - A_{0}(0) \mid Z = 0, X]$.
\end{assumption}

\begin{assumption}[Independence \& exclusion restriction]\label{asmp:ind.excl}
$Z \perp \{A_{t}(1), A_{t}(0), Y_{t}(1) - Y_{t}(0), Y_{1}(0) - Y_{0}(0) : t = 0, 1\} \,|\, X$.
\end{assumption}

\begin{assumption}[No unmeasured common effect modifier]\label{asmp:no.com.modi}
$Cov\{A_{t}(1) - A_{t}(0), Y_{t}(1) - Y_{t}(0) \mid X\} = 0$ for $t = 0, 1$.    
\end{assumption}

Assumption~\ref{asmp:trend.rele} and \ref{asmp:ind.excl} are parallel to the core assumptions in the standard IV literature. Directed acyclic graphs illustrating the causal structure are provided in Section~\ref{sm.sec:dag} of the Supplementary Material. Assumption~\ref{asmp:trend.rele} states that the IV affects the trend in treatment. Assumption~\ref{asmp:ind.excl} requires that the IV is unconfounded, has no direct effect on the trend in outcome, and does not modify the treatment effect. This exogenous variable is not necessarily a valid instrument for the conventional treatment-outcome association, since we allow it to have a direct effect on the outcome not just through the treatment at each time point. Assumption~\ref{asmp:no.com.modi} essentially states that there is no common effect modifier by an unmeasured confounder, of the additive effect of treatment on the outcome, and the additive effect of the IV on treatment. It has been studied in \cite{cui2021semiparametric}, and relax certain no additive interaction assumptions in \cite{wang2018bounded}. We refer interested readers to \cite{ye2022instrumented} for detailed discussion and concrete examples of an IV for DiD. Now we present our first identification result under the above assumptions.

\begin{theorem}\label{thm:wald}
Under Assumptions~\ref{asmp:cons}-\ref{asmp:no.com.modi}, the optimal policy is nonparametrically identified by
\begin{equation}
\arg\max_{d \in \mathcal{D}} E\left[\frac{\delta_{Y}(X)}{\delta_{A}(X)} d(X)\right].
\end{equation}
\end{theorem}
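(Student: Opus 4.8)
The plan is to reduce the whole statement to a single pointwise identity, namely that the Wald-type ratio recovers the conditional average treatment effect,
\[
\frac{\delta_{Y}(x)}{\delta_{A}(x)} = \tau(x) = E[Y_{1}(1) - Y_{1}(0) \mid X = x].
\]
Once this is in hand, the theorem is immediate from the characterization of the optimal policy in \eqref{eq:opt.policy}: since $\tau(X)$ and $\delta_{Y}(X)/\delta_{A}(X)$ coincide as functions of $X$, we have $\arg\max_{d \in \mathcal{D}} E[\tau(X) d(X)] = \arg\max_{d \in \mathcal{D}} E[\{\delta_{Y}(X)/\delta_{A}(X)\} d(X)]$. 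So the entire argument concentrates on expressing $\delta_{A}$ and $\delta_{Y}$ in terms of potential outcomes.

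First I would rewrite the conditional means $\mu_{A}$ and $\mu_{Y}$ in counterfactual form. By Consistency (Assumption~\ref{asmp:cons}), $A = A_{T}(Z)$ and $Y = Y_{T}(A_{T}(Z))$, so conditioning on $T = t, Z = z$ turns these into $A_{t}(z)$ and $Y_{t}(A_{t}(z))$. Random sampling (Assumption~\ref{asmp:ran.samp}) then lets me drop the event $T = t$ from the conditioning, because $T$ is independent of the time-$t$ potential quantities given $(X, Z)$, while Positivity (Assumption~\ref{asmp:posi}) guarantees every conditioning event has positive probability so the manipulations are legitimate. This gives $\mu_{A}(t,z,x) = E[A_{t}(z) \mid Z = z, X = x]$ and, after the decomposition $Y_{t}(A_{t}(z)) = Y_{t}(0) + A_{t}(z)\{Y_{t}(1) - Y_{t}(0)\}$, a corresponding expansion of $\mu_{Y}$.

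For $\delta_{A}$, the independence part of Assumption~\ref{asmp:ind.excl} removes the remaining conditioning on $Z$, yielding $\delta_{A}(x) = E[A_{1}(1) - A_{1}(0) \mid x] - E[A_{0}(1) - A_{0}(0) \mid x]$; rearranging shows this is exactly the contrast that Trend relevance (Assumption~\ref{asmp:trend.rele}) assumes to be nonzero, so the denominator never vanishes. The substantive computation is $\delta_{Y}$. Plugging the expansion of $\mu_{Y}$ into the four-way difference, the baseline terms assemble into $E[Y_{1}(0) - Y_{0}(0) \mid Z = 1, x] - E[Y_{1}(0) - Y_{0}(0) \mid Z = 0, x]$, which cancels precisely because the exclusion restriction in Assumption~\ref{asmp:ind.excl} makes $Y_{1}(0) - Y_{0}(0)$ independent of $Z$ given $X$. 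The surviving terms are products $A_{t}(z)\{Y_{t}(1) - Y_{t}(0)\}$; independence again strips the $Z$ conditioning, leaving $E[\{A_{1}(1) - A_{1}(0)\}\{Y_{1}(1) - Y_{1}(0)\} \mid x] - E[\{A_{0}(1) - A_{0}(0)\}\{Y_{0}(1) - Y_{0}(0)\} \mid x]$. Here No unmeasured common effect modifier (Assumption~\ref{asmp:no.com.modi}) factors each expectation into a product of marginal contrasts, and Stable treatment effect over time (Assumption~\ref{asmp:stab.eff}) replaces both $E[Y_{t}(1) - Y_{t}(0) \mid x]$ by the common value $\tau(x)$. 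The result is $\delta_{Y}(x) = \tau(x)\,\delta_{A}(x)$, and dividing through gives the ratio identity.

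I expect the delicate step to be the cancellation of the baseline trend in $\delta_{Y}$, since this is where the method departs from ordinary DiD. Rather than positing parallel trends across treatment groups, which unmeasured confounding would break, the cancellation is engineered across the two IV arms $Z = 1$ and $Z = 0$, and it hinges on reading off the exact object that Assumption~\ref{asmp:ind.excl} declares independent of $Z$, namely the untreated trend $Y_{1}(0) - Y_{0}(0)$. Keeping careful track of which factor, $A_{t}$ versus $Y_{t}(1) - Y_{t}(0)$, each assumption acts upon, and verifying that Assumption~\ref{asmp:no.com.modi} is precisely what converts the mixed moment into a clean product, is the part that requires the most care; the remaining manipulations are routine bookkeeping.
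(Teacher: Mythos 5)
Your proposal is correct and follows essentially the same route as the paper's proof: rewrite $\mu_A$ and $\mu_Y$ in counterfactual form via consistency and random sampling, cancel the untreated trend $Y_1(0)-Y_0(0)$ across the two IV arms using the exclusion restriction, factor the mixed moments via the no-unmeasured-common-effect-modifier assumption, and invoke the stable-effect assumption to extract $\tau(x)$, yielding $\delta_Y(x)=\tau(x)\,\delta_A(x)$. Your explicit remarks that positivity legitimizes the conditioning and that trend relevance keeps the denominator nonzero are correct minor additions the paper leaves implicit.
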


Theorem~\ref{thm:wald} combines the Wald estimator for CATE and the direct policy search approach in Equation~\eqref{eq:opt.policy}. Similarly, the IPW estimator proposed by \cite{ye2022instrumented} can also be used to learn the optimal policy. Semiparametric efficient and multiply robust estimators are presented in Section~\ref{sec:semr}. Next we propose our novel identification results, which also serves as basis for the estimators proposed in Section~\ref{sec:semr}.

\begin{theorem}\label{thm:ipw1}
Under Assumptions~\ref{asmp:cons}-\ref{asmp:no.com.modi}, the optimal policy is nonparametrically identified by
\begin{equation}\label{eq:ipw1}
\arg\max_{d \in \mathcal{D}} E\left[\frac{(2Z - 1)(2T - 1)(2A - 1) Y I\{A = d(X)\}}{\pi(T, Z, X) \delta_{A}(X)}\right].
\end{equation}
\end{theorem}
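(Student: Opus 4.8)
The plan is to reduce Theorem~\ref{thm:ipw1} to the Wald identification already established in Theorem~\ref{thm:wald}, rather than re-deriving identification from the potential-outcome assumptions. Concretely, I would show that the IPW objective in~\eqref{eq:ipw1} agrees with the Wald objective $E[\{\delta_Y(X)/\delta_A(X)\}\,d(X)]$ up to an additive term that is free of the policy $d$, so that the two functionals share the same maximizer.

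The first step is the pointwise algebraic identity
$$(2A-1)\, Y\, I\{A = d(X)\} = d(X)\, Y - (1 - A)\, Y,$$
which I would verify by enumerating the four combinations of $A \in \{0,1\}$ and $d(X) \in \{0,1\}$. Its effect is to linearize the integrand in $d(X)$, separating a single $d$-dependent piece $d(X)\,Y$ from a remainder $-(1-A)\,Y$ that does not involve $d$. Substituting into~\eqref{eq:ipw1} splits the objective into $E[\{(2Z-1)(2T-1)/(\pi(T,Z,X)\,\delta_A(X))\}\,d(X)\,Y]$ and a second expectation carrying no $d$; the latter is finite under the positivity Assumption~\ref{asmp:posi}, which bounds $\pi$ away from $0$, and may therefore be discarded from the $\arg\max$.

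For the surviving term I would condition on $(T,Z,X)$ and apply iterated expectations. The inner conditional expectation replaces $Y$ by $\mu_Y(T,Z,X)$, and each of the four $(T,Z)$ cells is reweighted by its own probability $\pi(t,z,x)$, which cancels the $\pi$ in the denominator exactly. What remains is the signed combination with weights $(2z-1)(2t-1) = +,-,-,+$ over the cells $(t,z)\in\{(1,1),(0,1),(1,0),(0,0)\}$, i.e. the difference-in-differences operator, so that $E[\{(2Z-1)(2T-1)/\pi(T,Z,X)\}\,Y \mid X = x] = \delta_Y(x)$. Pulling the $X$-measurable factor $d(X)/\delta_A(X)$ outside this conditional expectation yields exactly $E[\{\delta_Y(X)/\delta_A(X)\}\,d(X)]$, whose maximizer is the optimal policy by Theorem~\ref{thm:wald}; here $\delta_A(X)\neq 0$ is guaranteed by the trend-relevance Assumption~\ref{asmp:trend.rele}, as already used there.

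I expect the only delicate points to be the linearization identity and the sign bookkeeping across the four $(T,Z)$ cells. Everything downstream is routine iterated-expectation manipulation, and the substantive causal step---relating $\delta_Y/\delta_A$ to the CATE $\tau$---has already been carried out in the proof of Theorem~\ref{thm:wald}, on which I rely.
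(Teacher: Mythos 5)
Your proof is correct, but it takes a genuinely different route from the paper's. The linearization $(2A-1)\,Y\,I\{A=d(X)\} = d(X)\,Y - (1-A)\,Y$ checks out on all four cases of $(A, d(X))$, the $d$-free remainder can indeed be dropped from the $\arg\max$, and the iterated-expectation step correctly collapses $E\bigl[(2Z-1)(2T-1)\,Y/\pi(T,Z,X) \mid X=x\bigr]$ to $\delta_Y(x)$ because $\pi(t,z,x)=Pr(T=t,Z=z\mid X=x)$ cancels the cell probabilities. The net effect is that the IPW objective in~\eqref{eq:ipw1} equals the Wald objective $E[\{\delta_Y(X)/\delta_A(X)\}\,d(X)]$ plus a policy-free constant, so all the causal content is delegated to Theorem~\ref{thm:wald}. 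The paper instead re-derives identification from scratch: it writes $Y\,I\{A=d(X)\}=\sum_a Y_T(a)I\{A=a\}I\{d(X)=a\}$, conditions on the unmeasured confounder $(X,U)$, expands over the eight $(a,t,z)$ cells, and invokes Assumption~\ref{asmp:no.com.modi} at the end to separate $E[d(X)\tau(X)]$ from a $d$-free term $E[\nu(X,U)]$. Your reduction buys considerable economy --- it is a purely observed-data algebraic identity requiring only positivity and the nonvanishing of $\delta_A$, with no need to re-engage the latent variable $U$ or Assumption~\ref{asmp:no.com.modi} directly --- and it makes transparent that the IPW and Wald objectives differ only by an additive constant. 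The paper's longer derivation is self-contained and runs structurally parallel to the proof of Theorem~\ref{thm:ipw2}, where an analogous reduction is less immediate; but as a proof of Theorem~\ref{thm:ipw1} specifically, your argument is complete and arguably cleaner.
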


Theorem~\ref{thm:ipw1} extends prior identification of CATE, and proposes a novel IPW estimator of the optimal policy without necessarily identifying the value function. Semiparametric efficiency results based on \eqref{eq:ipw1} are given in Section~\ref{sm.sec:semi.eff} and \ref{sm.sec:semi.eff.proof} of the Supplementary Material.

\begin{theorem}\label{thm:ipw2}
Under Assumptions~\ref{asmp:cons}-\ref{asmp:no.com.modi}, the optimal policy is nonparametrically identified by
\begin{equation}\label{eq:ipw2}
\arg\max_{d \in \mathcal{D}} \, E\left[\frac{(2T - 1) Y I\{Z = d(X)\}}{\pi(T, Z, X) \delta_{A}(X)}\right].
\end{equation}
\end{theorem}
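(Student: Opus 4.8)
The plan is to reduce the weighted objective in \eqref{eq:ipw2} to the Wald-type objective of Theorem~\ref{thm:wald} by evaluating the inner expectation in closed form; all the causal content will enter only through the already-established Wald identification, so the present argument is a purely observed-data manipulation. First I would apply the law of iterated expectations, conditioning on $(T, Z, X)$, so that the only stochastic ingredients left inside are the factor $(2T-1)/\pi(T,Z,X)$ and the indicator $I\{Z = d(X)\}$. Since $\pi(t,z,x) = Pr(T=t, Z=z \mid X=x)$ sits in the denominator, the positivity of Assumption~\ref{asmp:posi} keeps the weights well defined, and summing $\pi(t,z,x)\cdot(2t-1)/\pi(t,z,x)$ over $t \in \{0,1\}$ collapses the time dimension into a first difference of the conditional outcome mean $\mu_Y$.

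The key computation I would carry out is the conditional identity
\begin{equation*}
E\!\left[\frac{(2T-1) Y\, I\{Z = d(X)\}}{\pi(T,Z,X)} \,\Big|\, X\right] = \mu_Y(1, d(X), X) - \mu_Y(0, d(X), X),
\end{equation*}
in which the indicator $I\{Z = d(X)\}$ merely selects the $Z = d(X)$ arm before the first difference in $T$ is taken. Writing $d(X) \in \{0,1\}$ explicitly, the right-hand side equals $\{\mu_Y(1,0,X) - \mu_Y(0,0,X)\} + d(X)\,\delta_Y(X)$, because the coefficient of $d(X)$ is exactly $\{\mu_Y(1,1,X) - \mu_Y(0,1,X)\} - \{\mu_Y(1,0,X) - \mu_Y(0,0,X)\} = \delta_Y(X)$.

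Dividing through by $\delta_A(X)$, which is nonzero under the trend relevance of Assumption~\ref{asmp:trend.rele}, and taking the outer expectation, I would then obtain
\begin{equation*}
E\!\left[\frac{(2T-1) Y\, I\{Z = d(X)\}}{\pi(T,Z,X)\, \delta_A(X)}\right] = E\!\left[\frac{\mu_Y(1,0,X) - \mu_Y(0,0,X)}{\delta_A(X)}\right] + E\!\left[\frac{\delta_Y(X)}{\delta_A(X)}\, d(X)\right].
\end{equation*}
The first term on the right does not involve $d$, so maximizing the left-hand side over $d \in \mathcal{D}$ is equivalent, for both unrestricted and restricted classes $\mathcal{D}$, to maximizing $E[\{\delta_Y(X)/\delta_A(X)\}\, d(X)]$, which by Theorem~\ref{thm:wald} identifies the optimal policy. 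This closes the identification.

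The step I expect to be the crux is not the algebra but the structural observation that the $Z=0$ outcome-trend term $E[\{\mu_Y(1,0,X)-\mu_Y(0,0,X)\}/\delta_A(X)]$ decouples from $d$ as a $d$-free constant; this is precisely what allows an objective built on the instrument value $Z$ (rather than on the realized treatment $A$, as in Theorem~\ref{thm:ipw1}) to still recover the optimal \emph{treatment} policy. I would finally verify that the reduction invokes only the Wald identity $\tau(X) = \delta_Y(X)/\delta_A(X)$, so that Assumptions~\ref{asmp:cons}--\ref{asmp:no.com.modi} are used solely through Theorem~\ref{thm:wald}.
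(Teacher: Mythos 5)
Your proposal is correct, and it takes a genuinely different route from the paper. The paper proves Theorem~\ref{thm:ipw2} by expanding $Y=\sum_a Y_T(a)I\{A=a\}$, conditioning on $(X,U)$, replacing $I\{A=a\}$ by the latent propensity $Pr(A=a\mid X,U,T,Z)$, regrouping eight terms, and invoking Assumptions~\ref{asmp:ran.samp}, \ref{asmp:ind.excl} and finally \ref{asmp:no.com.modi} inside the argument --- essentially rerunning the causal derivation of Theorem~\ref{thm:ipw1} with $I\{Z=d(X)\}$ in place of $I\{A=d(X)\}$. You instead condition only on the observables $(T,Z,X)$: since $Z$ sits in the conditioning set, the tower property collapses the objective to
\begin{equation*}
E\left[\frac{\mu_Y(1,0,X)-\mu_Y(0,0,X)}{\delta_A(X)}\right]+E\left[\frac{\delta_Y(X)}{\delta_A(X)}\,d(X)\right],
\end{equation*}
a purely algebraic identity valid under positivity alone, and the causal content enters only through the Wald identity $\delta_Y(X)=\delta_A(X)\tau(X)$ of Theorem~\ref{thm:wald}. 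Your computation checks out (the coefficient of $d(X)$ is exactly $\delta_Y(X)$, and $\delta_A(X)\neq 0$ follows from trend relevance together with the identification $\delta_A(X)=E[A_1(1)-A_1(0)-A_0(1)+A_0(0)\mid X]$). What your route buys is brevity and a structural insight the paper does not make explicit: the IPW2 objective \emph{equals} the Wald objective plus a policy-free constant, so Theorem~\ref{thm:ipw2} is really a corollary of Theorem~\ref{thm:wald}. What the paper's route buys is uniformity: the same latent-variable decomposition handles Theorem~\ref{thm:ipw1}, where your trick fails because the indicator there involves $A$, which is not in the conditioning set $(T,Z,X)$, so $E[(2A-1)YI\{A=d(X)\}\mid T,Z,X]$ does not reduce to differences of $\mu_Y$.
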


Theorem~\ref{thm:ipw2} essentially proves that we can identify the optimal policy without necessarily using the subjects' realized treatment values, for instance when $\delta_{A}(X)$ is known a priori, or when  a separate sample with data on $(A, X, T, Z)$ is available to estimate $\delta_{A}(X)$. To conclude this section, we propose the following estimators for optimal policies:
\begin{align*}
\hat{d}_{\rm Wald} &= \arg\max_{d \in \mathcal{D}} \frac{1}{n}\sum_{i=1}^{n} \frac{\hat{\delta}_{Y}(X_i)}{\hat{\delta}_{A}(X_i)} d(X_i), \\
\hat{d}_{\rm IPW1} &= \arg\max_{d \in \mathcal{D}} \frac{1}{n}\sum_{i=1}^{n} \frac{(2Z_i - 1)(2T_i - 1)(2A_i - 1) Y_i I\{A_i = d(X_i)\}}{\hat{\pi}(T_i, Z_i, X_i) \hat{\delta}_{A}(X_i)}, \\
\hat{d}_{\rm IPW2} &= \arg\max_{d \in \mathcal{D}} \frac{1}{n}\sum_{i=1}^{n} \frac{(2T_i - 1) Y_i I\{Z_i = d(X_i)\}}{\hat{\pi}(T_i, Z_i, X_i) \hat{\delta}_{A}(X_i)},
\end{align*}
where $\hat{\delta}_{Y}$, $\hat{\delta}_{A}$ and $\hat{\pi}$ are estimated by parametric models or machine learning algorithms. Our simulation studies in Section~\ref{sec:simu} empirically shows comparable performance of the IPW estimators~\eqref{eq:ipw1} and \eqref{eq:ipw2}.

\begin{remark}
Similarly, classification-based estimators based on Theorem~\ref{thm:ipw1} and \ref{thm:ipw2} can be proposed:
\begin{equation}\label{eq:opt.policy.equi}
\arg\max_{d \in \mathcal{D}} E[\Tilde{W}_{1} I\{A = d(X)\}], \quad \arg\max_{d \in \mathcal{D}} E[\Tilde{W}_{2} I\{Z = d(X)\}], 
\end{equation}
respectively, where the weights are given by
\begin{equation*}
\Tilde{W}_{1} = \frac{(2Z - 1)(2T - 1)(2A - 1) Y}{\pi(T, Z, X) \delta_{A}(X)}, \quad \Tilde{W}_{2} = \frac{(2T - 1) Y}{\pi(T, Z, X) \delta_{A}(X)}.
\end{equation*}
The Fisher consistency, excess risk bound and universal consistency of the estimated policy can also be established \citep{zhao2012estimating}.
\end{remark}

\section{Semiparametric efficiency and multiply robust estimators}
\label{sec:semr}

In this section, we use semiparametric theory and propose multiply robust estimators. The Wald and the IPW approaches require the corresponding models to be correctly specified. Hence, methods that are robust against model misspecification are highly desired, where consistency is guaranteed when a subset of several posited models indexing the observed data distribution is correctly specified.

We consider the (uncentered) efficient influence function:
\begin{equation*}
\Delta (O) = \frac{\delta_{Y}(X)}{\delta_{A}(X)} + \frac{(2Z - 1)(2T - 1)}{\pi(T, Z, X) \delta_{A}(X)}\left\{Y - \mu_{Y}(T, Z, X) - \frac{\delta_{Y}(X)}{\delta_{A}(X)}(A - \mu_{A}(T, Z, X))\right\},
\end{equation*}
which has been proposed in \cite{ye2022instrumented}. Therefore, the optimal policy is identified by $\arg\max_{\mathcal{D}} E\left[\Delta (X) d(X)\right]$. Moreover, in light of the optimization tasks formulated in \eqref{eq:opt.policy.equi}, we propose the following two choices of statistic:
\begin{equation*}
W_{1} = \frac{(2A - 1)\delta_{Y}(X)}{\delta_{A}(X)} + \frac{(2A - 1)(2Z - 1)(2T - 1)}{\pi(T, Z, X) \delta_{A}(X)}\left\{Y - \mu_{Y}(T, Z, X) - \frac{\delta_{Y}(X)}{\delta_{A}(X)}(A - \mu_{A}(T, Z, X))\right\},
\end{equation*}
and
\begin{equation*}
W_2 = \frac{(2Z - 1)\delta_{Y}(X)}{\delta_{A}(X)} + \frac{2T - 1}{\pi(T, Z, X) \delta_{A}(X)}\left\{Y - \mu_{Y}(T, Z, X) - \frac{\delta_{Y}(X)}{\delta_{A}(X)}(A - \mu_{A}(T, Z, X))\right\},
\end{equation*}
which also enjoy the multiply robustness property.

First, we consider positing parametric models. Let $\mu_{A}(t,z,x;\alpha)$, $\mu_{Y}(t,z,x;\beta)$ and $\pi(t, z, x;\theta)$ denote the posited models. $\hat{\alpha}$, $\hat{\beta}$ and $\hat{\theta}$ can be estimated by maximum likelihood estimation. In Theorem~\ref{thm:mr}, we show the multiple robustness in the sense of maximizing the objective function (or minimizing the weighted classification error) in the union model of the following models:\\
$\mathcal{M}_1$: models for $\pi(t, z, x)$ and $\delta_{A}(x)$ are correct; \\
$\mathcal{M}_2$: models for $\pi(t, z, x)$ and $\delta_{Y}(x) / \delta_{A}(x)$ are correct; \\
$\mathcal{M}_3$: models for $\delta_{Y}(x) / \delta_{A}(x)$ and $\mu_{C}(0, 0, x)$, $\mu_{C}(1, 0, x)$, $\mu_{C}(0, 1, x)$ for $C \in \{A, Y\}$ are correct.

\begin{theorem}\label{thm:mr}
Under Assumptions~\ref{asmp:cons}-\ref{asmp:no.com.modi}, the optimal policy is identified by
\begin{equation}
\arg\max_{\mathcal{D}} \, E\left[W_{1} I\{A = d(X)\}\right] = \arg\max_{\mathcal{D}} \, E\left[W_{2} I\{Z = d(X)\}\right] = \arg\max_{\mathcal{D}} \, E\left[\Delta (X) d(X)\right],
\end{equation}
under the union model $\mathcal{M}_1 \cup \mathcal{M}_2 \cup \mathcal{M}_3$.
\end{theorem}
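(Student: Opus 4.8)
The plan is to reduce all three objectives to the single value form $E[\Delta(O)\,d(X)]$ and then to show that $E[\Delta(O)\mid X]$ equals the true Wald CATE under each of $\mathcal M_1,\mathcal M_2,\mathcal M_3$. Throughout, let a superscript $\star$ mark functionals evaluated under $P_0$ and let unmarked symbols $\pi,\mu_C,\delta_C$ denote the posited models; write $\rho(X):=\delta_Y(X)/\delta_A(X)$ for the posited ratio and recall from Theorem~\ref{thm:wald} that the true CATE is $\tau(X)=\delta_Y^\star(X)/\delta_A^\star(X)$. I would first record two algebraic identities: using $(2A-1)^2=(2Z-1)^2=1$ one checks directly that $W_1=(2A-1)\Delta(O)$ and $W_2=(2Z-1)\Delta(O)$ (the factor $2Z-1$ turns the $(2Z-1)(2T-1)$ weight of $\Delta$ into the $(2T-1)$ weight of $W_2$). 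Then, for $B\in\{A,Z\}$, substituting $I\{B=d(X)\}=\tfrac12\{1+(2B-1)(2d(X)-1)\}$ and using $(2B-1)^2=1$ yields
\[
E\big[W_1\,I\{A=d(X)\}\big]=E[\Delta(O)d(X)]+\tfrac12 E[(2A-1)\Delta(O)]-\tfrac12 E[\Delta(O)],
\]
and the analogous identity for $W_2$ with $A$ replaced by $Z$. Since the last two terms are independent of $d$, the three $\arg\max$ problems coincide, and it remains to identify $E[\Delta(O)d(X)]=E\big[E[\Delta(O)\mid X]\,d(X)\big]$.

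The core step is the multiple robustness of $E[\Delta(O)\mid X]$. I would establish the difference-in-differences inverse-probability identity
\[
E\!\left[\frac{(2Z-1)(2T-1)}{\pi^\star(T,Z,X)}\,h(T,Z,X)\,\Big|\,X\right]=\sum_{t,z}(2z-1)(2t-1)\,h(t,z,X),
\]
valid for any $h$, which is merely the DiD contrast; note $\sum_{t,z}(2z-1)(2t-1)b_C(t,z,X)=\delta_C^\star(X)-\delta_C(X)$ for the model bias $b_C(t,z,X):=\mu_C^\star(t,z,X)-\mu_C(t,z,X)$. Applying this with the posited $\pi$ in the denominator, and $E[Y-\mu_Y\mid T{=}t,Z{=}z,X]=b_Y$, $E[A-\mu_A\mid \cdots]=b_A$, gives the master formula
\[
E[\Delta(O)\mid X]=\rho(X)+\frac{1}{\delta_A(X)}\sum_{t,z}(2z-1)(2t-1)\,\frac{\pi^\star(t,z,X)}{\pi(t,z,X)}\big\{b_Y(t,z,X)-\rho(X)\,b_A(t,z,X)\big\}.
\]
The two cancellations driving every case are $\delta_Y-\rho\,\delta_A=0$ (definitional, since $\rho=\delta_Y/\delta_A$) and $\delta_Y^\star-\rho\,\delta_A^\star=0$, the latter holding exactly when $\rho=\tau$.

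I would then dispatch the three models from this formula. In $\mathcal M_1$ ($\pi$ and $\delta_A$ correct) the weights $\pi^\star/\pi$ equal $1$, so the sum collapses to $(\delta_Y^\star-\delta_Y)-\rho(\delta_A^\star-\delta_A)=\delta_Y^\star-\delta_Y$ (using $\delta_A=\delta_A^\star$); adding $\rho=\delta_Y/\delta_A^\star$ gives $\delta_Y^\star/\delta_A^\star=\tau$. In $\mathcal M_2$ ($\pi$ and $\rho$ correct) again $\pi^\star/\pi=1$ and the sum equals $\delta_A^{-1}\{(\delta_Y^\star-\rho\delta_A^\star)-(\delta_Y-\rho\delta_A)\}=0$ by the two cancellations, so $E[\Delta\mid X]=\rho=\tau$. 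In $\mathcal M_3$ ($\rho$ and the three cells $\mu_C(0,0),\mu_C(1,0),\mu_C(0,1)$ correct, $\pi$ arbitrary) every bias $b_C(t,z,\cdot)$ vanishes except at $(t,z)=(1,1)$, where $b_C(1,1,\cdot)=\delta_C^\star-\delta_C$; the surviving term is proportional to $(\delta_Y^\star-\rho\delta_A^\star)-(\delta_Y-\rho\delta_A)=0$, so once more $E[\Delta\mid X]=\rho=\tau$. In all three cases $E[\Delta(O)d(X)]=E[\tau(X)d(X)]$, whose maximizer over $\mathcal D$ is $d_{\rm opt}$ by~\eqref{eq:opt.policy}; combined with the first paragraph this proves the theorem.

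The main obstacle is the $\mathcal M_3$ case, because $\pi$ is misspecified and the stray weight $\pi^\star(1,1,X)/\pi(1,1,X)$ does not cancel on its own; robustness there rests entirely on the DiD structure annihilating the three correctly specified cells, together with the definitional identity $\delta_Y-\rho\delta_A=0$. A secondary point needing care is the bookkeeping: $\delta_Y/\delta_A$ must be treated as a single modeled object so that ``$\rho$ correct'' in $\mathcal M_2,\mathcal M_3$ is consistent with the derivation $\rho=\delta_Y/\delta_A$ used in $\mathcal M_1$, and one must confirm that ``$\mu_C$ at three cells correct'' forces $b_C$ to concentrate at $(1,1)$ regardless of the ratio model; once these are fixed, the three cases reduce to the same two-line cancellation.
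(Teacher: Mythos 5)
Your proof is correct. The reduction step is exactly the paper's: both use $I\{B=d(X)\}=\tfrac12\{1+(2B-1)(2d(X)-1)\}$ together with $W_1=(2A-1)\Delta(O)$ and $W_2=(2Z-1)\Delta(O)$ to peel off $E[\Delta(O)d(X)]$ plus $d$-free remainders, so the three $\arg\max$ problems coincide. Where you genuinely add something is the second half: the paper simply asserts $E[\Delta(O)d(X)]=E[\tau(X)d(X)]$ under $\mathcal{M}_1\cup\mathcal{M}_2\cup\mathcal{M}_3$ and defers the multiple-robustness verification to Theorem~1 of \cite{ye2022instrumented}, whereas you supply a self-contained proof via the master formula
\begin{equation*}
E[\Delta(O)\mid X]=\rho(X)+\frac{1}{\delta_A(X)}\sum_{t,z}(2z-1)(2t-1)\,\frac{\pi^\star(t,z,X)}{\pi(t,z,X)}\bigl\{b_Y(t,z,X)-\rho(X)\,b_A(t,z,X)\bigr\},
\end{equation*}
which I have checked: the iterated-expectation step, the identity $\sum_{t,z}(2z-1)(2t-1)b_C=\delta_C^\star-\delta_C$, and the three case analyses (including the $\mathcal{M}_3$ case, where the bias concentrates in the $(1,1)$ cell and the stray weight $\pi^\star(1,1,X)/\pi(1,1,X)$ multiplies $(\delta_Y^\star-\rho\delta_A^\star)-(\delta_Y-\rho\delta_A)=0$) are all sound. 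The bookkeeping caveat you raise about $\delta_Y/\delta_A$ being modeled as a single object in $\mathcal{M}_2$ and $\mathcal{M}_3$ is a real subtlety that the paper glosses over entirely; your reading is the one under which the stated union model is coherent. The only thing worth adding for completeness is one line noting that $E[\tau(X)d(X)]$ is maximized by $d_{\rm opt}$ per Equation~\eqref{eq:opt.policy}, which you do invoke at the end.
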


We also consider using modern machine learning methods to estimate these nuisance parameters. In practice, we apply the cross-fitting technique \citep{schick1986asymptotically,zheng2010asymptotic,chernozhukov2018double}, which is easy to implement. The cross-fitting procedure goes as follows. We randomly split data into $K$ folds; the cross-fitted estimator is given by
\begin{equation*}
\hat{M}_{CF} = \frac{1}{K} \sum_{k=1}^{K} P_{n,k} \{\Delta (O; \hat{\mu}_{A,-k}, \hat{\mu}_{Y,-k}, \hat{\pi}_{-k}) d(X)\},
\end{equation*}
where $P_{n,k}$ denote empirical averages only over the $k$-th fold, and $\hat{\mu}_{A,-k}$, $\hat{\mu}_{Y,-k}$ and $\hat{\pi}_{-k}$ denote the nuisance estimators constructed excluding the $k$-th fold. Similar cross-fitted estimators for $E\left[W_{1} I\{A = d(X)\}\right]$ and $E\left[W_{2} I\{Z = d(X)\}\right]$ can also be constructed in the same way.

\section{Asymptotic analysis of policy learning}
\label{sec:asym}

In this section, we study theoretical guarantees for our proposed policy learning approaches. While researchers have suggested applying machine learning algorithms to estimate the optimal policies from large classes which cannot be described by a finite dimensional parameter \citep{luedtke2016statistical,kunzel2019metalearners}, it is also important to consider certain classes of policies for better interpretability and transparency, especially in clinical medicine and policy research \citep{zhang2015using,athey2021policy}. Specifically, here we focus on a class of feasible policies $\mathcal{D} = \left\{I\{\eta^\top X > 0\} : \eta \in \mathbb{H}\right\}$, where $\eta$ indexes different policies and $\mathbb{H}$ is a compact subset of $\mathbb{R}^{p}$. That is, we analyze the following estimator:
\begin{equation*}
\hat{\eta} = \arg\max_{\eta \in \mathbb{H}} \hat{M}(\eta) = \arg\max_{\eta \in \mathbb{H}} \frac{1}{n} \sum_{i=1}^{n} \hat{\Delta} (O_i) d(X_i;\eta), 
\end{equation*}
where $\hat{M}(\eta)$ is estimated by posited parametric models, or the cross-fitted estimator. Let $\eta^\ast = \arg\max_{\eta \in \mathbb{H}} E[\Delta (X) d(X;\eta)]$ denote the Euclidean parameter that indexes the optimal policy. We detail the main large sample property of our proposed estimator, that $\hat{\eta}$ converges to $\eta^\ast$ at $n^{1/3}$ rate, and that $\hat{M}(\hat{\eta})$ is $n^{1/2}$-consistent and asymptotically normal under weak conditions (mostly requiring standard regularity conditions \citep{white1982maximum}, or only that the nuisance parameters are estimated at faster than $n^{1/4}$ rates).

\begin{remark}
In order to obtain certain rates of convergence or regret bounds, it is necessary to require some control over the complexity of the class $\mathcal{D}$; see \citet[Section~2.2]{athey2021policy} for examples of the VC-dimension of classes of linear rules, decision trees and monotone rules. Here we apply the empirical process techniques to establish theoretical guarantees for linear rules, which also hold on any other $\mathcal{D}$ indexed by finite-dimensional parameters. Also note that all identification and semiparametric efficiency results hold for any class of policies, and other optimization methods can be readily utilized.
\end{remark}

We assume the following regularity conditions.
\begin{condition}\label{cond.policy}
{\rm (i)} The supports of $X$ and $Y$ are bounded. {\rm (ii)} The functions $\mu_{Y}(t,z,x)$, $\mu_{A}(t,z,x)$ and $\pi(t, z, x)$ are smooth and bounded for all $(t,z,x)$. {\rm (iii)} The function $M(\eta)$ is twice continuously differentiable in a neighborhood of $\eta^\ast$; {\rm (iv)} For all $\delta > 0$, we have that $Pr(|X^{T} \eta^\ast| \leq \delta) \leq c_2 \delta$, for some constant $c_2 > 0$ such that $c_2 \delta \leq 1$.
\end{condition}

\begin{condition}\label{cond.para}
{\rm (i)} $\sqrt{n} (\hat{\alpha} - \alpha^\ast) = O_{p}(1)$; {\rm (ii)} $\sqrt{n} (\hat{\beta} - \beta^\ast) = O_{p}(1)$; {\rm (iii)} $\sqrt{n} (\hat{\theta} - \theta^\ast) = O_{p}(1)$.
\end{condition}

\begin{theorem}\label{thm:asym.para}
Under Assumptions~\ref{asmp:cons}-\ref{asmp:no.com.modi}, if Conditions~\ref{cond.policy} and \ref{cond.para} hold, we have {\rm (i)} $\|\hat{\eta} - \eta^\ast\|_{2} = O_{p}(n^{-1/3})$; {\rm (ii)} $\sqrt{n} \{M(\hat{\eta}) - M(\eta^\ast)\} = o_p(1)$; {\rm (iii)} $\sqrt{n} \{\hat{M}(\hat{\eta}) - M(\eta^\ast)\} \to \mathcal{N}(0, \sigma_1^2)$, where $\sigma_1^2$ is given in the Supplementary Material.
\end{theorem}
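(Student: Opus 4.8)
The plan is to treat $\hat{\eta}$ as an M-estimator of a non-smooth (indicator-based) criterion and to invoke the theory of cube-root asymptotics. Throughout write $m_{\eta}(O) = \Delta(O)\,d(X;\eta)$, $M(\eta) = P m_{\eta}$, $M_n(\eta) = \mathbb{P}_n m_{\eta}$ (the oracle criterion using the true $\Delta$), and let $\hat{M}$ denote the feasible criterion built from the estimated nuisances $\hat{\Delta}$. First I would establish consistency of $\hat{\eta}$. The class $\{d(\cdot;\eta): \eta \in \mathbb{H}\}$ of half-space indicators has finite VC dimension and is therefore Glivenko--Cantelli; combined with the boundedness of $X$, $Y$ and the nuisances (Condition~\ref{cond.policy}(i)--(ii)) and the root-$n$ nuisance consistency of Condition~\ref{cond.para}, this yields $\sup_{\eta \in \mathbb{H}} |\hat{M}(\eta) - M(\eta)| \to 0$ in probability. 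Since $M$ is continuous and, under the scale normalization that renders the linear rule identifiable, $\eta^\ast$ is its well-separated (interior) maximizer, the argmax $\hat{\eta}$ converges in probability to $\eta^\ast$.

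The crux of the argument, and the step I expect to be the main obstacle, is part (i): the non-standard $n^{1/3}$ rate. Two ingredients drive it. The first is curvature: Condition~\ref{cond.policy}(iii) together with the first-order condition $\nabla M(\eta^\ast) = 0$ (valid at the interior maximizer) gives the quadratic separation $M(\eta^\ast) - M(\eta) \gtrsim \|\eta - \eta^\ast\|_2^2$ near $\eta^\ast$. The second, and decisive, ingredient is the behavior of the increment $d(X;\eta) - d(X;\eta^\ast)$, which is nonzero only on the thin slab $\{|X^\top \eta^\ast| \leq C\|\eta - \eta^\ast\|_2\}$ (using boundedness of $\|X\|$). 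The margin Condition~\ref{cond.policy}(iv) bounds the probability of this slab by $c_2 C \|\eta - \eta^\ast\|_2$, so that, with $\Delta$ bounded (guaranteed by Condition~\ref{cond.policy}(i)--(ii) and $\delta_A$ bounded away from zero via trend relevance), one obtains the linear-in-distance variance bound $P(m_{\eta} - m_{\eta^\ast})^2 \lesssim \|\eta - \eta^\ast\|_2$. It is this linear (rather than quadratic) variance scaling that deviates from classical $\sqrt{n}$ theory. Feeding it into a maximal inequality for the VC class produces a modulus $\phi_n(\delta) \sim \sqrt{\delta}$, and the rate theorem (in the style of Kim--Pollard and van der Vaart--Wellner, Theorem~3.2.5) then solves $r_n^2 \phi_n(1/r_n) \lesssim \sqrt{n}$, i.e.\ $r_n^{3/2} \lesssim \sqrt{n}$, delivering $r_n = n^{1/3}$ and hence $\|\hat{\eta} - \eta^\ast\|_2 = O_p(n^{-1/3})$. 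The delicate point is that we work with $\hat{M}$, not $M_n$: I would control the nuisance-induced discrepancy $(\hat{M} - M_n)(\eta) - (\hat{M} - M_n)(\eta^\ast)$ uniformly over shrinking neighborhoods by combining the slab probability above with $\|\hat{\Delta} - \Delta\| = O_p(n^{-1/2})$ from Condition~\ref{cond.para}, showing it is of strictly smaller order than the terms governing the rate, so that the cube-root rate is preserved.

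Part (ii) then follows quickly from part (i) and the curvature. A second-order Taylor expansion of $M$ around $\eta^\ast$, using $\nabla M(\eta^\ast) = 0$ and the bounded Hessian supplied by Condition~\ref{cond.policy}(iii), gives $M(\hat{\eta}) - M(\eta^\ast) = O_p(\|\hat{\eta} - \eta^\ast\|_2^2) = O_p(n^{-2/3})$, whence $\sqrt{n}\{M(\hat{\eta}) - M(\eta^\ast)\} = O_p(n^{-1/6}) = o_p(1)$.

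For the asymptotic normality in part (iii), I would decompose $\sqrt{n}\{\hat{M}(\hat{\eta}) - M(\eta^\ast)\}$ into $\sqrt{n}\{\hat{M}(\eta^\ast) - M(\eta^\ast)\}$, the equicontinuity increment $\sqrt{n}\{(\hat{M}-M)(\hat{\eta}) - (\hat{M}-M)(\eta^\ast)\}$, and $\sqrt{n}\{M(\hat{\eta}) - M(\eta^\ast)\}$. The last term is $o_p(1)$ by part (ii). The increment term is dispatched by stochastic equicontinuity: the same VC-plus-variance bound yields $\sup_{\|\eta - \eta^\ast\|_2 \leq \delta}|\mathbb{G}_n(m_{\eta} - m_{\eta^\ast})| = O_p(\sqrt{\delta})$, which at $\delta = \|\hat{\eta} - \eta^\ast\|_2 = O_p(n^{-1/3})$ is $O_p(n^{-1/6}) = o_p(1)$, the accompanying nuisance part being negligible by the product bound used in the rate step. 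This leaves the leading term at the fixed point $\eta^\ast$: splitting off $\mathbb{G}_n m_{\eta^\ast}$, which converges to $\mathcal{N}(0, \operatorname{Var}(\Delta(O)\,d(X;\eta^\ast)))$ by the central limit theorem, the residual $\sqrt{n}\,P[(\hat{\Delta} - \Delta)\,d(\cdot;\eta^\ast)]$ is genuinely second order because $\Delta$ is the efficient (Neyman-orthogonal) influence function, so its first derivative in the nuisances vanishes and the term is $O_p(\|\hat{\alpha}-\alpha^\ast\|_2^2 + \|\hat{\beta}-\beta^\ast\|_2^2 + \|\hat{\theta}-\theta^\ast\|_2^2) = o_p(n^{-1/2})$ under Condition~\ref{cond.para} (equivalently under faster-than-$n^{1/4}$ rates with cross-fitting). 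Collecting the pieces gives $\sqrt{n}\{\hat{M}(\hat{\eta}) - M(\eta^\ast)\} \to \mathcal{N}(0, \sigma_1^2)$, with $\sigma_1^2$ the variance of the influence function $\Delta(O)\,d(X;\eta^\ast) - M(\eta^\ast)$ (augmented by any nuisance-score correction as detailed in the Supplementary Material).
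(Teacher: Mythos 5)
Your proposal follows the paper's own argument almost step for step: consistency via VC/Glivenko--Cantelli plus the argmax theorem, the cube-root rate from the quadratic separation of $M$ combined with the margin condition giving an envelope of $L_2$-size $\delta^{1/2}$ on the slab $\{|X^\top\eta^\ast|\le C\|\eta-\eta^\ast\|_2\}$ and the rate theorem with $\phi_n(\delta)=\delta^{1/2}$ (the paper uses Kosorok's Theorem 14.4, which is the same device as van der Vaart--Wellner Theorem 3.2.5), the second-order Taylor expansion for part (ii), and the three-term decomposition with stochastic equicontinuity for part (iii). The one substantive divergence is in how the nuisance contribution at $\eta^\ast$ is handled. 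You argue that $\sqrt{n}\,P[(\hat{\Delta}-\Delta)d(\cdot;\eta^\ast)]$ vanishes because $\Delta$ is Neyman-orthogonal, so that $\sigma_1^2=\operatorname{Var}(\Delta(O)d(X;\eta^\ast))$. The paper instead Taylor-expands $\hat{M}(\eta^\ast)$ in $(\alpha,\beta,\theta)$ around the probability limits, linearizes $\hat\alpha,\hat\beta,\hat\theta$ via their influence functions, and carries the resulting first-order terms $H_{\alpha^\ast}^{\rm T}\phi_\alpha+H_{\beta^\ast}^{\rm T}\phi_\beta+H_{\theta^\ast}^{\rm T}\phi_\theta$ into $\sigma_1^2$. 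Your orthogonality shortcut is valid only when all three parametric nuisance models are correctly specified, so that $(\alpha^\ast,\beta^\ast,\theta^\ast)$ index the true nuisance functions at which the pathwise derivative vanishes; in the multiply robust union model $\mathcal{M}_1\cup\mathcal{M}_2\cup\mathcal{M}_3$ that this theorem is meant to cover, the limits of misspecified components are not the truth, the derivative terms $H$ generally do not vanish, and they must appear in the asymptotic variance exactly as the paper's $\sigma_1^2$ records. Your closing parenthetical about a possible nuisance-score correction gestures at this, but the claim that the residual is \emph{genuinely second order} is not justified in general; as written this is the one step that would fail outside the fully correctly specified case. (A second, minor point: your variance-bound route $P(m_\eta-m_{\eta^\ast})^2\lesssim\|\eta-\eta^\ast\|_2$ and the paper's envelope-norm bound $\|F\|_{P,2}\lesssim\delta^{1/2}$ are two equivalent ways of feeding the margin condition into the maximal inequality, so no issue there.)
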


Condition~\ref{cond.policy} (i), (ii) and (iii) are standard regularity conditions to establish uniform convergence. Condition~\ref{cond.policy} (iv), also known as the margin condition, is often assumed in the literature of classification \citep{tsybakov2004optimal}, reinforcement
learning \citep{hu2022fast} and treatment assignment policies \citep{luedtke2020performance}, to guarantee fast convergence rates. Condition~\ref{cond.para} requires $\sqrt{n}$ convergence rates of parameter estimates of the posited models, which holds under mild conditions.

We assume the following conditions for the machine learning algorithms used to construct cross-fitted estimators.
\begin{condition}\label{cond.np}
$\|\hat{\mu}_{A}(t,z,X) - \mu_{A}(t,z,X)\|_{L_2} = o_{p}(n^{-1/4})$, $\|\hat{\mu}_{Y}(t,z,X) - \mu_{Y}(t,z,X)\|_{L_2} = o_{p}(n^{-1/4})$ and $\|\hat{\pi}(t, z, X) - \pi(t, z, X)\|_{L_2} = o_{p}(n^{-1/4})$, for $t,z = 0,1$.
\end{condition}

\begin{theorem}\label{thm:asym.np}
Under Assumptions~\ref{asmp:cons}-\ref{asmp:no.com.modi}, if Conditions~\ref{cond.policy} and \ref{cond.np} hold, we have {\rm (i)} $\|\hat{\eta} - \eta^\ast\|_{2} = O_{p}(n^{-1/3})$; {\rm (ii)} $\sqrt{n} \{M(\hat{\eta}) - M(\eta^\ast)\} = o_p(1)$; {\rm (iii)} $\sqrt{n} \{\hat{M}(\hat{\eta}) - M(\eta^\ast)\} \to \mathcal{N}(0, \sigma_2^2)$, where $\sigma_2^2$ is given in the Supplementary Material.
\end{theorem}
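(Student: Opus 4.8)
The plan is to establish the three conclusions in sequence, mirroring the structure of the parametric result in Theorem~\ref{thm:asym.para} but replacing the $\sqrt{n}$-consistency of Condition~\ref{cond.para} with the Neyman-orthogonality afforded by cross-fitting under the rate Condition~\ref{cond.np}. The central object is the cross-fitted criterion $\hat{M}(\eta) = \frac{1}{K}\sum_{k=1}^{K} P_{n,k}\{\hat{\Delta}_{-k}(O)\,d(X;\eta)\}$, and I would first record that by the multiple robustness established in Theorem~\ref{thm:mr}, the population map $M(\eta) = E[\Delta(X) d(X;\eta)]$ is correctly identified and maximized at $\eta^\ast$, so it suffices to control the discrepancy between $\hat{M}$ and $M$ uniformly over the compact set $\mathbb{H}$.

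For part~(i), I would decompose $\hat{M}(\eta) - M(\eta)$ into an empirical-process term $(P_{n}-P)\{\Delta(O)d(X;\eta)\}$ and a nuisance-estimation remainder. The first term is handled by empirical process theory: the policy class $\mathcal{D} = \{I\{\eta^\top X > 0\}\}$ is a VC class, and under the boundedness in Condition~\ref{cond.policy}(i)--(ii) the centered functions form a Donsker (indeed uniformly manageable) class, giving an $O_p(n^{-1/2})$ uniform bound. The key feature of cross-fitting is that the nuisance remainder, after using the orthogonality of $\Delta$ (its Gateaux derivative with respect to each nuisance vanishes at the truth), is a \emph{product} of estimation errors; by Cauchy--Schwarz and Condition~\ref{cond.np} each such product is $o_p(n^{-1/4}) \cdot o_p(n^{-1/4}) = o_p(n^{-1/2})$. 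Combining these with the curvature of $M$ near $\eta^\ast$ — quadratic by Condition~\ref{cond.policy}(iii) and sharpened by the margin Condition~\ref{cond.policy}(iv), which controls the modulus of continuity of the criterion in a shrinking neighborhood — I would invoke the rate theorem for M-estimators (e.g.\ the argmax/cube-root machinery of \citet{van1996weak}) to conclude $\|\hat{\eta}-\eta^\ast\|_2 = O_p(n^{-1/3})$. The nonstandard $n^{1/3}$ rate arises because $d(X;\eta)$ is a discontinuous (indicator) function of $\eta$, so the criterion is only Hölder-smooth in $\eta$ rather than differentiable.

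Part~(ii) follows from a smoothness-plus-rate argument: since $M$ is twice continuously differentiable at its maximizer, a second-order Taylor expansion gives $M(\hat{\eta}) - M(\eta^\ast) = O_p(\|\hat{\eta}-\eta^\ast\|_2^2) = O_p(n^{-2/3}) = o_p(n^{-1/2})$, because the first-order term vanishes at the interior maximum $\eta^\ast$. For part~(iii) I would write $\sqrt{n}\{\hat{M}(\hat{\eta}) - M(\eta^\ast)\} = \sqrt{n}\{\hat{M}(\hat{\eta}) - M(\hat{\eta})\} + \sqrt{n}\{M(\hat{\eta}) - M(\eta^\ast)\}$; the second summand is $o_p(1)$ by part~(ii), and for the first I would show that $\hat{\eta}$ may be replaced by $\eta^\ast$ at no asymptotic cost — an asymptotic-equicontinuity (stochastic-equicontinuity) step that uses the $n^{1/3}$ rate together with the Donsker property to bound $\sqrt{n}[(\hat{M}-M)(\hat{\eta}) - (\hat{M}-M)(\eta^\ast)]$ by $o_p(1)$. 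What remains is $\sqrt{n}\{\hat{M}(\eta^\ast) - M(\eta^\ast)\}$, which under Neyman-orthogonality and Condition~\ref{cond.np} reduces to $\sqrt{n}(P_n - P)\{\Delta(O)d(X;\eta^\ast)\}$ plus an $o_p(1)$ remainder; the central limit theorem then yields asymptotic normality with $\sigma_2^2 = \mathrm{Var}\{\Delta(O)d(X;\eta^\ast)\}$.

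The main obstacle is the stochastic-equicontinuity argument underpinning parts~(i) and~(iii): because $d(X;\eta)$ is an indicator, small perturbations of $\eta$ flip the policy on a set whose probability is controlled only through the margin Condition~\ref{cond.policy}(iv), and one must carefully verify that the localized empirical process $\sqrt{n}[(\hat{M}-M)(\eta) - (\hat{M}-M)(\eta^\ast)]$ is suitably tight over shrinking neighborhoods $\{\|\eta-\eta^\ast\|_2 \le \rho_n\}$. This is precisely where the $n^{1/3}$ rate is pinned down, and where the interaction between the cross-fitted nuisance error and the non-smooth policy indexing must be handled jointly rather than separately; the remaining calculations (verifying orthogonality of $\Delta$, bounding the product remainder, and identifying $\sigma_2^2$) are essentially the same as in the parametric case and are routine by comparison.
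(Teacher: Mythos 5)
Your proposal is correct and follows essentially the same route as the paper's proof: cross-fitting plus the orthogonality of $\Delta$ and Condition~\ref{cond.np} to make the nuisance contribution $o_p(n^{-1/2})$, the VC/margin-condition bound on the localized empirical process feeding into the cube-root rate theorem for part (i), a second-order Taylor expansion for part (ii), and stochastic equicontinuity plus the CLT for part (iii) with $\sigma_2^2 = \mathrm{Var}\{\Delta(O)d(X;\eta^\ast)\}$. The only cosmetic difference is your decomposition of $\sqrt{n}\{\hat{M}(\hat{\eta}) - M(\eta^\ast)\}$, which centers at $M(\hat{\eta})$ rather than at $\hat{M}(\eta^\ast)$ as the paper does, but both splits rely on the identical equicontinuity bound.
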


Condition~\ref{cond.np} says the nuisance estimators must be consistent and converge at a fast enough rate (essentially $n^{1/4}$ in $L_2$ norm). This is quite general and can be achieved by many existing algorithms under nonparametric smoothness, sparsity, or other structural constraints.
According to Theorems~\ref{thm:asym.para} and \ref{thm:asym.np} (ii), the regret of our estimated regime vanishes as the sample size increases. Theorems~\ref{thm:asym.para} and \ref{thm:asym.np} (iii) imply that $\hat M(\hat \eta)$ is a regular and asymptotic normal estimator of $M(\eta^\ast)$.

\section{Simulations}
\label{sec:simu}

In this section, we conduct extensive simulations to evaluate the finite-sample performance of the proposed estimators. Specifically, we compare them to the instrumental variable approach proposed by \cite{cui2021semiparametric}, which is in principle valid only for a single time point. Replication code is available at \href{https://github.com/panzhaooo/policy-learning-instrumented-DiD}{GitHub}.

We first describe the complete data generation process as follows. Baseline covariates $X = (X_{1}, X_{2})^\top$ are generated from independent standard normal distributions. The time period indicator $T$ is generated from a Bernoulli distribution with probability $0.5$. The unmeasured confounders $U = (U_{0}, U_{1})^\top$ are generated from independent bridge distributions with parameter $0.5$ \footnote{The bridge density function is $p(u) = 1 / (2\pi\cosh{(u/2)})$. We use the bridge distribution because by \cite{wang2003matching}, the data generation process ensures that upon marginalizing over $U$, the model for $Pr(A_{t} = 1 \mid Z, X)$ remains a logistic regression.}. The instrumental variable $Z$ is generated from a Bernoulli distribution with probability $0.5$. The potential treatments and outcomes at time points $t = 0, 1$ are generated from the models:
\begin{align*}
Pr(A_{0} = 1 \mid Z, U, X) & = \text{expit}(2 - 7 Z + 0.2 U_{0} + 2 X_{1}), \\
Pr(A_{1} = 1 \mid Z, U, X) & = \text{expit}(-1.5 + 5 Z - 0.15 U_{1} + 1.5 X_{2}), \\
(Y_{0} \mid Z, U, X, A_{0}) & \sim \mathcal{N}(\mu_{0}, 1), \quad (Y_{1} \mid Z, U, X, A_{1}) \sim \mathcal{N}(\mu_{1}, 1),
\end{align*}
where $\mu_{0} = 200 + 10 (A_{0} (1.5 X_{1} + 2 X_{2} - 0.5) + 0.5 U_{0} + 2 Z + 1.5 X_{1} + 2 X_{2})$, and $\mu_{1} = 240 + 10 (A_{1} (1.5 X_{1} + 2 X_{2} - 0.5) + 0.5 U_{1} + 2 Z + 2 X_{1} + 1.5 X_{2})$. Therefore, the optimal policy is $d_{\rm{opt}} (x) = I\{3 x_{1} + 4 x_{2} - 1 > 0\}$. Let $A = T A_{1} + (1 - T) A_{0}, Y = T Y_{1} + (1 - T) Y_{0}$; thus the observed cross-sectional data are $(X, A, Y, T, Z)$.

A large test dataset of size $N = 1 \times 10^{6}$ is generated independently to evaluate the performance of different estimators. The percentage of correct decisions (PCD) of an estimated policy $\hat{d}(x)$ is computed by $1 - N^{-1} \sum_{i=1}^{N} |\hat{d}(X_i) - d_{\rm{opt}}(X_i)|$.

We compare $7$ estimators in our study: the two IPW estimators, the Wald estimator, and the two multiply robust estimators, along with the below IV estimators proposed by \cite{cui2021semiparametric}:
\begin{align*}
d_{\rm IV.t0} &= \arg\max_{d \in \mathcal{D}} \frac{1}{n_{t0}}\sum_{i=1}^{n} \frac{Z_{i} A_{i} Y_{i} I\{A_{i} = d(X_{i})\} I\{T_{i} = 0\}}{\hat{\delta}_{t0}(X_{i}) \hat{\pi}_{t0}(Z_{i},X_{i})}, \\
d_{\rm IV.t1} &= \arg\max_{d \in \mathcal{D}} \frac{1}{n_{t1}}\sum_{i=1}^{n} \frac{Z_{i} A_{i} Y_{i} I\{A_{i} = d(X_{i})\} I\{T_{i} = 1\}}{\hat{\delta}_{t1}(X_{i}) \hat{\pi}_{t1}(Z_{i},X_{i})},
\end{align*}
where $n_{t0}$ and $n_{t1}$ are the sample sizes at time point $0, 1$, respectively; $\delta_{t0}(x) = \mu_{A}(0,1,x) - \mu_{A}(0,0,x)$, $\delta_{t1}(x) = \mu_{A}(1,1,x) - \mu_{A}(1,0,x)$, $\pi_{t0}(z,x) = Pr(Z = z \mid X = x, T = 0)$, $\pi_{t1}(z,x) = Pr(Z = z \mid X = x, T = 1)$ are the nuisance parameters, and $\hat{\delta}_{t0}, \hat{\delta}_{t1}, \hat{\pi}_{t0}, \hat{\pi}_{t1}$ can be estimated using parametric models or machine learning algorithms. We utilize the genetic algorithm implemented in
the R package \texttt{rgenoud} \citep{mebane2011genetic} to solve the optimization tasks. 

First, we posit parametric models for the nuisance parameters. The linear/logistic regression models for $\mu_{A}(t,z,x;\alpha)$, $\mu_{Y}(t,z,x;\beta)$ and $\pi(t,z,x;\theta)$ are correctly specified. The sample size is $n = 5000$.

We also consider flexible machine learning algorithms for nuisance parameter estimation. Specifically, we apply the generalized random forests \citep{athey2019generalized} implemented in the R package \texttt{grf} with default tuning parameters. For the cross-fitting procedure, we use $K = 4$ folds. The sample size is $n = 10^4$.

\begin{figure}[ht]
    \centering
    \includegraphics[scale=0.8]{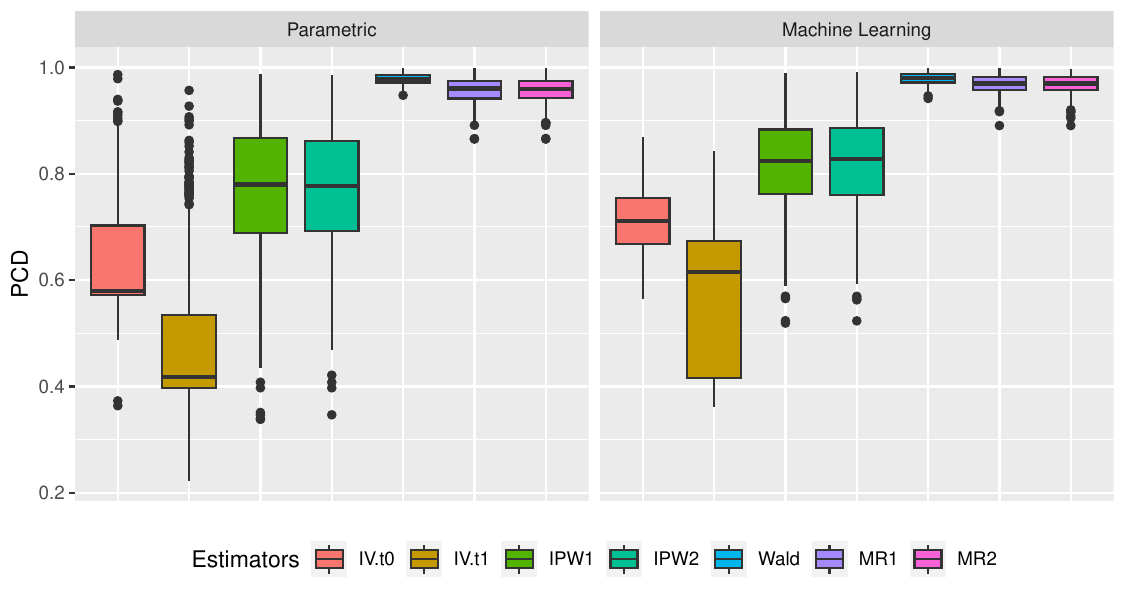}
    \caption{The percentage of correct decisions (PCD) results of the estimated optimal policies, using parametric models (left) or machine learning (right).}
    \label{fig:main}
\end{figure}

Figure~\ref{fig:main} reports the main simulation results from $500$ Monte Carlo replications. In both scenarios, the two standard IV estimators fail to learn the optimal policy, due to the direct effects of the treatment $A$ on the outcomes $Y_0, Y_1$. The two IPW estimators perform much better, but the variability can be large due to possibly extreme weights. The Wald and multiply robust estimators generally lead to lower variability, and attain superior performance. Additional simulation results are reported in Section~\ref{sm.sec:add.simu} of the Supplementary Material to illustrate how different sample sizes and the strength of the IV affect the performance of the estimated policies. We observe that a stronger strength of IV generally leads to lower variability and better accuracy, and also as sample size increases, our proposed methods have better performance.

\section{Data application}
\label{sec:da}

In this section, we illustrate the use of the instrumented DiD approach for policy learning with a analysis of the Australian Longitudinal Survey (ALS) data. Researchers in labor economics have a longstanding interest in investigating the causal effect of education on earnings in the labor market. \cite{card2001estimating} suggests that the endogeneity of education might partially explain the continuing interest ``in this very difficult task of uncovering the causal effect of education in labor market outcomes", and argues that the effects of education are heterogeneous since the economic benefits are individual-specific. Besides the well acknowledged benefits of personal growth and social good from education, we aim to provide a personalized recommendation on whether an individual should pursuit more education or not, in order to gain higher earnings.

The Australian Longitudinal Survey was conducted annually since 1984. Specifically, we include the 1984 and 1985 waves as cross sectional data in our analysis. The 1984 wave surveyed a sample of $3000$ people aged $15 - 24$, and the 1985 wave consisted of $9000$ interviews with people aged $16 - 25$. The surveys aim mainly at providing data on the dynamics of the youth labour market, and include basic demographic variables, labour market variables, background variables and topics related to the main labour market theme. We follow the guidelines from \cite{su2013local,cai2006functional} and \cite{vella1994gender}, who was among the first researchers extensively working with the ALS data. Finally, our data include $2401$ subjects from the 1984 wave, and $8997$ subjects from the 1985 wave. We consider the following baseline covariates: whether a person is born in Australia, marital status, union membership, government employment, age and work experience. The treatment is the education level, and the outcome is the hourly wage. We use an index of labor market attitudes as the instrumental variable \citep{su2013local}. The details of our analysis are provided in Section~\ref{sm.sec:als.data} of the Supplementary Material.

The nuisance parameters are estimated by posited linear/logistic regression models, and we apply our proposed methods with the same configurations as Section~\ref{sec:simu}. The policy coefficient estimates of all covariates are reported in Table~\ref{tab:als.policy.coef}.

\begin{table}[ht]
    \centering
    \begin{tabular}{cccccccc}
        \hline\hline
        Policies & \texttt{intercept} & \texttt{born\_australia} & \texttt{married} & \texttt{uni\_mem} & \texttt{gov\_emp} & \texttt{age} & \texttt{year\_expe} \\
        \hline
        IV.t0 & $0.4442$ & $-0.4547$ & $0.1311$ & $-0.1179$ & $-0.5181$ & $0.0080$ & $-0.5444$ \\
        IV.t1 & $-0.2518$ & $-0.3103$ & $0.2445$ & $-0.6157$ & $-0.1406$ & $0.2015$ & $-0.5840$ \\
        IPW1  & $-0.4203$ & $-0.0847$ & $0.5454$ & $-0.3941$ & $-0.5690$ & $0.0299$ & $0.1969$ \\
        IPW2  & $-0.2503$ & $-0.0529$ & $0.6051$ & $-0.4384$ & $-0.5801$ & $0.0207$ & $0.1980$ \\
        Wald  & $0.5032$ & $0.3891$ & $0.4738$ & $0.5755$ & $-0.1656$ & $-0.0772$ & $0.0793$ \\
        MR1   & $-0.0513$ & $0.1341$ & $-0.6039$ & $0.4127$ & $0.5861$ & $-0.0226$ & $-0.3168$ \\
        MR2   & $0.5480$ & $-0.3937$ & $-0.4072$ & $0.4393$ & $0.4167$ & $-0.0302$ & $-0.1064$ \\
        \hline\hline
    \end{tabular}
    \caption{Coefficients of estimated optimal policy (normalized with $L_2$ norm $1$). \texttt{born\_australia}: whether a person is born in Australia; \texttt{married}: marital status; \texttt{uni\_mem}: union membership; \texttt{gov\_emp}: government employment; \texttt{age}: age; \texttt{year\_expe}: work experience.}
    \label{tab:als.policy.coef}
\end{table}

The coefficients should be interpreted cautiously. We also find that there exists some discrepancies among the treatment recommendations by our proposed estimators. The Wald and multiply robust estimators usually agree, but the variability of the IPW estimators are a bit large. Due to the potentially different recommendations by different estimated policies, one may conservatively suggest a recommendation by the majority rule, and accordingly obtain an ensemble policy. It is also interesting to construct a decision tree to further explore which covariates indicate which treatment level \citep{qi2023proximal}.

\section{Extension to panel data}
\label{sec:panel}

In this section, we consider extending the instrumented DiD approach to the panel data setup where a random sample from the population is followed up over two time points \citep{abadie2005semiparametric}. The observed data are $O = (X, Z, A_{0}, Y_{0}, A_{1}, Y_{1})$. Let $\delta_{Y,z}(x) = E[Y_1 - Y_0 \mid X = x, Z = z]$, $\delta_{A,z}(x) = E[A_1 - A_0 \mid X = x, Z = z]$, and $\pi_{Z}(x) = Pr(Z = 1 \mid X = x)$. We make the following identification assumptions.

\begin{assumption}\label{asmp:panel}
Suppose the following assumptions hold: {\rm (consistency)} $A_{t} = A_{t}(Z)$ and $Y_{t} = Y_{t}(A_{t})$ for $t = 0, 1$; {\rm (positivity)} $c_3 < \pi_{Z}(x) < 1 - c_3$ for some $0 < c_3 < 1/2$; {\rm (trend relevance)} $E[A_{1}(1) - A_{0}(1) \mid Z = 1, X] \neq E[A_{1}(0) - A_{0}(0) \mid Z = 0, X]$; {\rm (stable treatment effect over time)} $E[Y_{0}(1) - Y_{0}(0) \mid X] = E[Y_{1}(1) - Y_{1}(0) \mid X]$; {\rm (independence \& exclusion restriction)} $Z \perp \{A_{t}(1), A_{t}(0), Y_{t}(1) - Y_{t}(0), Y_{1}(0) - Y_{0}(0) : t = 0, 1\} \mid X$; {\rm (no unmeasured common effect modifier)} $Cov\{A_{t}(1) - A_{t}(0), Y_{t}(1) - Y_{t}(0) \mid X\} = 0$ for $t = 0, 1$.
\end{assumption}

Assumption~\ref{asmp:panel} is the counterpart of Assumptions~\ref{asmp:cons}-\ref{asmp:no.com.modi} for the panel/longitudinal structure. \cite{vo2022structural} use a structural mean model and consider alternative assumptions to the no unmeasured common effect modifier assumption above. In Section~\ref{sm.sec:panel.idid} of the Supplementary Material, we also prove the identification results under the following assumptions that replaces the no unmeasured common effect modifier assumption: (sequential ignorability) $Y_{t}(a) \perp A_{t} \mid U, X, Z$ for $t, a = 0, 1$, and there is no additive interaction of either {\rm (i)} $E[A_{1} - A_{0} \mid X, U, Z = 1] - E[A_{1} - A_{0} \mid X, U, Z = 0] = E[A_{1} - A_{0} \mid X, Z = 1] - E[A_{1} - A_{0} \mid X, Z = 0]$ or {\rm (ii)} $E[Y_{t}(1) - Y_{t}(0) \mid U, X] = E[Y_{t}(1) - Y_{t}(0) \mid X]$ for $t = 0, 1$. The sequential ignorability is intuitive, and commonly assumed in panel/longitudinal data analysis. We note that the no additive interaction assumption implies the no unmeasured common effect modifier assumption.

\begin{theorem}\label{thm:panel.cate}
Under Assumption~\ref{asmp:panel}, the CATE is nonparametrically identified by 
\begin{equation}
\tau (x) = \frac{E[Y_{1} - Y_{0} \mid X = x, Z = 1] - E[Y_{1} - Y_{0} \mid X = x, Z = 0]}{E[A_{1} - A_{0} \mid X = x, Z = 1] - E[A_{1} - A_{0} \mid X = x, Z = 0]},
\end{equation}
and the efficient influence function is
\begin{align*}
\phi_{\rm{panel}} = & \frac{\delta_{Y,1}(x) - \delta_{Y,0}(x)}{\delta_{A,1}(x) - \delta_{A,0}(x)} - \frac{z - \pi_{Z}(x)}{\pi_{Z}(x) (1 - \pi_{Z}(x)) (\delta_{A,1}(x) - \delta_{A,0}(x))^2} \left\{(y_{1} - y_{0})(\delta_{A,1}(x) - \delta_{A,0}(x)) \right. \\
& \left. - (a_{1} - a_{0})(\delta_{Y,1}(x) - \delta_{Y,0}(x)) + \delta_{Y,1}(x) \delta_{A,0}(x) - \delta_{Y,0}(x) \delta_{A,1}(x)\right\} - \tau (x).
\end{align*}
\end{theorem}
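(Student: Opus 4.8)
The plan is to establish the two assertions in turn. The Wald-type identification of $\tau(x)$ follows the same logic as Theorem~\ref{thm:wald}, but applied to within-group first differences rather than the double differences $\delta_A,\delta_Y$; the efficient influence function I would obtain by working in the (locally nonparametric) conditional model given $X=x$ and differentiating the ratio $\tau(x)=N(x)/D(x)$, where I abbreviate $N(x)=\delta_{Y,1}(x)-\delta_{Y,0}(x)$ and $D(x)=\delta_{A,1}(x)-\delta_{A,0}(x)$.

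For identification, fix $Z=z$; consistency gives $Y_t=Y_t(0)+A_t(z)\{Y_t(1)-Y_t(0)\}$, and the independence and exclusion restriction part of Assumption~\ref{asmp:panel} removes the conditioning on $Z$ from each resulting potential-outcome term. Differencing over $z$ then cancels the common baseline trend $E[Y_1(0)-Y_0(0)\mid X]$ (this is exactly where the exclusion restriction on $Y_1(0)-Y_0(0)$ enters), leaving
\begin{align*}
N(x)={}&E[\{A_1(1)-A_1(0)\}\{Y_1(1)-Y_1(0)\}\mid X]\\
&-E[\{A_0(1)-A_0(0)\}\{Y_0(1)-Y_0(0)\}\mid X].
\end{align*}
The no unmeasured common effect modifier condition factorizes each product expectation into a product of marginals, the stable treatment effect over time condition replaces $E[Y_t(1)-Y_t(0)\mid X]$ by $\tau(x)$ for $t=0,1$, and the same consistency-plus-independence steps show $D(x)=E[A_1(1)-A_0(1)\mid X]-E[A_1(0)-A_0(0)\mid X]$. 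Collecting terms gives $N(x)=\tau(x)D(x)$, and trend relevance ($D(x)\neq 0$) yields $\tau(x)=N(x)/D(x)$.

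For the efficient influence function, I would first note that, since the observed-data law of $(Z,A_0,Y_0,A_1,Y_1)$ given $X=x$ is unrestricted, the influence function is unique and equals the canonical gradient, so it suffices to assemble it from the gradients of the pieces. Each piece $\delta_{C,z}(x)=E[C_1-C_0\mid X=x,Z=z]$, $C\in\{A,Y\}$, has influence function $\frac{I\{Z=z\}}{Pr(Z=z\mid X=x)}\{(C_1-C_0)-\delta_{C,z}(x)\}$; by linearity these produce gradients $\psi_N$ and $\psi_D$ for $N(x)$ and $D(x)$, and the ratio rule $\psi_\tau=\psi_N/D(x)-\{N(x)/D(x)^2\}\psi_D$, minus the centering $\tau(x)$, gives $\phi_{\rm panel}$.

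The only genuinely non-routine step—and the one I expect to be the main obstacle—is the algebraic collapse of $D(x)\psi_N-N(x)\psi_D$. Splitting it into a $Z=1$ piece carrying weight $Z/\pi_Z(x)$ and a $Z=0$ piece carrying weight $(1-Z)/\{1-\pi_Z(x)\}$, I would verify that the two brackets coincide: each reduces to $(Y_1-Y_0)D(x)-(A_1-A_0)N(x)+\delta_{Y,1}(x)\delta_{A,0}(x)-\delta_{Y,0}(x)\delta_{A,1}(x)$, because substituting $N=\delta_{Y,1}-\delta_{Y,0}$ and $D=\delta_{A,1}-\delta_{A,0}$ collapses both ``cross'' remainders $-D(x)\delta_{Y,1}(x)+N(x)\delta_{A,1}(x)$ and $-D(x)\delta_{Y,0}(x)+N(x)\delta_{A,0}(x)$ to the single term $\delta_{Y,1}(x)\delta_{A,0}(x)-\delta_{Y,0}(x)\delta_{A,1}(x)$. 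Once both pieces share this common bracket, the weights combine through $Z/\pi_Z-(1-Z)/(1-\pi_Z)=(Z-\pi_Z)/\{\pi_Z(1-\pi_Z)\}$, and dividing by $D(x)^2$ recovers the correction term displayed in $\phi_{\rm panel}$; $E[Z-\pi_Z(x)\mid X=x]=0$ then confirms $E[\phi_{\rm panel}\mid X=x]=0$, so the expression is a valid, hence the efficient, influence function.
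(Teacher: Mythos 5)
Your proposal is correct and follows essentially the same route as the paper: the identification argument (consistency, exclusion cancelling $E[Y_1(0)-Y_0(0)\mid X]$ across $z$, independence, the covariance condition to factorize, stable effects to pull out $\tau(x)$) mirrors the supplementary proof step for step, and your EIF derivation — gradients of the conditional means $\delta_{C,z}(x)$ plus the quotient rule — is exactly what the paper's parametric-submodel computation produces, with the bonus that you actually spell out the algebraic collapse of $D\psi_N-N\psi_D$ into a common bracket, which the paper skips. One caveat: your (correct) algebra yields the correction term with weight $+\,(z-\pi_Z(x))/\{\pi_Z(x)(1-\pi_Z(x))D(x)^2\}$, agreeing with the paper's own supplementary derivation but not with the minus sign displayed in the theorem statement, so your claim that the computation "recovers the correction term displayed" holds only up to what appears to be a sign typo in the stated $\phi_{\rm panel}$.
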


\begin{theorem}\label{thm:panel.policy}
Under Assumption~\ref{asmp:panel}, the optimal policy is nonparametrically identified by
\begin{equation}
\arg\max_{\mathcal{D}} E\left[\frac{\delta_{Y,1}(X) - \delta_{Y,0}(X)}{\delta_{A,1}(X) - \delta_{A,0}(X)} d(X)\right] = \arg\max_{\mathcal{D}} E\left[\Delta_{\rm{panel}} (X) d(X)\right],
\end{equation}
where the uncentered efficient influence function $\Delta_{\rm{panel}}$ is
\begin{align*}
\Delta_{\rm{panel}} = & \frac{\delta_{Y,1}(X) - \delta_{Y,0}(X)}{\delta_{A,1}(X) - \delta_{A,0}(X)} - \frac{Z - \pi_{Z}(X)}{\pi_{Z}(X) (1 - \pi_{Z}(X)) (\delta_{A,1}(X) - \delta_{A,0}(X))^2} \left\{(Y_{1} - Y_{0})(\delta_{A,1}(X) - \delta_{A,0}(X)) \right. \\
& \left. - (A_{1} - A_{0})(\delta_{Y,1}(X) - \delta_{Y,0}(X)) + \delta_{Y,1}(X) \delta_{A,0}(X) - \delta_{Y,0}(X) \delta_{A,1}(X)\right\}.
\end{align*}
\end{theorem}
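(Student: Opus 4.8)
The plan is to prove the two displayed equalities separately; both will reduce to the CATE identification already established in Theorem~\ref{thm:panel.cate}.

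\textbf{First equality.} By Theorem~\ref{thm:panel.cate}, under Assumption~\ref{asmp:panel} the CATE is identified as $\tau(x) = (\delta_{Y,1}(x) - \delta_{Y,0}(x))/(\delta_{A,1}(x) - \delta_{A,0}(x))$. I would then combine this with the general policy-value characterization in Equation~\eqref{eq:opt.policy}, namely $d_{\mathrm{opt}} = \arg\max_{d\in\mathcal{D}} E[\tau(X) d(X)]$. Substituting the identified $\tau$ yields the first expression immediately, so no further argument is needed here.

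\textbf{Second equality.} It suffices to show $E[\Delta_{\mathrm{panel}}(O) \mid X] = \tau(X)$: once this holds, $E[\Delta_{\mathrm{panel}}(O) d(X)] = E[\tau(X) d(X)]$ for every $d \in \mathcal{D}$, so the two optimization problems share the same objective and hence the same maximizer. I would decompose $\Delta_{\mathrm{panel}}$ into its leading Wald term, which is exactly $\tau(X)$, plus the augmentation term carrying the factor $Z - \pi_Z(X)$; the claim then reduces to showing that this augmentation term has conditional mean zero given $X$.

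To carry out that step, I would first compute the two conditional moments $E[(Z - \pi_Z(X))(Y_1 - Y_0)\mid X] = \pi_Z(X)(1-\pi_Z(X))(\delta_{Y,1}(X) - \delta_{Y,0}(X))$ and $E[(Z - \pi_Z(X))(A_1 - A_0)\mid X] = \pi_Z(X)(1-\pi_Z(X))(\delta_{A,1}(X) - \delta_{A,0}(X))$, each obtained by conditioning further on $Z$ and applying the definitions of $\delta_{Y,z}$ and $\delta_{A,z}$. Substituting these into the bracket, the two resulting cross terms proportional to $(\delta_{Y,1} - \delta_{Y,0})(\delta_{A,1} - \delta_{A,0})$ cancel, while the $Z$-free piece $\delta_{Y,1}\delta_{A,0} - \delta_{Y,0}\delta_{A,1}$ is killed by $E[Z - \pi_Z(X)\mid X] = 0$; hence the augmentation term vanishes in conditional expectation. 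A shortcut would be to note that Theorem~\ref{thm:panel.cate} already exhibits $\phi_{\mathrm{panel}} = \Delta_{\mathrm{panel}} - \tau(X)$ as the efficient influence function, so its defining property $E[\phi_{\mathrm{panel}}\mid X] = 0$ delivers the same conclusion directly.

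The main obstacle is this moment computation for the augmentation term, though it is routine once the bracket is expanded: the only genuine content is the two conditional moments and the ensuing cancellation, which is precisely the panel-data analogue of the augmentation calculation underlying the uncentered influence function $\Delta$ in the repeated cross-sectional case of Section~\ref{sec:semr}. No limiting or measurability subtleties arise, since the policy class $\mathcal{D}$ enters only through the fixed weight $d(X)$ and all the randomness is resolved by conditioning on $X$.
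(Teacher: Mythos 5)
Your proof is correct, and for the second equality it takes a genuinely more direct route than the paper. The paper proves this theorem in one line by pointing back to Equation~\eqref{eq:opt.policy} and to the proof of Theorem~\ref{thm:panel.cate}, where $\phi_{\rm panel}$ is obtained by pathwise differentiation of $\Psi(P_\epsilon)$ along parametric submodels; the identification of the optimal policy via $\Delta_{\rm panel}$ is then left implicit in that derivation. You instead verify directly that $E[\Delta_{\rm panel}\mid X]=\tau(X)$ by computing $E[(Z-\pi_Z(X))(Y_1-Y_0)\mid X]$ and $E[(Z-\pi_Z(X))(A_1-A_0)\mid X]$ and observing the cancellation in the bracket; your two conditional moments check out, and this conditional statement is exactly what the equality of the two $\arg\max$ problems requires --- the unconditional mean-zero property of an influence function would not by itself give $E[\Delta_{\rm panel}\,d(X)]=E[\tau(X)\,d(X)]$ for every $d$, so your elementary computation actually supplies a step the paper's one-line proof glosses over. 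The paper's approach buys the additional semiparametric-efficiency interpretation of $\phi_{\rm panel}$, which your verification does not address but which is not needed for the identification claim as stated. One small caveat: your proposed shortcut invoking ``the defining property $E[\phi_{\rm panel}\mid X]=0$'' is circular as phrased, since an influence function is only required to have \emph{unconditional} mean zero, and the conditional version is precisely the fact you are trying to establish; fortunately your primary argument does not rely on it.
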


Estimators of optimal policies can be constructed by the empirical versions of equations in Theorem~\ref{thm:panel.policy}, and the cross-fitting procedure can also be applied when using the efficient influence function. Similarly, asymptotic analysis of policy learning as Theorems~\ref{thm:asym.para} and \ref{thm:asym.np} can be established for panel data.

\section{Discussion}
\label{sec:disc}

Similar approaches as the instrumented difference-in-differences design has long been employed by econometricians \citep{duflo2001schooling} and has also been formally considered as fuzzy differences-in-differences by \citet{de2018fuzzy}, where the individuals can switch treatment in only one direction within each treatment group. We refer interested readers to \cite{ye2022instrumented} and its rejoinder for discussions on the differences, and applications in biomedicine and epidemiology.

There are several interesting directions for future research and application. Our approach is the first work to systematically study policy learning under the DiD setting. It may be possible to consider alternative assumptions or structures in DiD design to learn the optimal policy. Our instrumented DiD may also be generalized to multiple time points, continuous time, or continuous IV.

Note that Assumption~\ref{asmp:no.com.modi} can be replaced by the \emph{monotonicity} assumption, i.e. $A_{t}(1) \geq A_{t}(0)$ for $t = 0, 1$ with probability $1$, which identifies the complier treatment effects. Then we can also target complier optimal policies that would optimize the potential outcome among compliers.

\subsubsection*{Acknowledgements}

The authors are grateful for helpful comments and feedback from Julie Josse and Antoine Chambaz.

Pan Zhao is supported in part by the French National Research Agency ANR-16-IDEX-0006. Yifan Cui is supported in part by the National Natural Science Foundation of China and the Open Research Fund Key Laboratory of Advanced Theory and Application in Statistics and Data Science (East China Normal University), Ministry of Education of the People’s Republic of China. The authors are grateful to the OPAL infrastructure from Université Côte d'Azur for providing resources and support.

\bibliographystyle{plainnat}
\bibliography{Bibliography}

\newpage
\appendix

\begin{center}
{\large\bf SUPPLEMENTARY MATERIAL}
\end{center}

\section{Directed acyclic graphs}
\label{sm.sec:dag}

In this section, we present the directed acyclic graphs (DAGs) in Figures~\ref{fig:dag1} and \ref{fig:dag2} illustrating the causal structure of the proposed instrumented DiD. The IV $Z$ is associated with the trend in treatment $A_{1} - A_{0}$, is independent of the unmeasured confounders $U_{0}, U_{1}$, cannot have direct effect on the trend in outcome $Y_{1} - Y_{0}$, and does not modofy the treatment effect. But in comparison to a standard IV, here $Z$ is allowed to have a direct effect on the outcomes $Y_{0}, Y_{1}$, as illustrated by the edges $Z \to Y_{0}$ and $Z \to Y_{1}$ in Figure~\ref{fig:dag2}.

\begin{figure}[h]
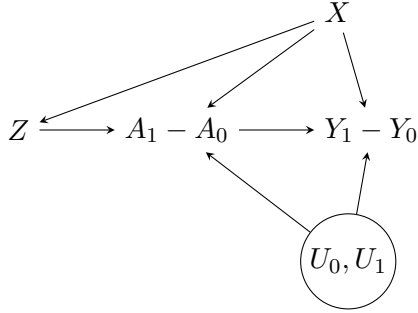

\centering

\tikz{
\node (z) {$Z$};
\node (a) [right = of z] {$A_{1} - A_{0}$};
\node (y) [right = of a] {$Y_{1} - Y_{0}$};
\node (x) [above right = of a] {$X$};
\node[hidden] (u) [below right = of a] {$U_{0}, U_{1}$};
\path (z) edge (a);
\path (a) edge (y);
\path (x) edge (z);
\path (x) edge (a);
\path (x) edge (y);
\path (u) edge (a);
\path (u) edge (y);
}

\caption{DAG for instrumented DiD on the trend scale.}
\label{fig:dag1}
\end{figure}

\begin{figure}[h]
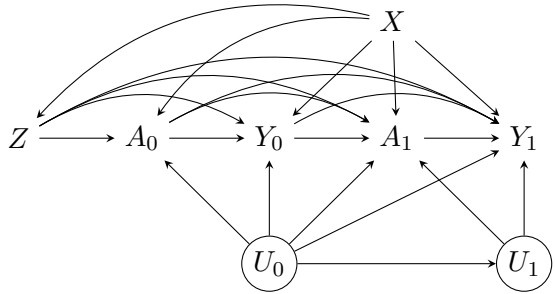

\centering

\tikz{
\node (z) {$Z$};
\node (a0) [right = of z] {$A_{0}$};
\node (y0) [right = of a0] {$Y_{0}$};
\node (a1) [right = of y0] {$A_{1}$};
\node (y1) [right = of a1] {$Y_{1}$};
\node (x) [above right = of y0] {$X$};
\node[hidden] (u0) [below = of y0] {$U_{0}$};
\node[hidden] (u1) [below = of y1] {$U_{1}$};
\path (z) edge (a0);
\path (z) edge[bend left] (y0);
\path (z) edge[bend left] (a1);
\path (z) edge[bend left] (y1);
\path (a0) edge (y0);
\path (a0) edge[bend left] (a1);
\path (a0) edge[bend left] (y1);
\path (y0) edge (a1);
\path (y0) edge[bend left] (y1);
\path (a1) edge (y1);
\path (u0) edge (u1);
\path (u0) edge (a0);
\path (u0) edge (y0);
\path (u0) edge (a1);
\path (u0) edge (y1);
\path (u1) edge (a1);
\path (u1) edge (y1);
\path (x) edge[bend right] (z);
\path (x) edge[bend right] (a0);
\path (x) edge (y0);
\path (x) edge (a1);
\path (x) edge (y1);
}

\caption{DAG for instrumented DiD over two time points.}
\label{fig:dag2}
\end{figure}

\section{Proof of Theorem~\ref{thm:wald}}

In this section, we provide a proof of Theorem~\ref{thm:wald} for completeness. Similar proof can be found at \cite{ye2022instrumented}.

We first note that
\begin{align*}
&\delta_{Y}(X) = \mu_{Y}(1, 1, X) - \mu_{Y}(0, 1, X) - \mu_{Y}(1, 0, X) + \mu_{Y}(0, 0, X) \\
&= \sum_{z=0,1} (2z - 1) (E[Y \mid T = 1, Z = z, X] - E[Y \mid T = 0, Z = z, X]) \\
&= \sum_{z=0,1} (2z - 1) (E[Y_{1}(A_{1}(z)) \mid T = 1, Z = z, X] - E[Y_{0}(A_{0}(z)) \mid T = 0, Z = z, X]) \\
&= \sum_{z=0,1} (2z - 1) (E[Y_{1}(A_{1}(z)) \mid Z = z, X] - E[Y_{0}(A_{0}(z)) \mid Z = z, X]) \\
&= \sum_{z=0,1} (2z - 1) E[Y_{1}(A_{1}(z)) - Y_{0}(A_{0}(z)) \mid Z = z, X] \\
&= \sum_{z=0,1} (2z - 1) E[A_{1}(z) Y_{1}(1) + (1 - A_{1}(z)) Y_{1}(0) - A_{0}(z) Y_{0}(1) - (1 - A_{0}(z)) Y_{0}(0) \mid Z = z, X] \\
&= \sum_{z=0,1} (2z - 1) E[A_{1}(z) (Y_{1}(1) - Y_{1}(0)) - A_{0}(z) (Y_{0}(1) - Y_{0}(0)) + Y_{1}(0) - Y_{0}(0) \mid Z = z, X] \\
&= \sum_{z=0,1} (2z - 1) (E[A_{1}(z) (Y_{1}(1) - Y_{1}(0)) \mid X] - E[A_{0}(z) (Y_{0}(1) - Y_{0}(0)) \mid X] + E[Y_{1}(0) - Y_{0}(0) \mid X]) \\
&= E[(A_{1}(1) - A_{1}(0))(Y_{1}(1) - Y_{1}(0)) \mid X] - E[(A_{0}(1) - A_{0}(0))(Y_{0}(1) - Y_{0}(0)) \mid X] \\
&= E[A_{1}(1) - A_{1}(0) \mid X] E[Y_{1}(1) - Y_{1}(0) \mid X] - E[A_{0}(1) - A_{0}(0) \mid X] E[Y_{0}(1) - Y_{0}(0) \mid X] \\
&= E[A_{1}(1) - A_{1}(0) - A_{0}(1) + A_{0}(0) \mid X] \tau (X).
\end{align*}

Then note that
\begin{align*}
&\delta_{A}(X) = \mu_{A}(1, 1, X) - \mu_{A}(0, 1, X) - \mu_{A}(1, 0, X) + \mu_{A}(0, 0, X) \\
&= \sum_{z=0,1} (2z - 1) (E[A \mid T = 1, Z = z, X] - E[A \mid T = 0, Z = z, X]) \\
&= \sum_{z=0,1} (2z - 1) (E[A_{1}(z) \mid T = 1, Z = z, X] - E[A_{0}(z) \mid T = 0, Z = z, X]) \\
&= E[A_{1}(1) - A_{1}(0) - A_{0}(1) + A_{0}(0) \mid X].
\end{align*}

Hence we have that $\delta_{Y}(X) = \delta_{A}(X) \tau (X)$. That is, the CATE $\tau (X)$ can be identified by $\delta_{Y}(X) / \delta_{A}(X)$. It follows that the optimal policy is nonparametrically identified by
\begin{equation*}
\arg\max_{d \in \mathcal{D}} E[\tau(X)d(X)] = \arg\max_{d \in \mathcal{D}} E\left[\frac{\delta_{Y}(X)}{\delta_{A}(X)} d(X)\right],
\end{equation*}
which completes the proof.

\section{Proof of Theorem~\ref{thm:ipw1}}
\label{sm.sec:proof.ipw1}

In this section, we prove our first novel identification results of the optimal policy.

First we note that 
\begingroup
\allowdisplaybreaks
\begin{align*}
& E\left[\frac{(2Z - 1)(2T - 1)(2A - 1) Y I\{A = d(X)\}}{\pi(T, Z, X) \delta_{A}(X)}\right] \\
& = E\left[\sum_{a = 0, 1} \frac{(2Z - 1)(2T - 1)(2a - 1) Y_{T}(a) I\{A = a\} I\{d(X) = a\}}{\pi(T, Z, X) \delta_{A}(X)}\right] \\
& = E\left[\sum_{a = 0, 1} \frac{(2Z - 1)(2T - 1)(2a - 1) E[Y_{T}(a) \mid X, U] I\{A = a\} I\{d(X) = a\}}{\pi(T, Z, X) \delta_{A}(X)}\right] \\
& = E\left[\sum_{a = 0, 1} \frac{(2Z - 1)(2T - 1)(2a - 1) E[Y_{T}(a) \mid X, U] Pr(A = a \mid X, U, T, Z) I\{d(X) = a\}}{\pi(T, Z, X) \delta_{A}(X)}\right] \\
& = E\left[\frac{Pr(A = 1 \mid X, U, T = 1, Z = 1) I\{d(X) = 1\} E[Y_{1}(1) \mid X, U]}{\delta_{A}(X)}\right] \\
& \quad - E\left[\frac{Pr(A = 1 \mid X, U, T = 0, Z = 1) I\{d(X) = 1\} E[Y_{0}(1) \mid X, U]}{\delta_{A}(X)}\right] \\
& \quad - E\left[\frac{Pr(A = 1 \mid X, U, T = 1, Z = 0) I\{d(X) = 1\} E[Y_{1}(1) \mid X, U]}{\delta_{A}(X)}\right] \\
& \quad + E\left[\frac{Pr(A = 1 \mid X, U, T = 0, Z = 0) I\{d(X) = 1\} E[Y_{0}(1) \mid X, U]}{\delta_{A}(X)}\right] \\
& \quad - E\left[\frac{Pr(A = 0 \mid X, U, T = 1, Z = 1) I\{d(X) = 0\} E[Y_{1}(0) \mid X, U]}{\delta_{A}(X)}\right] \\
& \quad + E\left[\frac{Pr(A = 0 \mid X, U, T = 0, Z = 1) I\{d(X) = 0\} E[Y_{0}(0) \mid X, U]}{\delta_{A}(X)}\right] \\
& \quad + E\left[\frac{Pr(A = 0 \mid X, U, T = 1, Z = 0) I\{d(X) = 0\} E[Y_{1}(0) \mid X, U]}{\delta_{A}(X)}\right] \\
& \quad - E\left[\frac{Pr(A = 0 \mid X, U, T = 0, Z = 0) I\{d(X) = 0\} E[Y_{0}(0) \mid X, U]}{\delta_{A}(X)}\right] \\
& = E\left[\frac{[Pr(A = 1 \mid X, U, T = 1, Z = 1) - Pr(A = 1 \mid X, U, T = 1, Z = 0)] I\{d(X) = 1\} E[Y_{1}(1) \mid X, U]}{\delta_{A}(X)}\right] \\
& \quad + E\left[\frac{[Pr(A = 1 \mid X, U, T = 1, Z = 1) - Pr(A = 1 \mid X, U, T = 1, Z = 0)] I\{d(X) = 0\} E[Y_{1}(0) \mid X, U]}{\delta_{A}(X)}\right] \\
& \quad - E\left[\frac{[Pr(A = 1 \mid X, U, T = 0, Z = 1) - Pr(A = 1 \mid X, U, T = 0, Z = 0)] I\{d(X) = 1\} E[Y_{0}(1) \mid X, U]}{\delta_{A}(X)}\right] \\
& \quad - E\left[\frac{[Pr(A = 1 \mid X, U, T = 0, Z = 1) - Pr(A = 1 \mid X, U, T = 0, Z = 0)] I\{d(X) = 0\} E[Y_{0}(0) \mid X, U]}{\delta_{A}(X)}\right].
\end{align*}
\endgroup

Since we have that for $t = 0, 1$,
\begin{equation*}
\begin{split}
& I\{d(X) = 1\} E[Y_{t}(1) \mid X, U] + I\{d(X) = 0\} E[Y_{t}(0) \mid X, U] \\
& = d(X) (E[Y_{t}(1) \mid X, U] - E[Y_{t}(0) \mid X, U]) + E[Y_{t}(0) \mid X, U] \\
& = d(X) \tau(X) + E[Y_{t}(0) \mid X, U],
\end{split}
\end{equation*}
we continue by Assumption~\ref{asmp:no.com.modi} that
\begin{align*}
& E\left[\frac{(2Z - 1)(2T - 1)(2A - 1) Y I\{A = d(X)\}}{\pi(T, Z, X) \delta_{A}(X)}\right] \\
& = E[d(X) \tau(X)] + E[\nu (X,U)],
\end{align*}
where the second term $E[\nu (X,U)]$ does not depend on the policy $d$. That is, 
\begin{equation*}
\arg\max_{d \in \mathcal{D}} E\left[\frac{(2Z - 1)(2T - 1)(2A - 1) Y I\{A = d(X)\}}{\pi(T, Z, X) \delta_{A}(X)}\right] = \arg\max_{d \in \mathcal{D}} E[\tau(X)d(X)],
\end{equation*}
which completes the proof.

\section{Proof of Theorem~\ref{thm:ipw2}}

In this section, we prove our second novel identification results of the optimal policy.

First we note that 
\begingroup
\allowdisplaybreaks
\begin{align*}
& E\left[\frac{(2T - 1) Y I\{Z = d(X)\}}{\pi(T, Z, X) \delta_{A}(X)}\right] \\
& = E\left[\sum_{a = 0, 1} \frac{(2T - 1) I\{Z = d(X)\} Y_{T}(a) I\{A = a\}}{\pi(T, Z, X) \delta_{A}(X)}\right] \\
& = E\left[\sum_{a = 0, 1} \frac{(2T - 1) I\{Z = d(X)\} E[Y_{T}(a) \mid X, U] Pr(A = a \mid X, U, T, Z)}{\pi(T, Z, X) \delta_{A}(X)}\right] \\
& = E\left[\frac{Pr(A = 1 \mid X, U, T = 1, Z = 1) I\{d(X) = 1\} E[Y_{1}(1) \mid X, U]}{\delta_{A}(X)}\right] \\
& \quad - E\left[\frac{Pr(A = 1 \mid X, U, T = 0, Z = 1) I\{d(X) = 1\} E[Y_{0}(1) \mid X, U]}{\delta_{A}(X)}\right] \\
& \quad + E\left[\frac{Pr(A = 1 \mid X, U, T = 1, Z = 0) I\{d(X) = 0\} E[Y_{1}(1) \mid X, U]}{\delta_{A}(X)}\right] \\
& \quad - E\left[\frac{Pr(A = 1 \mid X, U, T = 0, Z = 0) I\{d(X) = 0\} E[Y_{0}(1) \mid X, U]}{\delta_{A}(X)}\right] \\
& \quad + E\left[\frac{Pr(A = 0 \mid X, U, T = 1, Z = 1) I\{d(X) = 1\} E[Y_{1}(0) \mid X, U]}{\delta_{A}(X)}\right] \\
& \quad - E\left[\frac{Pr(A = 0 \mid X, U, T = 0, Z = 1) I\{d(X) = 1\} E[Y_{0}(0) \mid X, U]}{\delta_{A}(X)}\right] \\
& \quad + E\left[\frac{Pr(A = 0 \mid X, U, T = 1, Z = 0) I\{d(X) = 0\} E[Y_{1}(0) \mid X, U]}{\delta_{A}(X)}\right] \\
& \quad - E\left[\frac{Pr(A = 0 \mid X, U, T = 0, Z = 0) I\{d(X) = 0\} E[Y_{0}(0) \mid X, U]}{\delta_{A}(X)}\right] \\
& = E\left[\frac{[Pr(A = 1 \mid X, U, T = 1, Z = 1) - Pr(A = 1 \mid X, U, T = 1, Z = 0)] I\{d(X) = 1\} E[Y_{1}(1) \mid X, U]}{\delta_{A}(X)}\right] \\
& \quad + E\left[\frac{[Pr(A = 1 \mid X, U, T = 1, Z = 1) - Pr(A = 1 \mid X, U, T = 1, Z = 0)] I\{d(X) = 0\} E[Y_{1}(0) \mid X, U]}{\delta_{A}(X)}\right] \\
& \quad - E\left[\frac{[Pr(A = 1 \mid X, U, T = 0, Z = 1) - Pr(A = 1 \mid X, U, T = 0, Z = 0)] I\{d(X) = 1\} E[Y_{0}(1) \mid X, U]}{\delta_{A}(X)}\right] \\
& \quad - E\left[\frac{[Pr(A = 1 \mid X, U, T = 0, Z = 1) - Pr(A = 1 \mid X, U, T = 0, Z = 0)] I\{d(X) = 0\} E[Y_{0}(0) \mid X, U]}{\delta_{A}(X)}\right] \\
& \quad + E\left[\frac{Pr(A = 1 \mid X, U, T = 1, Z = 0) E[Y_{1}(1) \mid X, U] + Pr(A = 0 \mid X, U, T = 1, Z = 1) E[Y_{1}(0) \mid X, U]}{\delta_{A}(X)}\right] \\
& \quad - E\left[\frac{Pr(A = 1 \mid X, U, T = 0, Z = 0) E[Y_{0}(1) \mid X, U] + Pr(A = 0 \mid X, U, T = 0, Z = 1) E[Y_{0}(0) \mid X, U]}{\delta_{A}(X)}\right].
\end{align*}
\endgroup

Then by the same arguments as in Section~\ref{sm.sec:proof.ipw1}, we have that 
\begin{equation*}
E\left[\frac{(2T - 1) Y I\{Z = d(X)\}}{\pi(T, Z, X) \delta_{A}(X)}\right] = E[d(X) \tau(X)] + E[\tilde{\nu} (X,U)],
\end{equation*}
where the second term does not depend on the policy $d$. That is, 
\begin{equation*}
\arg\max_{d \in \mathcal{D}} E\left[\frac{(2T - 1) Y I\{Z = d(X)\}}{\pi(T, Z, X) \delta_{A}(X)}\right] = \arg\max_{d \in \mathcal{D}} E[\tau(X)d(X)],
\end{equation*}
which completes the proof.

\section{Proof of Theorem~\ref{thm:mr}}

In this section, we prove our identification results of the optimal policy using the efficient influence functions.

First we note that
\begin{align*}
&E[W_{1} I\{A = d(X)\}] \\
&= \frac{1}{2} E[W_{1} (2I\{A = d(X)\} - 1)] + \frac{1}{2} E[W_{1}] \\
&= \frac{1}{2} E[W_{1} (2A - 1) (2d(X) - 1)] + \frac{1}{2} E[W_{1}] \\
&= \frac{1}{2} E[\Delta(O) (2d(X) - 1)] + \frac{1}{2} E[W_{1}] \\
&= E[\Delta(O) d(X)] + \frac{1}{2} E[W_{1} - \Delta(O)] \\
&= E[\tau(X) d(X)] + \frac{1}{2} E[W_{1} - \Delta(O)],
\end{align*}
where the last equality holds under the union model $\mathcal{M}_1 \cup \mathcal{M}_2 \cup \mathcal{M}_3$. The proof of the multiple robustness is omitted since it simply follows the same arguments of Theorem 1 in \cite{ye2022instrumented}.

We also note that
\begin{align*}
&E[W_{2} I\{Z = d(X)\}] \\
&= \frac{1}{2} E[W_{2} (2I\{Z = d(X)\} - 1)] + \frac{1}{2} E[W_{2}] \\
&= \frac{1}{2} E[W_{2} (2Z - 1) (2d(X) - 1)] + \frac{1}{2} E[W_{2}] \\
&= \frac{1}{2} E[\Delta(O) (2d(X) - 1)] + \frac{1}{2} E[W_{2}] \\
&= E[\Delta(O) d(X)] + \frac{1}{2} E[W_{2} - \Delta(O)] \\
&= E[\tau(X) d(X)] + \frac{1}{2} E[W_{2} - \Delta(O)],
\end{align*}
where the last equality holds under the union model $\mathcal{M}_1 \cup \mathcal{M}_2 \cup \mathcal{M}_3$.

\section{A locally efficient and multiply robust estimator}
\label{sm.sec:semi.eff}

In this section, we present the semiparametric efficiency results for our proposed IPW formula:
\begin{equation*}
\Psi(P) = E\left[\frac{(2Z - 1)(2T - 1)(2A - 1) Y I\{A = d(X)\}}{\pi(T, Z, X) \delta_{A}(X)}\right].
\end{equation*}

We first characterize the efficient influence function, and then propose the multiply robust estimator.

\begin{theorem}\label{thm:value.eif}
The efficient influence function of $\Psi(P)$ is
\begin{align*}
\phi_{P} &= \frac{(2Z - 1)(2T - 1)(2A - 1) Y I\{A = d(X)\}}{\pi(T, Z, X) \delta_{A}(X)} \\
&\qquad - \frac{(2Z - 1)(2T - 1)E[(2A - 1) Y I\{A = d(X)\} \mid T, Z, X]}{\pi(T, Z, X) \delta_{A}(X)} + \gamma(X) \\
&\qquad - \frac{(2Z - 1)(2T - 1)(A - \mu_{A}(T, Z, X))\gamma(X)}{\pi(T, Z, X) \delta_{A}(X)} - \Psi(P),
\end{align*}
where $\gamma(x) = \sum_{t, z} (2z - 1)(2t - 1) E[(2A - 1) Y I\{A = d(X)\} \mid T = t, Z = z, X = x] / \delta_{A}(x)$.
\end{theorem}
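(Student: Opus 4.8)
\section*{Proof proposal for Theorem~\ref{thm:value.eif}}

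The plan is to derive $\phi_P$ as the unique gradient of the functional $\Psi(P)$ in the nonparametric model, via the pathwise-derivative calculation along one-dimensional submodels. The organizing first step is to note that, by iterated expectation and the elementary identity
\[
\sum_{t,z}(2z-1)(2t-1)E[\,\cdot \mid T=t,Z=z,X] = E\left[\frac{(2Z-1)(2T-1)}{\pi(T,Z,X)}\,\cdot\,\Big|\,X\right],
\]
the functional collapses to $\Psi(P) = E[\gamma(X)]$, where $\gamma$ is exactly the object in the statement. This reveals that $\Psi$ does not depend on the conditional law of $(T,Z)\mid X$ at all (the propensity $\pi$ cancels), so the contribution of the score for $(T,Z)\mid X$ must vanish; I will use this as a consistency check.

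Next I would factor the likelihood as $p(x)\,p(t,z\mid x)\,p(a\mid t,z,x)\,p(y\mid a,t,z,x)$ with score decomposition $S = S_X + S_{TZ} + S_A + S_Y$, and differentiate $\Psi(P_\epsilon) = \int \gamma_\epsilon(x)\,p_\epsilon(x)\,dx$ at $\epsilon=0$. This yields two pieces: $\int \gamma\,\dot p(x)\,dx = E[(\gamma(X)-\Psi(P))S_X]$, and $E[\dot\gamma(X)]$. Writing $h(t,z,x) = E[(2A-1)YI\{A=d(X)\}\mid T=t,Z=z,X=x]$, the perturbation of $\gamma$ splits into a numerator piece $\dot h$, driven by $S_A+S_Y$, and a denominator piece $\dot\delta_A = \sum_{t,z}(2z-1)(2t-1)\dot\mu_A$, driven through $\dot\mu_A(t,z,x)=E[(A-\mu_A)S_A\mid t,z,x]$ by $S_A$ alone. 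Applying the conversion identity above to each sum over $(t,z)$ turns $E[\dot\gamma(X)]$ into
\begin{align*}
& E\left[\frac{(2Z-1)(2T-1)(2A-1)YI\{A=d(X)\}}{\pi(T,Z,X)\delta_A(X)}(S_A+S_Y)\right] \\
& \quad - E\left[\frac{(2Z-1)(2T-1)(A-\mu_A(T,Z,X))\gamma(X)}{\pi(T,Z,X)\delta_A(X)}S_A\right].
\end{align*}

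With the pathwise derivative in hand, I would confirm that the stated $\phi_P$ reproduces it against every score component by computing conditional expectations of $\phi_P$. One checks $E[\phi_P\mid X]=\gamma(X)-\Psi(P)$, since the first and second terms contribute $\gamma$ and $-\gamma$, the fourth vanishes, and the explicit $\gamma(X)$ remains; moreover $E[\phi_P\mid T,Z,X]=\gamma(X)-\Psi(P)$ as well, because the first two terms cancel given $(T,Z,X)$ and the fourth vanishes by $E[A-\mu_A\mid T,Z,X]=0$. This simultaneously matches the $S_X$ pairing and forces orthogonality to $S_{TZ}$ (whose conditional mean given $X$ is zero). Pairing with $S_A$ and $S_Y$, only the $A$- and $Y$-dependent terms of $\phi_P$ survive, and these reproduce exactly the $\dot\delta_A$ and $\dot h$ contributions above. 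Finally $E[\phi_P]=E[E[\phi_P\mid X]]=0$, so $\phi_P$ is a mean-zero gradient; since the model is nonparametric the tangent space is all of $L_2^0(P)$ and the gradient is unique, whence $\phi_P$ is the efficient influence function.

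The main obstacle I anticipate is the careful bookkeeping of the denominator correction. Because $\delta_A(X)$ is itself a functional of $P$ sitting in the denominator, its pathwise derivative must be tracked precisely; recognizing that $\dot\mu_A$ depends on $S_A$ only, and then converting the resulting sum over $(t,z)$ into the inverse-probability-weighted form, is what produces the nonstandard fourth term $-(2Z-1)(2T-1)(A-\mu_A(T,Z,X))\gamma(X)/\{\pi(T,Z,X)\delta_A(X)\}$. Everything else reduces to repeated, routine application of the sum-to-weight identity together with the mean-zero properties of the conditional scores.
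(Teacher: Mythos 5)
Your proposal is correct, and it is at heart the same argument as the paper's: a pathwise-derivative computation along one-dimensional parametric submodels, reading off the unique mean-zero gradient in the nonparametric model. The one genuine organizational difference is your opening move of collapsing $\Psi(P)$ to $E[\gamma(X)]$ via the sum-to-weight identity $\sum_{t,z}(2z-1)(2t-1)E[\,\cdot\mid T=t,Z=z,X]=E\bigl[(2Z-1)(2T-1)\pi(T,Z,X)^{-1}\,\cdot\mid X\bigr]$, combined with the factorized score $S=S_X+S_{TZ}+S_A+S_Y$. The paper instead differentiates the IPW representation directly with an unstructured score $h$, which forces it to compute $\partial_\epsilon\pi_\epsilon(t,z,X)|_{\epsilon=0}$ explicitly and then observe (implicitly, in its terse "in summary" step) that those contributions cancel; your route makes that cancellation structural, since the collapsed representation manifestly does not depend on the law of $(T,Z)\mid X$, and it cleanly isolates the denominator correction $\dot\delta_A$ as the source of the fourth term. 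Your final verification that $E[\phi_P\mid X]=E[\phi_P\mid T,Z,X]=\gamma(X)-\Psi(P)$ and that only the $S_A,S_Y$ pairings survive is a worthwhile addition the paper omits; just make sure that when you assemble the gradient from the $\dot h$ contribution you retain the centering term $-h(T,Z,X)$ (the second line of the stated $\phi_P$) rather than the uncentered form you display mid-proof — the two give the same pairing against $S_A+S_Y$ but only the centered one is orthogonal to $S_X$ and $S_{TZ}$, as your own verification step in fact confirms.
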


By Theorem~\ref{thm:value.eif}, we conclude that the optimal policy is nonparametrically identified by $\arg\max_{\mathcal{D}} \psi_{P}$, where 
\begin{align*}
\psi_{P} &= E\left[\frac{(2Z - 1)(2T - 1)(2A - 1) Y I\{A = d(X)\}}{\pi(T, Z, X) \delta_{A}(X)} \right. \\
&\quad \left. - \frac{(2Z - 1)(2T - 1)E[(2A - 1) Y I\{A = d(X)\} \mid T, Z, X]}{\pi(T, Z, X) \delta_{A}(X)} + \gamma(X) \right. \\
&\quad \left. - \frac{(2Z - 1)(2T - 1)(A - \mu_{A}(T, Z, X))\gamma(X)}{\pi(T, Z, X) \delta_{A}(X)}\right].    
\end{align*}

In Theorem~\ref{thm:value.mr}, we show the multiple robustness of the above formula under models: \\
$\Tilde{\mathcal{M}}_1$: models for $\pi(t, z, x)$ and $\delta_{A}(x)$ are correct; \\
$\Tilde{\mathcal{M}}_2$: models for $\pi(t, z, x)$ and $\gamma(x)$ are correct; \\
$\Tilde{\mathcal{M}}_3$: models for $\mu_{A}(t, z, x)$, $\gamma(x)$ and $\nu (t, z, x)$ are correct, where $\nu (t, z, x) = E[(2A - 1) Y I\{A = d(X)\} \mid T = t, Z = z, X = x]$.

\begin{theorem}\label{thm:value.mr}
Under standard regularity conditions, we have that
\begin{align*}
P_{n} \psi (\hat{P}) = P_{n} & \left[\frac{(2Z - 1)(2T - 1)(2A - 1) Y I\{A = d(X)\}}{\hat{\pi}(T, Z, X) \hat{\delta}_{A}(X)} \right. \\
&\quad \left. - \frac{(2Z - 1)(2T - 1) \hat{E}[(2A - 1) Y I\{A = d(X)\} \mid T, Z, X]}{\hat{\pi}(T, Z, X) \hat{\delta}_{A}(X)} + \hat{\gamma}(X) \right. \\
&\quad \left. - \frac{(2Z - 1)(2T - 1)(A - \hat{\mu}_{A}(T, Z, X)) \hat{\gamma}(X)}{\hat{\pi}(T, Z, X) \hat{\delta}_{A}(X)} \right]
\end{align*}
is a consistent and asymptotically normal estimator of $\Psi(P)$ under the union model $\Tilde{\mathcal{M}}_1 \cup \Tilde{\mathcal{M}}_2 \cup \Tilde{\mathcal{M}}_3$. Furthermore, it is locally efficient under the intersection model $\Tilde{\mathcal{M}}_1 \cap \Tilde{\mathcal{M}}_2 \cap \Tilde{\mathcal{M}}_3$.
\end{theorem}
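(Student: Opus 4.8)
The plan is to follow the standard template for multiply robust estimators (as in \cite{ye2022instrumented}), viewing $P_n\psi(\hat P)$ as the empirical mean of the uncentered efficient influence function of Theorem~\ref{thm:value.eif} with the nuisances replaced by estimates. Writing $G=(2A-1)YI\{A=d(X)\}$, it is convenient to group the estimating function as
\[
\psi(\bar P)=\frac{(2Z-1)(2T-1)}{\bar\pi\,\bar\delta_A}\Big\{G-\bar\nu-(A-\bar\mu_A)\bar\gamma\Big\}+\bar\gamma,
\]
where $\bar\pi,\bar\mu_A,\bar\nu$ denote the probability limits of the fitted nuisances, $\bar\delta_A(x)=\sum_{t,z}(2z-1)(2t-1)\bar\mu_A(t,z,x)$ and $\bar\gamma(x)=\sum_{t,z}(2z-1)(2t-1)\bar\nu(t,z,x)/\bar\delta_A(x)$ are the induced limits of the treatment trend and of $\gamma$, and bars indicate possible misspecification. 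The proof then splits into two parts: (a) showing that the \emph{bias} $B(\bar P):=E[\psi(\bar P)]-\Psi(P)$ vanishes whenever one of $\tilde{\mathcal{M}}_1,\tilde{\mathcal{M}}_2,\tilde{\mathcal{M}}_3$ holds; and (b) upgrading (a) to consistency, asymptotic normality, and local efficiency.

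For part (a), first I would compute $B(\bar P)$ by iterated expectations: conditioning on $(T,Z,X)$ replaces $G$ by the true $\nu$ and $A$ by the true $\mu_A$, and then conditioning on $X$ and summing over $(t,z)$ against the true design $\pi(t,z,X)$ turns $B(\bar P)$ into a deterministic functional of the discrepancies $\bar\pi-\pi$, $\bar\delta_A-\delta_A$, $\bar\mu_A-\mu_A$, $\bar\nu-\nu$ and $\bar\gamma-\gamma$. The two identities driving every cancellation are precisely those already derived in the proof of Theorem~\ref{thm:wald}, namely $\delta_A(x)=\sum_{t,z}(2z-1)(2t-1)\mu_A(t,z,x)$ and $\gamma(x)\,\delta_A(x)=\sum_{t,z}(2z-1)(2t-1)\nu(t,z,x)$, together with $\Psi(P)=E[\gamma(X)]$, which follows by conditioning and the definition of $\gamma$. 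I would then verify $B(\bar P)=0$ in each submodel separately: under $\tilde{\mathcal{M}}_3$ the residuals $\nu-\bar\nu$ and $\mu_A-\bar\mu_A$ vanish and only the $E[\bar\gamma]=E[\gamma]$ term survives; under $\tilde{\mathcal{M}}_1$ the correct $\pi,\delta_A$ collapse the weights to $1/\delta_A$ and the identities make the $\bar\nu$- and $\bar\mu_A$-terms recombine into $\bar\gamma$; and under $\tilde{\mathcal{M}}_2$, where both regressions may be wrong, the cancellation hinges entirely on substituting $\gamma\,\delta_A=\sum(2z-1)(2t-1)\nu$ and $\delta_A=\sum(2z-1)(2t-1)\mu_A$ for the true quantities. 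This last case is the main obstacle: because $\gamma$ and $\delta_A$ are themselves functionals of $(\nu,\mu_A)$, the discrepancies are not variation-independent, and the bookkeeping showing that the surviving terms are exactly products of two discrepancies—each annihilated by a correctly specified model—is delicate.

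For part (b), in the parametric case I would treat the problem as stacked $M$-estimation, combining the MLE scores for the finite-dimensional parameters indexing $\pi,\mu_A$ and $\nu$ with the moment $\psi$. Under the regularity conditions of \cite{white1982maximum} and the $\sqrt n$-consistency of the nuisance estimates, a Taylor expansion around the probability limits, together with the union-model unbiasedness from part (a), yields a regular asymptotically linear expansion of $P_n\psi(\hat P)$ and hence $\sqrt n\{P_n\psi(\hat P)-\Psi(P)\}\to\mathcal{N}(0,\sigma^2)$ by the central limit theorem. In the machine-learning case I would invoke cross-fitting \citep{chernozhukov2018double}: sample splitting removes the empirical-process term, and because part (a) shows the conditional bias is a sum of products of two nuisance errors, the second-order remainder is $O_p\big(\sum\|\hat f-f\|_{L_2}^2\big)=o_p(n^{-1/2})$ under the $o_p(n^{-1/4})$ rate conditions, again giving $\sqrt n$-consistency and asymptotic normality. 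Finally, local efficiency follows because under the intersection model $\tilde{\mathcal{M}}_1\cap\tilde{\mathcal{M}}_2\cap\tilde{\mathcal{M}}_3$ all limits equal the truth, so $\psi(\bar P)-\Psi(P)$ coincides with the efficient influence function $\phi_P$ of Theorem~\ref{thm:value.eif}; the estimator is therefore asymptotically linear with influence function $\phi_P$ and attains the semiparametric efficiency bound.
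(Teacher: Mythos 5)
Your proposal matches the paper's proof in its essentials: the paper likewise verifies that the bias vanishes separately under each of $\tilde{\mathcal{M}}_1$, $\tilde{\mathcal{M}}_2$, $\tilde{\mathcal{M}}_3$ using exactly the ingredients you identify (the saturated representations of $\mu_A$ and $\nu$ in $(T,Z)$ whose interaction coefficients are $\delta_A$ and $\gamma\delta_A$, the orthogonality $E[(2Z-1)(2T-1)/\pi(T,Z,X)\mid T,X]=E[(2Z-1)(2T-1)/\pi(T,Z,X)\mid Z,X]=0$ under a correct $\pi$, and $E[\gamma(X)]=\Psi(P)$), and it relegates consistency, asymptotic normality, and local efficiency to the cited standard regularity conditions, just as your part (b) fills them in by stacked $M$-estimation and cross-fitting. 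One small caution: for the union model to be nondegenerate you should treat $\bar\delta_A$ and $\bar\gamma$ as separately posited nuisances rather than as induced functionals of $\bar\mu_A$ and $\bar\nu$ (otherwise $\tilde{\mathcal{M}}_1$ and $\tilde{\mathcal{M}}_2$ would implicitly constrain the regression models), but since your bookkeeping tracks the discrepancies $\bar\delta_A-\delta_A$ and $\bar\gamma-\gamma$ separately anyway, nothing in the argument breaks.
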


Despite the fact that we characterize the efficient influence function and propose a multiply robust estimator, note that it is not straightforward to posit models for $\gamma(x)$ and $\nu(t,z,x)$.

\section{Proof of Theorem~\ref{thm:value.eif} and Theorem~\ref{thm:value.mr}}
\label{sm.sec:semi.eff.proof}

We first prove Theorem~\ref{thm:value.eif} by deriving the efficient influence function.

For a given distribution $P$ in the nonparametric statistical model $\mathcal{M}$, we let $p$ denote the density of $P$ with respect to some dominating measure $\nu$. For all bounded $h \in L_{2}(P)$, define the parametric submodel $p_{\epsilon} = (1 + \epsilon h) p$, which is valid for small enough $\epsilon$ and has score $h$ at $\epsilon = 0$.

We study the following statistical functional
\begin{equation*}
\Psi (P) = E_{P} \left[\frac{(2Z - 1)(2T - 1)(2A - 1) Y I\{A = d(X)\}}{\pi(T, Z, X) \delta_{A}(X)}\right],
\end{equation*}
and would establish that $\Psi (P)$ is pathwise differentiable with respect to $\mathcal{M}$ at $P$ with efficient influence function $\phi_{P}$ if we have that for any $P \in \mathcal{M}$
\begin{equation*}
\frac{\partial}{\partial\epsilon} \Psi (P_{\epsilon}) \bigg|_{\epsilon = 0} = \int \phi_{P} (o) h(o) dP(o).
\end{equation*}

We denote $\pi_{\epsilon}(t, z, x) = E_{P_{\epsilon}} [I\{T = t, Z = z\} \mid X = x]$, $\delta_{A,\epsilon}(x) = \mu_{A,\epsilon}(1, 1, x) - \mu_{A,\epsilon}(0, 1, x) - \mu_{A,\epsilon}(1, 0, x) + \mu_{A,\epsilon}(0, 0, x)$, $\mu_{A,\epsilon}(t, z, x) = E_{P_{\epsilon}}[A \mid T = t, Z = z, X = x]$, $S = \partial \log p_{\epsilon} / \partial\epsilon$, and compute
\begin{align*}
\frac{\partial}{\partial\epsilon} \Psi (P_{\epsilon}) \bigg|_{\epsilon = 0} &= \frac{\partial}{\partial\epsilon} E_{P_{\epsilon}} \left[\frac{(2Z - 1)(2T - 1)(2A - 1) Y I\{A = d(X)\}}{\pi_{\epsilon}(T, Z, X) \delta_{A,\epsilon}(X)}\right] \bigg|_{\epsilon = 0} \\
&= \frac{\partial}{\partial\epsilon} E_{P} \left[(1 + \epsilon S) \frac{(2Z - 1)(2T - 1)(2A - 1) Y I\{A = d(X)\}}{\pi_{\epsilon}(T, Z, X) \delta_{A,\epsilon}(X)}\right] \bigg|_{\epsilon = 0} \\
&= E_{P} \left[S \frac{(2Z - 1)(2T - 1)(2A - 1) Y I\{A = d(X)\}}{\pi(T, Z, X) \delta_{A}(X)}\right] \\
&\quad - E_{P} \left[\frac{(2Z - 1)(2T - 1)(2A - 1) Y I\{A = d(X)\}}{\pi^{2}(T, Z, X) \delta_{A}^{2}(X)} \left(\delta_{A}(X) \frac{\partial}{\partial\epsilon} \pi_{\epsilon}(T, Z, X) \bigg|_{\epsilon = 0} \right.\right. \\
&\qquad\qquad \left.\left. + \pi(T, Z, X) \frac{\partial}{\partial\epsilon} \delta_{A,\epsilon}(X) \bigg|_{\epsilon = 0} \right)\right].
\end{align*}

Then we need to compute
\begin{align*}
\frac{\partial}{\partial\epsilon} \pi_{\epsilon}(t, z, X) \bigg|_{\epsilon = 0} &= \frac{\partial}{\partial\epsilon} E_{P_{\epsilon}} [I\{T = t, Z = z\} \mid X] \bigg|_{\epsilon = 0} \\
&= \frac{\partial}{\partial\epsilon} \frac{\pi(t, z, X) + \epsilon E_{P}[S I\{T = t, Z = z\} \mid X]}{1 + \epsilon E_{P}[S \mid X]} \bigg|_{\epsilon = 0} \\
&= E_{P}[S I\{T = t, Z = z\} \mid X] - \pi(t, z, X) E_{P}[S \mid X],
\end{align*}
\begin{equation*}
\frac{\partial}{\partial\epsilon} \delta_{A,\epsilon}(X) \bigg|_{\epsilon = 0} = \frac{\partial}{\partial\epsilon} \{\mu_{A,\epsilon}(1, 1, X) - \mu_{A,\epsilon}(0, 1, X) - \mu_{A,\epsilon}(1, 0, X) + \mu_{A,\epsilon}(0, 0, X)\} \bigg|_{\epsilon = 0},
\end{equation*}
and 
\begin{align*}
\frac{\partial}{\partial\epsilon} \mu_{A,\epsilon}(t, z, X) \bigg|_{\epsilon = 0} &= \frac{\partial}{\partial\epsilon} E_{P_{\epsilon}}[A \mid T = t, Z = z, X] \bigg|_{\epsilon = 0} \\
&= \frac{\partial}{\partial\epsilon} \frac{\mu_{A}(t, z, X) + \epsilon E_{P}[S A \mid T = t, Z = z, X]}{1 + \epsilon E_{P}[S \mid T = t, Z = z, X]} \bigg|_{\epsilon = 0} \\
&= E_{P}[S A \mid T = t, Z = z, X] - \mu_{A}(t, z, X) E_{P}[S \mid T = t, Z = z, X] \\
&= E_{P}\left[S \frac{(A - \mu_{A}(t, z, X)) I\{T = t, Z = z\}}{\pi (t,z,X)} \mid X \right].
\end{align*}

In summary, we obtain the efficient influence function
\begin{align*}
\phi_{P} &= \frac{(2Z - 1)(2T - 1)(2A - 1) Y I\{A = d(X)\}}{\pi(T, Z, X) \delta_{A}(X)} \\
&\quad - \frac{(2Z - 1)(2T - 1)E[(2A - 1) Y I\{A = d(X)\} \mid T, Z, X]}{\pi(T, Z, X) \delta_{A}(X)} + \gamma(X) \\
&\quad - \frac{(2Z - 1)(2T - 1)(A - \mu_{A}(T, Z, X))\gamma(X)}{\pi(T, Z, X) \delta_{A}(X)} - \Psi(P),
\end{align*}
which completes the proof of Theorem~\ref{thm:value.eif}.

Next, we prove Theorem~\ref{thm:value.mr} by verifying the multiple robustness property.

We first note the facts that $\mu_{A}(T, Z, X) = \mu_{A}(0, 0, x) + Z (\mu_{A}(0, 1, x) - \mu_{A}(0, 0, x)) + T (\mu_{A}(1, 0, x) - \mu_{A}(0, 0, x)) + T Z \delta_{A}(X)$, $\nu(T, Z, X) = \nu(0, 0, x) + Z (\nu(0, 1, x) - \nu(0, 0, x)) + T (\nu(1, 0, x) - \nu(0, 0, x)) + T Z \delta_{A}(X)$, $E[(2Z-1)(2T-1)/\pi(T,Z,X) \mid T, X] = E[(2Z-1)(2T-1)/\pi(T,Z,X) \mid Z, X] = 0$, and $E[\gamma (X)] = \Psi(P)$.

If $\Tilde{\mathcal{M}}_1$ is correctly specified, we have that 
\begin{align*}
E[\phi_{P} (O)] &= E\left[\frac{(2Z - 1)(2T - 1)(A - \mu_{A}(T, Z, X))\gamma(X)}{\pi(T, Z, X) \delta_{A}(X)}\right] \\
&= E\left[\frac{(2Z - 1)(2T - 1)\gamma(X)}{\pi(T, Z, X) \delta_{A}(X)}(A - \mu_{A}(T, Z, X))\right] = 0.
\end{align*}

If $\Tilde{\mathcal{M}}_2$ is correctly specified, we have that
\begin{align*}
E[\phi_{P} (O)] &= E\left[\frac{(2Z - 1)(2T - 1)(2A - 1) Y I\{A = d(X)\}}{\pi(T, Z, X) \delta_{A}(X)} \right.\\
&\qquad\left. - \frac{(2Z - 1)(2T - 1)E[(2A - 1) Y I\{A = d(X)\} \mid T, Z, X]}{\pi(T, Z, X) \delta_{A}(X)} \right.\\
&\qquad\left. - \frac{(2Z - 1)(2T - 1)(A - \mu_{A}(T, Z, X))\gamma(X)}{\pi(T, Z, X) \delta_{A}(X)} \right] = 0.
\end{align*}

If $\Tilde{\mathcal{M}}_3$ is correctly specified, we have that
\begin{align*}
E[\phi_{P} (O)] &= E\left[\frac{(2Z - 1)(2T - 1)(2A - 1) Y I\{A = d(X)\}}{\pi(T, Z, X) \delta_{A}(X)} \right.\\
&\qquad\left. - \frac{(2Z - 1)(2T - 1)E[(2A - 1) Y I\{A = d(X)\} \mid T, Z, X]}{\pi(T, Z, X) \delta_{A}(X)}\right] = 0,
\end{align*}
which completes the proof.

\section{Proof of Theorem~\ref{thm:asym.para}}

We study the following maximization problem:
\begin{align*}
\hat{\eta} = \arg\max_{\eta \in \mathbb{H}} \frac{1}{n}\sum_{i=1}^{n} & \left(\frac{\delta_{Y}(X_i;\hat{\beta})}{\delta_{A}(X_i;\hat{\alpha})} + \frac{(2Z_i - 1)(2T_i - 1)}{\pi(T_i, Z_i, X_i;\hat{\theta}) \delta_{A}(X_i;\hat{\alpha})} \left\{Y_i - \mu_{Y}(T_i, Z_i, X_i;\hat{\beta}) \right.\right.\\ 
&\qquad \left.\left. - \frac{\delta_{Y}(X_i;\hat{\beta})}{\delta_{A}(X_i;\hat{\alpha})} (A_{i} - \mu_{A}(T_i, Z_i, X_i;\hat{\alpha})) \right\} \right)d(X_i;\eta),
\end{align*}
where $\hat{\alpha}$, $\hat{\beta}$ and $\hat{\theta}$ are estimated by posited parametric models. We let $\hat{M}(\eta)$ denote the estimated objective function above, i.e. $\hat{\eta} = \arg\max_{\eta \in \mathbb{H}} \hat{M}(\eta)$.

Under standard regularity conditions, we have that
\begin{align*}
\sqrt{n} (\hat{\alpha} - \alpha^\ast) &= \frac{1}{\sqrt{n}} \sum_{i=1}^n \phi_{\alpha, i} + o_p(1), \\ \sqrt{n} (\hat{\beta} - \beta^\ast) &= \frac{1}{\sqrt{n}} \sum_{i=1}^n \phi_{\beta, i} + o_p(1), \\ \sqrt{n} (\hat{\theta} - \theta^\ast) &= \frac{1}{\sqrt{n}} \sum_{i=1}^n \phi_{\theta, i} + o_p(1),    
\end{align*}
where $\alpha^\ast$, $\beta^\ast$ and $\theta^\ast$ are the probability limits, $\phi_{\alpha}$, $\phi_{\beta}$ and $\phi_{\theta}$ are the influence functions.

Now we start our proof which has three main parts as follows.

{\bf PART 1.} First we note that, by the multiple robustness property, the strong law of large numbers and uniform consistency, $\hat{M}(\eta) = M(\eta) + o_p(1)$. 

We denote
\begin{align*}
M_{n}^\ast (\eta) = \frac{1}{n}\sum_{i=1}^{n} & \left(\frac{\delta_{Y}(X_i;\beta^\ast)}{\delta_{A}(X_i;\alpha^\ast)} + \frac{(2Z_i - 1)(2T_i - 1)}{\pi(T_i, Z_i, X_i;\theta^\ast) \delta_{A}(X_i;\alpha^\ast)} \left\{Y_i - \mu_{Y}(T_i, Z_i, X_i;\beta^\ast) \right.\right.\\ 
&\qquad \left.\left. - \frac{\delta_{Y}(X_i;\beta^\ast)}{\delta_{A}(X_i;\alpha^\ast)} (A_{i} - \mu_{A}(T_i, Z_i, X_i;\alpha^\ast)) \right\} \right)d(X_i;\eta),
\end{align*}
and apply the Taylor expansion on $\hat{M}(\eta)$ at $(\alpha^\ast, \beta^\ast, \theta^\ast)$,
\begin{equation*}
\hat{M}(\eta) = M_{n}^\ast (\eta) + H_{\alpha^\ast}^{\rm T} (\hat{\alpha} - \alpha^\ast) + H_{\beta^\ast}^{\rm T} (\hat{\beta} - \beta^\ast) + H_{\theta^\ast}^{\rm T} (\hat{\theta} - \theta^\ast) + o_p(n^{-1/2}),
\end{equation*}
where $H_{\alpha^\ast} = \lim_{n \to \infty} \partial \hat{M}(\eta) / \partial \alpha |_{\alpha = \alpha^\ast}$, $H_{\beta^\ast} = \lim_{n \to \infty} \partial \hat{M}(\eta) / \partial \beta |_{\beta=\beta^\ast}$, and $H_{\theta^\ast} = \lim_{n \to \infty} \partial \hat{M}(\eta) / \partial \theta |_{\theta=\theta^\ast}$.

Hence, we obtain that 
\begin{equation}\label{eq:asym.dist}
\sqrt{n} \left\{\hat{M}(\eta) - M(\eta)\right\} = \frac{1}{\sqrt{n}} \sum_{i=1}^{n} \left(M_{n}^\ast (\eta) - M(\eta) + H_{\alpha^\ast}^{\rm T} \phi_{\alpha, i} + H_{\beta^\ast}^{\rm T} \phi_{\beta, i} + H_{\theta^\ast}^{\rm T} \phi_{\theta, i}\right) + o_p(1).
\end{equation}

{\bf PART 2.} We prove that $n^{1/3} \|\hat{\eta} - \eta^\ast\|_2 = O_p(1)$.

First we note that, by Condition~\ref{cond.policy} (iii), $M(\eta)$ is twice continuously differentiable at a neighborhood of $\eta^\ast$. In PART 1, we show that $\hat{M}(\eta) = M(\eta) + o_p(1), \forall \eta$. Since $\hat{\eta}$ maximizes $\hat{M}(\eta)$, we have that $\hat{M}(\hat{\eta}) \geq \sup_{\eta} \hat{M}(\eta)$; thus by the Argmax theorem, we obtain that $\hat{\eta} \overset{p}{\to} \eta^\ast$ as $n \to \infty$. 

Then we apply Theorem 14.4 (Rate of convergence) of \cite{kosorok2008introduction} to establish the $n^{-1/3}$ rate of convergence of $\hat{\eta}$, and need to find the suitable rate that satisfies three conditions below.

{\bf Condition 1} For every $\eta$ in a neighborhood of $\eta^\ast$ such that $\|\eta - \eta^\ast\|_2 < \delta$, by Condition~\ref{cond.policy} (iii), we apply the second-order Taylor expansion,
\begin{align*}
M(\eta) - M(\eta^\ast) &= M'(\eta^\ast) \|\eta - \eta^\ast\|_2 + \frac{1}{2} M''(\eta^\ast) \|\eta - \eta^\ast\|_2^2 + o(\|\eta - \eta^\ast\|_2^2) \\
&= \frac{1}{2} S''(\eta^\ast)\|\eta - \eta^\ast\|_2^2 + o(\|\eta - \eta^\ast\|_2^2),
\end{align*}
and as $S''(\eta^\ast) < 0$, there exists $c_0 = -\frac{1}{2} S''(\eta^\ast) > 0$ such that $S(t;\eta) - S(t;\eta^\ast) \leq c_0 \|\eta - \eta^\ast\|_2^2$. 

{\bf Condition 2} For all $n$ large enough and sufficiently small $\delta$, we consider the centered process $\hat{M} - M$, and have that 
\begin{align*}
& E^\ast \left[ \sqrt{n} \sup_{\|\eta - \eta^\ast\|_2 < \delta} \left| \hat{M}(\eta) - M(\eta) - \left\{\hat{M}(\eta^\ast) - M(\eta^\ast)\right\} \right| \right] \\
&= E^\ast \left[ \sqrt{n} \sup_{\|\eta - \eta^\ast\|_2 < \delta} \left| \hat{M}(\eta) - M_n^\ast(\eta) + M_n^\ast(\eta) - M(\eta) - \left\{ \hat{M}(\eta^\ast) - M_n^\ast(\eta^\ast) + M_n^\ast(\eta^\ast) - M(\eta^\ast)\right\} \right| \right] \\
& \leq E^\ast \left[ \sqrt{n} \sup_{\|\eta - \eta^\ast\|_2 < \delta} \left| \hat{M}(\eta) - M_n^\ast(\eta) - \left\{ \hat{M}(\eta^\ast) - M_n^\ast(\eta^\ast) \right\} \right| \right] \\
& \quad + E^\ast \left[ \sqrt{n} \sup_{\|\eta - \eta^\ast\|_2 < \delta} \left| M_n^\ast(\eta) - M(\eta) - \left\{ M_n^\ast(\eta^\ast) - M(\eta^\ast)\right\} \right| \right] \\
&= (I) + (II),
\end{align*}
where $E^\ast (\cdot)$ denote the outer expectation, and we bound $(I)$ and $(II)$ respectively as follows.

{\bf Condition 2.1} To bound $(II)$, we note that
\begin{align*}
M_n^\ast (\eta) - M_n^\ast (\eta^\ast) &= \frac{1}{n} \sum_{i=1}^{n} \Delta^\ast (O_i) (d(X_i;\eta) - d(X_i;\eta\ast)) \\
&= \frac{1}{n} \sum_{i=1}^{n} \Delta^\ast (O_i) (I\{X_{i}^{\rm T} \eta > 0\} - I\{X_{i}^{\rm T} \eta^\ast > 0\}),
\end{align*}
where
\begin{equation*}
\Delta^\ast (o) = \frac{\delta_{Y}(x;\beta^\ast)}{\delta_{A}(x;\alpha^\ast)} + \frac{(2z - 1)(2t - 1)}{\pi(t, z, x;\theta^\ast) \delta_{A}(x;\alpha^\ast)} \left\{y - \mu_{Y}(t, z, x;\beta^\ast) - \frac{\delta_{Y}(x;\beta^\ast)}{\delta_{A}(x;\alpha^\ast)} (a - \mu_{A}(t, z, x;\alpha^\ast)) \right\}.
\end{equation*}

We define a class of functions
\begin{equation*}
\mathcal{F}_\eta^1 (o) = \left\{\Delta^\ast (o) (I\{x^{\rm T} \eta > 0\} - I\{x^{\rm T} \eta^\ast > 0\}) : \|\eta - \eta^\ast\|_2 < \delta \right\},     
\end{equation*}
and let $B_1 = \sup |\Delta^\ast (o)|$. By Assumption~\ref{asmp:posi} and Condition~\ref{cond.policy}, we have that $B_1 < \infty$. 

When $\|\eta -\eta^\ast\|_2 < \delta$, by Condition~\ref{cond.policy} (i), there exists a constant $0 < k_0 < \infty$ such that $|x^{\rm T} (\eta - \eta^\ast)| < k_0 \delta$. Furthermore, we show that $|d(x;\eta) - d(x;\eta^\ast)| = |I\{x^{\rm T} \eta > 0\} - I\{x^{\rm T} \eta^\ast > 0\}| \leq I\{-k_0\delta\leq x^{\rm T}\eta^\ast\leq k_0\delta\}$, by considering the three cases:
\begin{itemize}
\item when $-k_0\delta \leq x^{\rm T}\eta^\ast \leq k_0\delta$, we have $|d(x;\eta) - d(x;\eta^\ast)| \leq 1 = I\{-k_0\delta\leq x^{\rm T}\eta^\ast\leq k_0\delta\}$;
\item when $x^{\rm T}\eta^\ast > k_0 \delta > 0$, we have $x^{\rm T}\eta = x^{\rm T}(\eta - \eta^\ast) + x^{\rm T}\eta^\ast > 0$, so $|d(x;\eta) - d(x;\eta^\ast)| = 0 = I\{-k_0\delta\leq x^{\rm T}\eta^\ast\leq k_0\delta\}$; 
\item when $x^{\rm T}\eta^\ast < -k_0 \delta < 0$, we have $x^{\rm T}\eta = x^{\rm T}(\eta - \eta^\ast) + x^{\rm T}\eta^\ast < 0$, so $|d(x;\eta) - d(x;\eta^\ast)| = 0 = I\{-k_0\delta\leq x^{\rm T}\eta^\ast\leq k_0\delta\}$.
\end{itemize}

Thus we define the envelope of $\mathcal{F}_\eta^1$ as $F_1 = B_1 I\{-k_0 \delta \leq x^{\rm T} \eta^\ast \leq k_0 \delta\}$. By Condition~\ref{cond.policy} (iv), there exists a constant $0 < k_1 < \infty$ such that
\begin{equation*}
\|F_1\|_{P,2} \leq B_1 \sqrt{Pr(-k_0 \delta \leq x^{\rm T} \eta^\ast \leq k_0 \delta)} \leq B_1 \sqrt{2k_0 k_1} \delta^{1/2} < \infty.
\end{equation*}

By Lemma 9.6 and Lemma 9.9 of \cite{kosorok2008introduction}, we have that $\mathcal{F}_\eta^1$, a class of indicator functions, is a Vapnik-Cervonenkis (VC) class with bounded bracketing entropy $J_{[]}^\ast(1,\mathcal{F}_\eta^1) < \infty$.

Next, we note that
\begin{align*}
\mathbb{G}_n \mathcal{F}_\eta^1 &= n^{-1/2} \sum_{i=1}^{n} \left\{\mathcal{F}_\eta^1 (O_i) - E[\mathcal{F}_\eta^1 (O)] \right\} \\
&= \sqrt{n} \left(M_n^\ast(\eta) - M_n^\ast(\eta^\ast) - \left\{M(\eta) - M(\eta^\ast)\right\}\right),
\end{align*}
and by Theorem 11.2 of \cite{kosorok2008introduction}, we obtain that there exists a constant $0 < c_1 < \infty$,
\begin{equation*}
(II) = E^\ast \left[\sup_{\|\eta - \eta^\ast\|_2 < \delta} |\mathbb{G}_n \mathcal{F}_\eta^1|\right] \leq c_1 J_{[]}^\ast(1,\mathcal{F}_\eta^1) \|F_1\|_{P,2} \leq c_1 J_{[]}^\ast(1,\mathcal{F}_\eta^1)B_1\sqrt{2k_0 k_1} \delta^{1/2} = \tilde{c}_1 \delta^{1/2},
\end{equation*}
hence we conclude that $(II) \leq \tilde{c}_1 \delta^{1/2}$, where $\tilde{c}_1 > 0$ is a finite constant.

{\bf Condition 2.2} To bound $(I)$, first we note that
\begin{align*}
& \hat{M}(\eta) - M_n^\ast(\eta) - \{\hat{M}(\eta^\ast) - M_n^\ast(\eta^\ast)\} = \hat{M}(\eta) - \hat{M}(\eta^\ast) - \{M_n^\ast(\eta) - M_n^\ast(\eta^\ast)\} \\
&= \frac{1}{n} \sum_{i=1}^{n} (d(X_i;\eta) - d(X_i;\eta\ast)) (\hat{\Delta}(O_i) - \Delta^\ast (O_i)),
\end{align*}
and then apply the Taylor expansion at $(\alpha^\ast, \beta^\ast, \theta^\ast)$
\begin{equation}\label{eq:bound1}
\begin{split}
& \hat{M}(\eta) - M_n^\ast(\eta) - \{\hat{M}(\eta^\ast) - M_n^\ast(\eta^\ast)\} \\
&= \frac{1}{n} \sum_{i=1}^{n} (d(X_i;\eta) - d(X_i;\eta\ast)) \left\{\left[g_{1}^\ast (O_i) \left(\frac{\partial \delta_{A}(X_i;\alpha^\ast)}{\partial \alpha}\right)^{\rm T} + g_{2}^\ast (O_i) \left(\frac{\partial \mu_{A}(T_i, Z_i, X_i;\alpha^\ast)}{\partial \alpha}\right)^{\rm T}\right](\hat{\alpha} - \alpha^\ast) \right. \\
&\qquad \left. + \left[g_{3}^\ast (O_i) \left(\frac{\partial \delta_{Y}(X_i;\beta^\ast)}{\partial \beta}\right)^{\rm T} + g_{4}^\ast (O_i) \left(\frac{\partial \mu_{Y}(T_i, Z_i, X_i;\beta^\ast)}{\partial \beta}\right)^{\rm T}\right](\hat{\beta} - \beta^\ast) \right. \\
&\qquad \left. + g_{5}^\ast (O_i) \left(\frac{\partial \pi(T_i, Z_i, X_i;\theta^\ast)}{\partial \theta}\right)^{\rm T}(\hat{\theta} - \theta^\ast)\right\} + o_p(n^{-1/2}),
\end{split}
\end{equation}
where
\begin{equation*}
g_{1}^\ast (o) = -\frac{\delta_{Y}(x;\beta^\ast)}{\delta_{A}^{2}(x;\alpha^\ast)} - \frac{(2z - 1)(2t - 1)(y - \mu_{Y}(t, z, x;\beta^\ast))}{\pi(t, z, x;\theta^\ast) \delta_{A}^{2}(x;\alpha^\ast)} + \frac{2(2z - 1)(2t - 1)\delta_{Y}(x;\beta^\ast)}{\pi(t, z, x;\theta^\ast) \delta_{A}^{3}(x;\alpha^\ast)} (a - \mu_{A}(t, z, x;\alpha^\ast)),
\end{equation*}
\begin{equation*}
g_{2}^\ast (o) = \frac{(2z - 1)(2t - 1)\delta_{Y}(x;\beta^\ast)}{\pi(t, z, x;\theta^\ast) \delta_{A}^{2}(x;\alpha^\ast)},
\end{equation*}
\begin{equation*}
g_{3}^\ast (o) = \frac{1}{\delta_{A}^{2}(x;\alpha^\ast)} - \frac{2(2z - 1)(2t - 1)}{\pi(t, z, x;\theta^\ast) \delta_{A}^{2}(x;\alpha^\ast)} (a - \mu_{A}(t, z, x;\alpha^\ast)),
\end{equation*}
\begin{equation*}
g_{4}^\ast (o) = -\frac{(2z - 1)(2t - 1)}{\pi(t, z, x;\theta^\ast) \delta_{A}(x;\alpha^\ast)},
\end{equation*}
\begin{equation*}
g_{5}^\ast (o) = -\frac{(2z - 1)(2t - 1)}{\pi^{2}(t, z, x;\theta^\ast) \delta_{A}(x;\alpha^\ast)} \left\{y - \mu_{Y}(t, z, x;\beta^\ast) - \frac{\delta_{Y}(x;\beta^\ast)}{\delta_{A}(x;\alpha^\ast)} (a - \mu_{A}(t, z, x;\alpha^\ast)) \right\}.
\end{equation*}

Similarly, we define the following classes of functions
\begin{equation*}
\mathcal{F}_\eta^2 (o) = \left\{\left[g_{1}^\ast (o) \left(\frac{\partial \delta_{A}(x;\alpha^\ast)}{\partial \alpha}\right)^{\rm T} + g_{2}^\ast (o) \left(\frac{\partial \mu_{A}(t, z, x;\alpha^\ast)}{\partial \alpha}\right)^{\rm T}\right] (I\{x^{\rm T} \eta > 0\} - I\{x^{\rm T} \eta^\ast > 0\}) : \|\eta - \eta^\ast\|_2 < \delta \right\},     
\end{equation*}
\begin{equation*}
\mathcal{F}_\eta^3 (o) = \left\{\left[g_{3}^\ast (o) \left(\frac{\partial \delta_{Y}(x;\beta^\ast)}{\partial \beta}\right)^{\rm T} + g_{4}^\ast (o) \left(\frac{\partial \mu_{Y}(t, z, x;\beta^\ast)}{\partial \beta}\right)^{\rm T}\right] (I\{x^{\rm T} \eta > 0\} - I\{x^{\rm T} \eta^\ast > 0\}) : \|\eta - \eta^\ast\|_2 < \delta \right\},     
\end{equation*}
\begin{equation*}
\mathcal{F}_\eta^4 (o) = \left\{g_{5}^\ast (o) \left(\frac{\partial \pi(t, z, x;\theta^\ast)}{\partial \theta}\right)^{\rm T} (I\{x^{\rm T} \eta > 0\} - I\{x^{\rm T} \eta^\ast > 0\}) : \|\eta - \eta^\ast\|_2 < \delta \right\},     
\end{equation*}
and let $B_2 = \sup |g_{1}^\ast (o) \partial \delta_{A}(x;\alpha^\ast) / \partial \alpha + g_{2}^\ast (o) \partial \mu_{A}(t, z, x;\alpha^\ast) / \partial \alpha|$, $B_3 = \sup |g_{3}^\ast (o) \partial \delta_{Y}(x;\beta^\ast) / \partial \beta + g_{4}^\ast (o) \partial \mu_{Y}(t, z, x;\beta^\ast) / \partial \beta|$, and $B_4 = \sup |g_{5}^\ast (o) \partial \pi(t, z, x;\theta^\ast) / \partial \theta|$, where $B_2, B_3, B_4 > 0$ and the supremum is taken over all the coordinates. By Assumption~\ref{asmp:posi} and Condition~\ref{cond.policy}, we have that $B_2, B_3, B_4 < \infty$.

Using the same technique as in {\bf Condition 2.1}, we define the envelop of $\mathcal{F}_\eta^j$ as $F_j = B_j I\{-k_0 \delta \leq x^{\rm T} \eta^\ast \leq k_0 \delta\}$ for $j = 2, 3, 4$, and obtain that
\begin{equation*}
\|F_j\|_{P,2} \leq \tilde{B}_j \delta^{1/2} < \infty, \quad j = 2, 3, 4,
\end{equation*}
where $\tilde{B}_2, \tilde{B}_3, \tilde{B}_4$ are some finite constants, and that $\mathcal{F}_\eta^j$ is a VC class with bounded bracketing entropy $J_{[]}^\ast(1,\mathcal{F}_\eta^j) < \infty$, for $j = 2, 3, 4$. By Theorem 11.2 of \cite{kosorok2008introduction}, we obtain that
\begin{equation*}
E^\ast \left[\sup_{\|\eta - \eta^\ast\|_2<\delta} \left\|\mathbb{G}_N \mathcal{F}_\eta^j\right\|_{1} \right] \leq c_j J_{[]}^\ast(1,\mathcal{F}_\eta^j) \|F_j\|_{P,2}, \quad j = 2, 3, 4,
\end{equation*}
where $c_2, c_3, c_4 > 0$ are some finite constants. 

Furthermore, by Theorem 2.14.5 of \cite{van1996weak}, we obtain that 
\begin{align*}
\left\{E^\ast \left[\sup_{\|\eta - \eta^\ast\|_2<\delta} \|\mathbb{G}_n \mathcal{F}_\eta^j\|_2^2\right]\right\}^{1/2}
& \leq l_j \left\{E^\ast \left[\sup_{\|\eta - \eta^\ast\|_2<\delta} \|\mathbb{G}_n \mathcal{F}_\eta^j\|_{1}\right] + \|F_j\|_{P,2}\right\} && \\
& \leq l_j \{c_j J_{[]}^\ast(1,\mathcal{F}_\eta^j) + 1\} \|F_j\|_{P,2} && \\
& \leq \tilde{c}_j \delta^{1/2}, && j = 2, 3, 4,
\end{align*}
where $l_2, l_3, l_4$ and $\tilde{c}_2, \tilde{c}_3, \tilde{c}_4$ are some finite constants.

By Equation~\eqref{eq:bound1}, we have that
\begin{align*}
(I) & = E^\ast \left[n^{1/2} \sup_{\|\eta - \eta^\ast\|_2 < \delta} \left|\hat{M}(\eta) - M_n^\ast(\eta) - \{\hat{M}(\eta^\ast) - M_n^\ast(\eta^\ast)\}\right| \right] \\
& \leq E^\ast \left[\sup_{\|\eta - \eta^\ast\|_2 < \delta} \left\{|\mathbb{G}_n \mathcal{F}_\eta^2(\hat{\alpha} - \alpha^\ast)| + |\mathbb{G}_n \mathcal{F}_\eta^3(\hat{\beta} - \beta^\ast)| + |\mathbb{G}_n \mathcal{F}_\eta^4(\hat{\theta} - \theta^\ast)| + o_p(1) \right\} \right] \\
& \leq n^{-1/2} \,\left\{E^\ast \left[\sup_{\|\eta - \eta^\ast\|_2 < \delta} |\mathbb{G}_n \mathcal{F}_\eta^2 \cdot n^{1/2} (\hat{\alpha} - \alpha^\ast)|\right] + E^\ast \left[\sup_{\|\eta - \eta^\ast\|_2 < \delta}|\mathbb{G}_n \mathcal{F}_\eta^3 \cdot n^{1/2}(\hat{\beta} - \beta^\ast)|\right] \right. \\
& \left. \qquad + E^\ast \left[\sup_{\|\eta - \eta^\ast\|_2 < \delta}|\mathbb{G}_n \mathcal{F}_\eta^4 \cdot n^{1/2} (\hat{\theta} - \theta^\ast)|\right] \right\},
\end{align*}
and then by the Cauchy-Schwarz inequality, we obtain that
\begin{align*}
(I) \leq & \, n^{-1/2} \left\{E[n \|\hat{\alpha} - \alpha^\ast\|_2^2]\right\}^{1/2} \left\{E^\ast \left[\sup_{\|\eta - \eta^\ast\|_2 < \delta} \|\mathbb{G}_n \mathcal{F}_\eta^2\|_2^2\right]\right\}^{1/2} \\
& + n^{-1/2} \left\{E[n \|\hat{\beta} - \beta^\ast\|_2^2]\right\}^{1/2} \left\{E^\ast \left[\sup_{\|\eta - \eta^\ast\|_2<\delta} \|\mathbb{G}_n \mathcal{F}_\eta^3\|_2^2\right]\right\}^{1/2} \\
& + n^{-1/2} \left\{E[n \|\hat{\theta} - \theta^\ast\|_2^2]\right\}^{1/2} \left\{E^\ast \left[\sup_{\|\eta - \eta^\ast\|_2<\delta} \|\mathbb{G}_n \mathcal{F}_\eta^4\|_2^2\right]\right\}^{1/2}.
\end{align*}

By Condition~\ref{cond.para}, we have that $B_\alpha = \left\{E[n \|\hat{\alpha} - \alpha^\ast\|_2^2]\right\}^{1/2} < \infty$, $B_\beta = \left\{E[n \|\hat{\beta} - \beta^\ast\|_2^2]\right\}^{1/2} < \infty$, $B_\theta = \left\{E[n \|\hat{\theta} - \theta^\ast\|_2^2]\right\}^{1/2} < \infty$, hence
\begin{equation*}
(I) \leq n^{-1/2} (B_\alpha \tilde{c}_2 + B_\beta \tilde{c}_3 + B_{\theta} \tilde{c}_4) \delta^{1/2}.
\end{equation*}

In summary, we conclude that as $n \to \infty$, the centered process satisfies
\begin{equation}\label{eq:cond2}
E^\ast \left[\sqrt{n} \sup_{\|\eta - \eta^\ast\|_2 < \delta} \left|\hat{M}(\eta) - M(\eta) - \{\hat{M}(\eta^\ast) - M(\eta^\ast)\}\right| \right] \leq (I) + (II) \leq \tilde{c}_1 \delta^{1/2}.
\end{equation}

Let $\phi_n(\delta) = \delta^{1/2}$ and $b = \frac{3}{2} < 2$, thus we have $\frac{\phi_n(\delta)}{\delta^b} = \delta^{-1}$ is decreasing, and $b$ does not depend on $n$.

{\bf Condition 3} By the facts that $\hat{\eta} \overset{p}{\to} \eta^\ast$ as $n \to \infty$, and that $\hat{M}(\hat{\eta}) \geq \sup_{\eta} \hat{M}(\eta)$, we choose $r_n = n^{1/3}$ such that $r_n^2 \phi_n(r_n^{-1}) = n^{2/3} \phi_n(n^{-1/3}) = n^{1/2}$.

In the end, the three conditions are satisfied with $r_n = n^{1/3}$; thus we conclude that $n^{1/3} \|\hat{\eta} - \eta^\ast\|_2 = O_p(1)$, which completes the proof of (i) of Theorem~\ref{thm:asym.para}.

{\bf PART 3.} We characterize the asymptotic distribution of $\hat{M}(\hat{\eta})$. First we note that 
\begin{equation*}
\sqrt{n} \{\hat{M}(\hat{\eta}) - M(\eta^\ast)\} = \sqrt{n} \{\hat{M}(\hat{\eta}) - \hat{M}(\eta^\ast)\} + \sqrt{n}\{\hat{M}(\eta^\ast) - M(\eta^\ast)\},
\end{equation*}
and then study the two terms in two steps.

{\bf Step 3.1} To establish $\sqrt{n} \{\hat{M}(\hat{\eta}) - \hat{M}(\eta^\ast)\} = o_p(1)$, it suffices to show that $\sqrt{n} \{M(\hat{\eta}) - M(\eta^\ast)\} = o_p(1)$ and $\sqrt{n} (\hat{M}(\hat{\eta}) - \hat{M}(\eta^\ast) - \{M(\hat{\eta}) - M(\eta^\ast)\}) = o_p(1)$.

First, as $n^{1/3} \|\hat{\eta} - \eta^\ast\|_2 = O_p(1)$, we apply the second-order Taylor expansion 
\begin{align*}
\sqrt{n} \{M(\hat{\eta}) - M(\eta^\ast)\} & = \sqrt{n} \left\{M'(\eta^\ast) \|\hat{\eta} - \eta^\ast\|_2 + \frac{1}{2} M''(\eta^\ast) \|\hat{\eta} - \eta^\ast\|_2^2 + o_p(\|\hat{\eta} - \eta^\ast\|_2^2) \right\} \\
& = \sqrt{n} \left\{\frac{1}{2} M''(\eta^\ast)\|\hat{\eta} - \eta^\ast\|_2^2 + o_p(\|\hat{\eta} - \eta^\ast\|_2^2) \right\} \\
& = \sqrt{n} \left\{\frac{1}{2} M''(\eta^\ast) O_p(n^{-2/3}) + o_p(n^{-2/3})\right\} = o_p(1),
\end{align*}
which proves (ii) of Theorem~\ref{thm:asym.para}.

Next, we follow the result~\eqref{eq:cond2} obtained in \textbf{PART 2}. As $n^{1/3} \|\hat{\eta} - \eta^\ast\|_2 = O_p(1)$, there exists $\tilde{\delta} = c_5 n^{-1/3}$, where $c_5 < \infty$ is a finite constant, such that $\|\hat{\eta} - \eta^\ast\|_2 \leq \tilde{\delta}$. Therefore we have
\begin{align*}
& \sqrt{n} (\hat{M}(\hat{\eta}) - \hat{M}(\eta^\ast) - \{M(\hat{\eta}) - M(\eta^\ast)\}) \\
& \leq E^\ast \left[\sqrt{n} \sup_{\|\hat{\eta} - \eta^\ast\|_2 < \tilde{\delta}} \left|\hat{M}(\hat{\eta}) - M(\hat{\eta}) - \{\hat{M}(\eta^\ast) - M(\eta^\ast)\}\right| \right] \\
& \leq \tilde{c}_1 \tilde{\delta}^{1/2} = \tilde{c}_1 \sqrt{c_5} n^{-1/6} = o_p(1),
\end{align*}
which yields the result.

{\bf Step 3.2} To derive the asymptotic distribution of $\sqrt{n}\{\hat{M}(\eta^\ast) - M(\eta^\ast)\}$, we follow the result~\eqref{eq:asym.dist} obtained in \textbf{PART 1} and have that
\begin{equation*}
\sqrt{n} \left\{\hat{M}(\eta^\ast) - M(\eta^\ast)\right\} \overset{D}{\to} \mathcal{N}(0, \sigma_{1}^2),
\end{equation*}
where $\sigma_{1}^2 = E[(M^\ast - M + H_{\alpha^\ast}^{\rm T} \phi_{\alpha, i} + H_{\beta^\ast}^{\rm T} \phi_{\beta, i} + H_{\theta^\ast}^{\rm T} \phi_{\theta, i})^2]$. 

Therefore we obtain in the end
\begin{align*}
\sqrt{n} \{\hat{M}(\hat{\eta}) - M(\eta^\ast)\} & = \sqrt{n} \{\hat{M}(\hat{\eta}) - \hat{M}(\eta^\ast)\} + \sqrt{n} \{\hat{M}(\eta^\ast) - M(\eta^\ast)\} \\
& = o_p(1) + \sqrt{n} \{\hat{M}(\eta^\ast) - M(\eta^\ast)\} \\
& \overset{D}{\to} \mathcal{N}(0, \sigma_{1}^2),
\end{align*}
which completes the proof.

\section{Proof of Theorem~\ref{thm:asym.np}}

We first review a useful lemma from \cite{kennedy2020sharp}, which illustrates the basic technique of cross-fitting.

\begin{lemma}\label{lem:cf}
Consider two independent samples $\mathcal{O}_1 = (O_1, \ldots, O_n)$ and $\mathcal{O}_2 = (O_{n+1}, \ldots, O_{\tilde{n}})$, let $\hat{f}(o)$ be a function estimated from $\mathcal{O}_2$ and $\mathbb{P}_n$ the empirical measure over $\mathcal{O}_1$, then we have
\begin{equation*}
    (\mathbb{P}_n - \mathbb{P})(\hat{f} - f) = O_{\mathbb{P}}\left(\frac{\|\hat{f} - f\|}{\sqrt{n}}\right)
\end{equation*}
\end{lemma}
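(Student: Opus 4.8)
The plan is to exploit the sample splitting by conditioning on the auxiliary sample $\mathcal{O}_2$ from which $\hat{f}$ is constructed. Write $g = \hat{f} - f$, and recall that $\mathbb{P}_n$ is the empirical measure over $\mathcal{O}_1$, which is independent of $\mathcal{O}_2$. Conditionally on $\mathcal{O}_2$, the estimate $\hat{f}$ is a fixed (non-random) function, so $g$ is deterministic and
\begin{equation*}
(\mathbb{P}_n - \mathbb{P}) g = \frac{1}{n} \sum_{i=1}^{n} \left\{ g(O_i) - \mathbb{P} g \right\}
\end{equation*}
is an average of i.i.d. mean-zero terms, since the $O_i \in \mathcal{O}_1$ remain i.i.d. draws from $P$ given $\mathcal{O}_2$.

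First I would compute the conditional moments. By the independence of the two samples, $E\left[(\mathbb{P}_n - \mathbb{P}) g \mid \mathcal{O}_2\right] = 0$, and
\begin{equation*}
\mathrm{Var}\left[ (\mathbb{P}_n - \mathbb{P}) g \mid \mathcal{O}_2 \right] = \frac{1}{n} \mathrm{Var}_P\left[ g(O) \mid \mathcal{O}_2 \right] \leq \frac{1}{n} E_P\left[ g(O)^2 \mid \mathcal{O}_2 \right] = \frac{\|\hat{f} - f\|^2}{n},
\end{equation*}
where $\|\cdot\|$ denotes the $L_2(P)$ norm and the last equality uses that $\hat{f}$ is fixed given $\mathcal{O}_2$.

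Next I would turn the variance bound into a tightness statement via a conditional Chebyshev inequality: for any $M > 0$,
\begin{equation*}
\Pr\left( \frac{\sqrt{n}\, \left| (\mathbb{P}_n - \mathbb{P}) g \right|}{\|\hat{f} - f\|} > M \,\Big|\, \mathcal{O}_2 \right) \leq \frac{\mathrm{Var}\left[ (\mathbb{P}_n - \mathbb{P}) g \mid \mathcal{O}_2 \right]}{M^2 \|\hat{f} - f\|^2 / n} \leq \frac{1}{M^2}.
\end{equation*}
The key feature is that this bound is uniform in $\mathcal{O}_2$; taking expectation over $\mathcal{O}_2$ therefore removes the conditioning and gives the same $1/M^2$ bound for the unconditional probability. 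As this can be made arbitrarily small by enlarging $M$, we conclude that $\sqrt{n}\,(\mathbb{P}_n - \mathbb{P})(\hat{f} - f) / \|\hat{f} - f\| = O_{\mathbb{P}}(1)$, which is exactly the claim $(\mathbb{P}_n - \mathbb{P})(\hat{f} - f) = O_{\mathbb{P}}\left( \|\hat{f} - f\| / \sqrt{n} \right)$.

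The one point requiring care --- and the main, albeit mild, obstacle --- is that the normalizing quantity $\|\hat{f} - f\|$ is itself random, so the conclusion is a ratio-tightness result rather than a bound on a deterministic sequence. This is precisely why the argument is organized conditionally: once the Chebyshev bound is shown to be free of $\mathcal{O}_2$, the randomness in both $\hat{f}$ and $\|\hat{f} - f\|$ integrates out with no extra effort. I would also dispose of the trivial edge case $\|\hat{f} - f\| = 0$ separately, on which $(\mathbb{P}_n - \mathbb{P})(\hat{f} - f)$ vanishes and the claim holds automatically.
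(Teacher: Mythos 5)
Your proposal is correct and follows essentially the same route as the paper's proof: conditioning on $\mathcal{O}_2$, bounding the conditional variance by $\|\hat{f}-f\|^2/n$, applying Chebyshev's inequality conditionally, and integrating out the conditioning. If anything, your write-up is slightly more careful than the paper's (you state the Chebyshev normalization as $\|\hat{f}-f\|/\sqrt{n}$ rather than the paper's typo $\|\hat{f}-f\|^2/n$, and you handle the degenerate case $\|\hat{f}-f\|=0$), but the underlying argument is identical.
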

\begin{proof}
First note that by conditioning on $\mathcal{O}_2$ we obtain
\begin{equation*}
    \mathbb{E} \left\{\mathbb{P}_n(\hat{f} - f) \,\big|\, \mathcal{O}_2\right\} = \mathbb{E} (\hat{f} - f \,|\, \mathcal{O}_2) = \mathbb{P}(\hat{f} - f)
\end{equation*}
and the conditional variance is
\begin{equation*}
    var\{(\mathbb{P}_n - \mathbb{P})(\hat{f} - f) \,|\, \mathcal{O}_2\} = var\{\mathbb{P}_n (\hat{f} - f) \,|\, \mathcal{O}_2\} = \frac{1}{n}var(\hat{f} - f \,|\, \mathcal{O}_2) \leq \|\hat{f} - f\|^2 / n
\end{equation*}
therefore by Chebyshev's inequality we have 
\begin{equation*}
    \mathbb{P}\left\{\frac{|(\mathbb{P}_n - \mathbb{P})(\hat{f} - f)|}{\|\hat{f} - f\|^2 / n} \geq t \right\} = \mathbb{E}\left[\mathbb{P}\left\{\frac{|(\mathbb{P}_n - \mathbb{P})(\hat{f} - f)|}{\|\hat{f} - f\|^2 / n} \geq t \,\bigg|\, \mathcal{O}_2\right\}\right] \leq \frac{1}{t^2}
\end{equation*}
thus for any $\epsilon > 0$ we can pick $t = 1 / \sqrt{\epsilon}$ so that the probability above is no more than $\epsilon$, which yields the result.
\end{proof}

We randomly split data into $K$ folds. For $k = 1, \ldots, K$, 
\begin{equation*}
\hat{M}(\eta) = \frac{1}{K} \sum_{k=1}^{K} \hat{M}_{k}(\eta) = \frac{1}{K} \sum_{k=1}^{K} P_{n,k} \{\Delta (O; \hat{\mu}_{A,-k}, \hat{\mu}_{Y,-k}, \hat{\pi}_{-k}) d(X)\},
\end{equation*}
where $P_{n,k}$ denote empirical averages only over the $k$-th fold, and $\hat{\mu}_{A,-k}$, $\hat{\mu}_{Y,-k}$ and $\hat{\pi}_{-k}$ denote the nuisance estimators constructed excluding the $k$-th fold.

Now we start our proof which has three main parts as follows.

{\bf PART 1.} We prove that $\hat{M}(\eta) - M_{n}(\eta) = o_p (n^{-1/2})$, where $M_{n}(\eta) = P_{n} \{\Delta(O) d(X,\eta)\}$. Essentially it suffices to prove that $\hat{M}_{k}(\eta) - M_{n,k}(\eta) = o_p (n^{-1/2})$, where $M_{n,k}(\eta) = P_{n,k} \{\Delta(O) d(X,\eta)\}$.

First we note the following decomposition
\begin{align*}
& \hat{M}_{k}(\eta) - M_{n,k}(\eta) \\
&= P_{n,k} \, d(\eta) \left\{ \frac{\hat{\delta}_{Y,-k}}{\hat{\delta}_{A,-k}} - \frac{\delta_{Y}}{\delta_{A}} + (2Z - 1) (2T - 1) \left[\left(\frac{1}{\hat{\pi}_{-k}} - \frac{1}{\pi}\right) \left(\frac{1}{\hat{\delta}_{A,-k}} - \frac{1}{\delta_{A}}\right) \left(Y - \hat{\mu}_{Y,-k} - \frac{\hat{\delta}_{Y,-k}}{\hat{\delta}_{A,-k}}(A - \hat{\mu}_{A,-k})\right) \right.\right.\\
&\left.\left. \qquad + \frac{1}{\delta_{A}} \left(\frac{1}{\hat{\pi}_{-k}} - \frac{1}{\pi}\right) G_1 + \frac{1}{\pi} \left(\frac{1}{\hat{\delta}_{A,-k}} - \frac{1}{\delta_{A}}\right) G_1 + \frac{1}{\pi \delta_{A}} G_2 \right.\right.\\
&\left.\left. \qquad + \frac{1}{\delta_{A}} \left(\frac{1}{\hat{\pi}_{-k}} - \frac{1}{\pi}\right) \left(Y - \mu_{Y} - \frac{\delta_{Y}}{\delta_{A}}(A - \mu_{A})\right) + \frac{1}{\pi} \left(\frac{1}{\hat{\delta}_{A,-k}} - \frac{1}{\delta_{A}}\right) \left(Y - \mu_{Y} - \frac{\delta_{Y}}{\delta_{A}}(A - \mu_{A})\right) \right.\right.\\
&\left.\left. \qquad + \frac{1}{\pi \delta_{A}} \left(\mu_{Y} - \hat{\mu}_{Y,-k} - \frac{1}{\delta_{A}}(\hat{\delta}_{Y,-k} - \delta_{Y})(A - \mu_{A}) - \delta_{Y} \left(\frac{1}{\hat{\delta}_{A,-k}} - \frac{1}{\delta_{A}}\right)(A - \mu_{A}) + \frac{\delta_{Y}}{\delta_{A}} (\hat{\mu}_{A,-k} - \mu_{A}) \right)\right]\right\},
\end{align*}
where we omit the arguments of the nuisance functions to simplify the notation, and denote
\begin{align*}
G_1 &= \mu_{Y} - \hat{\mu}_{Y,-k} - (\hat{\delta}_{Y,-k} - \delta_{Y}) \left(\frac{1}{\hat{\delta}_{A,-k}} - \frac{1}{\delta_{A}}\right) (A - \mu_{A}) - \frac{1}{\delta_{A}} (\hat{\delta}_{Y,-k} - \delta_{Y}) (A - \mu_{A}) \\
&\qquad + \frac{1}{\delta_{A}} (\hat{\delta}_{Y,-k} - \delta_{Y}) (\hat{\mu}_{A,-k} - \mu_{A}) - \delta_{Y} \left(\frac{1}{\hat{\delta}_{A,-k}} - \frac{1}{\delta_{A}}\right) (A - \mu_{A}) \\
&\qquad + \delta_{Y} \left(\frac{1}{\hat{\delta}_{A,-k}} - \frac{1}{\delta_{A}}\right) (\hat{\mu}_{A,-k} - \mu_{A}) + \frac{\delta_{Y}}{\delta_{A}} (\hat{\mu}_{A,-k} - \mu_{A}),
\end{align*}
\begin{equation*}
G_2 = \frac{\hat{\delta}_{Y,-k} - \delta_{Y}}{\delta_{A}} (\hat{\mu}_{A,-k} - \mu_{A}) + \delta_{Y} \left(\frac{1}{\hat{\delta}_{A,-k}} - \frac{1}{\delta_{A}}\right) (\hat{\mu}_{A,-k} - \mu_{A}) - (\hat{\delta}_{Y,-k} - \delta_{Y}) \left(\frac{1}{\hat{\delta}_{A,-k}} - \frac{1}{\delta_{A}}\right) (A - \mu_{A}).
\end{equation*}

In summary, we have two types of terms from this decomposition: product terms and mean zero terms (by multiple robustness). The product terms are $o_p (n^{-1/2})$ by Cauchy-Schwarz inequality and Condition~\ref{cond.np} (rate of convergence). The mean zero terms are $o_p (n^{-1/2})$ by Lemma~\ref{lem:cf}.

{\bf PART 2.} We prove that $n^{1/3} \|\hat{\eta} - \eta^\ast\|_2 = O_p(1)$.

First we note that, by Condition~\ref{cond.policy} (iii), $M(\eta)$ is twice continuously differentiable at a neighborhood of $\eta^\ast$. In PART 1, we show that $\hat{M}(\eta) = M(\eta) + o_p(1), \forall \eta$. Since $\hat{\eta}$ maximizes $\hat{M}(\eta)$, we have that $\hat{M}(\hat{\eta}) \geq \sup_{\eta} \hat{M}(\eta)$; thus by the Argmax theorem, we obtain that $\hat{\eta} \overset{p}{\to} \eta^\ast$ as $n \to \infty$. 

Then we apply Theorem 14.4 (Rate of convergence) of \cite{kosorok2008introduction} to establish the $n^{-1/3}$ rate of convergence of $\hat{\eta}$, and need to find the suitable rate that satisfies three conditions below.

{\bf Condition 1} For every $\eta$ in a neighborhood of $\eta^\ast$ such that $\|\eta - \eta^\ast\|_2 < \delta$, by Condition~\ref{cond.policy} (iii), we apply the second-order Taylor expansion,
\begin{align*}
M(\eta) - M(\eta^\ast) &= M'(\eta^\ast) \|\eta - \eta^\ast\|_2 + \frac{1}{2} M''(\eta^\ast) \|\eta - \eta^\ast\|_2^2 + o(\|\eta - \eta^\ast\|_2^2) \\
&= \frac{1}{2} S''(\eta^\ast)\|\eta - \eta^\ast\|_2^2 + o(\|\eta - \eta^\ast\|_2^2),
\end{align*}
and as $S''(\eta^\ast) < 0$, there exists $c_0 = -\frac{1}{2} S''(\eta^\ast) > 0$ such that $S(t;\eta) - S(t;\eta^\ast) \leq c_0 \|\eta - \eta^\ast\|_2^2$. 

{\bf Condition 2} For all $n$ large enough and sufficiently small $\delta$, we consider the centered process $\hat{M} - M$, and have that 
\begin{align*}
& E^\ast \left[ \sqrt{n} \sup_{\|\eta - \eta^\ast\|_2 < \delta} \left| \hat{M}(\eta) - M(\eta) - \left\{\hat{M}(\eta^\ast) - M(\eta^\ast)\right\} \right| \right] \\
&= E^\ast \left[ \sqrt{n} \sup_{\|\eta - \eta^\ast\|_2 < \delta} \left| \hat{M}(\eta) - M_n^\ast(\eta) + M_n^\ast(\eta) - M(\eta) - \left\{ \hat{M}(\eta^\ast) - M_n^\ast(\eta^\ast) + M_n^\ast(\eta^\ast) - M(\eta^\ast)\right\} \right| \right] \\
& \leq E^\ast \left[ \sqrt{n} \sup_{\|\eta - \eta^\ast\|_2 < \delta} \left| \hat{M}(\eta) - M_n^\ast(\eta) - \left\{ \hat{M}(\eta^\ast) - M_n^\ast(\eta^\ast) \right\} \right| \right] \\
& \quad + E^\ast \left[ \sqrt{n} \sup_{\|\eta - \eta^\ast\|_2 < \delta} \left| M_n^\ast(\eta) - M(\eta) - \left\{ M_n^\ast(\eta^\ast) - M(\eta^\ast)\right\} \right| \right] \\
&= (I) + (II),
\end{align*}
where $E^\ast (\cdot)$ denote the outer expectation, and we bound $(I)$ and $(II)$ respectively as follows.

It follows from the result in \textbf{PART 1} that $(I) = o_p(1)$.

To bound $(II)$, we note that
\begin{align*}
M_n^\ast (\eta) - M_n^\ast (\eta^\ast) &= \frac{1}{n} \sum_{i=1}^{n} \Delta^\ast (O_i) (d(X_i;\eta) - d(X_i;\eta\ast)) \\
&= \frac{1}{n} \sum_{i=1}^{n} \Delta^\ast (O_i) (I\{X_{i}^{\rm T} \eta > 0\} - I\{X_{i}^{\rm T} \eta^\ast > 0\}),
\end{align*}
where
\begin{equation*}
\Delta^\ast (o) = \frac{\delta_{Y}(x)}{\delta_{A}(x)} + \frac{(2z - 1)(2t - 1)}{\pi(t, z, x) \delta_{A}(x)} \left\{y - \mu_{Y}(t, z, x) - \frac{\delta_{Y}(x)}{\delta_{A}(x)} (a - \mu_{A}(t, z, x)) \right\}.
\end{equation*}

We define a class of functions
\begin{equation*}
\mathcal{F}_\eta^5 (o) = \left\{\Delta^\ast (o) (I\{x^{\rm T} \eta > 0\} - I\{x^{\rm T} \eta^\ast > 0\}) : \|\eta - \eta^\ast\|_2 < \delta \right\},     
\end{equation*}
and let $B_5 = \sup |\Delta^\ast (o)|$. By Assumption~\ref{asmp:posi} and Condition~\ref{cond.policy}, we have that $B_5 < \infty$. 

Using the same technique as in Section {\bf Condition 2.1}, we define the envelop of $\mathcal{F}_\eta^5$ as $F_5 = B_5 I\{-k_0 \delta \leq x^T \eta^\ast \leq k_0 \delta\}$, and obtain that $\|F_5\|_{P,2} \leq \tilde{B}_9 \delta^{1/2} < \infty$, where $\tilde{B}_9$ is a finite constant, and that $\mathcal{F}_\eta^5$ is a VC class with bounded entropy $J_{[]}^\ast(1,\mathcal{F}_\eta^5) < \infty$. By Theorem 11.2 of \cite{kosorok2008introduction}, we obtain that there exists a constant $0 < c_6 < \infty$,
\begin{equation*}
(II) = E^\ast \left[\sup_{\|\eta - \eta^\ast\|_2 < \delta} |\mathbb{G}_n \mathcal{F}_\eta^5|\right] \leq c_6 J_{[]}^\ast(1,\mathcal{F}_\eta^5) \|F_5\|_{P,2} \leq c_6 J_{[]}^\ast(1,\mathcal{F}_\eta^5)B_5\sqrt{2k_0 k_1} \delta^{1/2} = \tilde{c}_5 \delta^{1/2}.
\end{equation*}

In summary, we conclude that as $n \to \infty$, the centered process satisfies
\begin{equation}\label{eq:cond2.np}
E^\ast \left[\sqrt{n} \sup_{\|\eta - \eta^\ast\|_2 < \delta} \left|\hat{M}(\eta) - M(\eta) - \{\hat{M}(\eta^\ast) - M(\eta^\ast)\}\right| \right] \leq (I) + (II) \leq \tilde{c}_5 \delta^{1/2}.
\end{equation}

Let $\phi_n(\delta) = \delta^{1/2}$ and $b = \frac{3}{2} < 2$, thus we have $\frac{\phi_n(\delta)}{\delta^b} = \delta^{-1}$ is decreasing, and $b$ does not depend on $n$.

{\bf Condition 3} By the facts that $\hat{\eta} \overset{p}{\to} \eta^\ast$ as $n \to \infty$, and that $\hat{M}(\hat{\eta}) \geq \sup_{\eta} \hat{M}(\eta)$, we choose $r_n = n^{1/3}$ such that $r_n^2 \phi_n(r_n^{-1}) = n^{2/3} \phi_n(n^{-1/3}) = n^{1/2}$.

In the end, the three conditions are satisfied with $r_n = n^{1/3}$; thus we conclude that $n^{1/3} \|\hat{\eta} - \eta^\ast\|_2 = O_p(1)$, which completes the proof of (i) of Theorem~\ref{thm:asym.np}.

{\bf PART 3.} We characterize the asymptotic distribution of $\hat{M}(\hat{\eta})$. First we note that 
\begin{equation*}
\sqrt{n} \{\hat{M}(\hat{\eta}) - M(\eta^\ast)\} = \sqrt{n} \{\hat{M}(\hat{\eta}) - \hat{M}(\eta^\ast)\} + \sqrt{n}\{\hat{M}(\eta^\ast) - M(\eta^\ast)\},
\end{equation*}
and then study the two terms in two steps.

{\bf Step 3.1} To establish $\sqrt{n} \{\hat{M}(\hat{\eta}) - \hat{M}(\eta^\ast)\} = o_p(1)$, it suffices to show that $\sqrt{n} \{M(\hat{\eta}) - M(\eta^\ast)\} = o_p(1)$ and $\sqrt{n} (\hat{M}(\hat{\eta}) - \hat{M}(\eta^\ast) - \{M(\hat{\eta}) - M(\eta^\ast)\}) = o_p(1)$.

First, as $n^{1/3} \|\hat{\eta} - \eta^\ast\|_2 = O_p(1)$, we apply the second-order Taylor expansion 
\begin{align*}
\sqrt{n} \{M(\hat{\eta}) - M(\eta^\ast)\} & = \sqrt{n} \left\{M'(\eta^\ast) \|\hat{\eta} - \eta^\ast\|_2 + \frac{1}{2} M''(\eta^\ast) \|\hat{\eta} - \eta^\ast\|_2^2 + o_p(\|\hat{\eta} - \eta^\ast\|_2^2) \right\} \\
& = \sqrt{n} \left\{\frac{1}{2} M''(\eta^\ast)\|\hat{\eta} - \eta^\ast\|_2^2 + o_p(\|\hat{\eta} - \eta^\ast\|_2^2) \right\} \\
& = \sqrt{n} \left\{\frac{1}{2} M''(\eta^\ast) O_p(n^{-2/3}) + o_p(n^{-2/3})\right\} = o_p(1),
\end{align*}
which proves (ii) of Theorem~\ref{thm:asym.np}.

Next, we follow the result~\eqref{eq:cond2.np} obtained in \textbf{PART 2}. As $n^{1/3} \|\hat{\eta} - \eta^\ast\|_2 = O_p(1)$, there exists $\tilde{\delta} = c_7 n^{-1/3}$, where $c_7 < \infty$ is a finite constant, such that $\|\hat{\eta} - \eta^\ast\|_2 \leq \tilde{\delta}$. Therefore we have
\begin{align*}
& \sqrt{n} (\hat{M}(\hat{\eta}) - \hat{M}(\eta^\ast) - \{M(\hat{\eta}) - M(\eta^\ast)\}) \\
& \leq E^\ast \left[\sqrt{n} \sup_{\|\hat{\eta} - \eta^\ast\|_2 < \tilde{\delta}} \left|\hat{M}(\hat{\eta}) - M(\hat{\eta}) - \{\hat{M}(\eta^\ast) - M(\eta^\ast)\}\right| \right] \\
& \leq \tilde{c}_5 \tilde{\delta}^{1/2} = \tilde{c}_5 \sqrt{c_7} n^{-1/6} = o_p(1),
\end{align*}
which yields the result.

{\bf Step 3.2} To derive the asymptotic distribution of $\sqrt{n}\{\hat{M}(\eta^\ast) - M(\eta^\ast)\}$, we follow the result obtained in \textbf{PART 1} and have that
\begin{equation*}
\sqrt{n} \left\{\hat{M}(\eta^\ast) - M(\eta^\ast)\right\} \overset{D}{\to} \mathcal{N}(0, \sigma_{2}^2),
\end{equation*}
where $\sigma_{2}^2 = E[(\Delta(O_{i}) d(X_{i};\eta^\ast) - M(\eta^\ast))^2]$. 

Therefore we obtain in the end
\begin{align*}
\sqrt{n} \{\hat{M}(\hat{\eta}) - M(\eta^\ast)\} & = \sqrt{n} \{\hat{M}(\hat{\eta}) - \hat{M}(\eta^\ast)\} + \sqrt{n} \{\hat{M}(\eta^\ast) - M(\eta^\ast)\} \\
& = o_p(1) + \sqrt{n} \{\hat{M}(\eta^\ast) - M(\eta^\ast)\} \\
& \overset{D}{\to} \mathcal{N}(0, \sigma_{2}^2),
\end{align*}
which completes the proof.

\section{Proof of Theorem~\ref{thm:panel.cate} and \ref{thm:panel.policy}}
\label{sm.sec:panel.idid}

We first prove the identification result.

First we note that
\begin{align*}
& \delta_{Y,1}(X) - \delta_{Y,0}(X) = E[Y_{1} - Y_{0} \mid X, Z = 1] - E[Y_{1} - Y_{0} \mid X, Z = 0] \\
&= \sum_{z=0,1} (2z - 1) E[Y_{1} - Y_{0} \mid X, Z = z] \\
&= \sum_{z=0,1} (2z - 1) E[Y_{1}(A_{1}(z)) - Y_{0}(A_{0}(z)) \mid X, Z = z] \\
&= \sum_{z=0,1} (2z - 1) E[A_{1}(z) Y_{1}(1) + (1 - A_{1}(z)) Y_{1}(0) - A_{0}(z) Y_{0}(1) - (1 - A_{0}(z)) Y_{0}(0) \mid Z = z, X] \\
&= \sum_{z=0,1} (2z - 1) E[A_{1}(z) (Y_{1}(1) - Y_{1}(0)) - A_{0}(z) (Y_{0}(1) - Y_{0}(0)) + Y_{1}(0) - Y_{0}(0) \mid Z = z, X] \\
&= \sum_{z=0,1} (2z - 1) (E[A_{1}(z) (Y_{1}(1) - Y_{1}(0)) \mid X, Z = z] - E[A_{0}(z) (Y_{0}(1) - Y_{0}(0)) \mid X, Z = z] \\
&\qquad\qquad\qquad\qquad + E[Y_{1}(0) - Y_{0}(0) \mid X, Z = z]) \\
&= \sum_{z=0,1} (2z - 1) (E[A_{1}(z) (Y_{1}(1) - Y_{1}(0)) \mid X, Z = z] - E[A_{0}(z) (Y_{0}(1) - Y_{0}(0)) \mid X, Z = z] \\
&= \sum_{z=0,1} (2z - 1) (E[A_{1}(z) (Y_{1}(1) - Y_{1}(0)) \mid X] - E[A_{0}(z) (Y_{0}(1) - Y_{0}(0)) \mid X] \\
&= E[(A_{1}(1) - A_{1}(0)) (Y_{1}(1) - Y_{1}(0)) \mid X] - E[(A_{0}(1) - A_{0}(0)) (Y_{0}(1) - Y_{0}(0)) \mid X] \\
&= E[A_{1}(1) - A_{1}(0) \mid X] \tau(X) - E[A_{0}(1) - A_{0}(0) \mid X] \tau(X) \\
&= E[A_{1}(1) - A_{1}(0) - A_{0}(1) + A_{0}(0) \mid X] \tau(X).
\end{align*}

We also note that 
\begin{align*}
& \delta_{A,1}(X) - \delta_{A,0}(X) = E[A_{1} - A_{0} \mid X, Z = 1] - E[A_{1} - A_{0} \mid X, Z = 0] \\
&= \sum_{z=0,1} (2z - 1) E[A_{1} - A_{0} \mid X, Z = z] \\
&= \sum_{z=0,1} (2z - 1) E[A_{1}(z) - A_{0}(z) \mid X, Z = z] \\
&= E[A_{1}(1) - A_{1}(0) - A_{0}(1) + A_{0}(0) \mid X].
\end{align*}

Combining the above derivations, we obtain that $\delta_{Y,1}(X) - \delta_{Y,0}(X) = (\delta_{A,1}(X) - \delta_{A,0}(X)) \tau(X)$. That is, the CATE is identified by
\begin{equation*}
\tau(X) = \frac{\delta_{Y,1}(X) - \delta_{Y,0}(X)}{\delta_{A,1}(X) - \delta_{A,0}(X)}.
\end{equation*}

Alternatively, we consider the following assumptions: (sequential ignorability) $Y_{t}(a) \perp A_{t} \mid U, X, Z$ for $t, a = 0, 1$, and there is no additive interaction of either {\rm (i)} $E[A_{1} - A_{0} \mid X, U, Z = 1] - E[A_{1} - A_{0} \mid X, U, Z = 0] = E[A_{1} - A_{0} \mid X, Z = 1] - E[A_{1} - A_{0} \mid X, Z = 0]$ or {\rm (ii)} $E[Y_{t}(1) - Y_{t}(0) \mid U, X] = E[Y_{t}(1) - Y_{t}(0) \mid X]$ for $t = 0, 1$.

We can continue that 
\begin{align*}
& \delta_{Y,1}(X) - \delta_{Y,0}(X) \\
&= \sum_{z=0,1} (2z - 1) (E[A_{1}(z) (Y_{1}(1) - Y_{1}(0)) \mid X, Z = z] - E[A_{0}(z) (Y_{0}(1) - Y_{0}(0)) \mid X, Z = z]) \\
&= E_{U} \sum_{z=0,1} (2z - 1) (E[A_{1}(z) (Y_{1}(1) - Y_{1}(0)) \mid X, U, Z = z] - E[A_{0}(z) (Y_{0}(1) - Y_{0}(0)) \mid X, U, Z = z] \\
&= E_{U} \sum_{z=0,1} (2z - 1) (E[A_{1}(z) \mid X, U, Z = z] E[Y_{1}(1) - Y_{1}(0) \mid X, U, Z = z] \\
&\qquad\qquad\qquad\qquad\quad - E[A_{0}(z) \mid X, U, Z = z] E[Y_{0}(1) - Y_{0}(0) \mid X, U, Z = z]) \\
&= E_{U} [E[Y_{t}(1) - Y_{t}(0) \mid U, X] (E[A_{1} - A_{0} \mid X, U, Z = 1] - E[A_{1} - A_{0} \mid X, U, Z = 0])].
\end{align*}

Under Assumption (i), we have that
\begin{align*}
& E[A_{1} - A_{0} \mid X, U, Z = 1] - E[A_{1} - A_{0} \mid X, U, Z = 0] \\
&= E[A_{1} - A_{0} \mid X, Z = 1] - E[A_{1} - A_{0} \mid X, Z = 0] \\
&= \delta_{A,1}(X) - \delta_{A,0}(X);
\end{align*}
or under Assumption (ii), we have that
\begin{equation*}
E[Y_{t}(1) - Y_{t}(0) \mid U, X] = E[Y_{t}(1) - Y_{t}(0) \mid X], t = 0, 1,
\end{equation*}
and also
\begin{equation*}
E_{U} [E[A_{1} - A_{0} \mid X, U, Z = 1] - E[A_{1} - A_{0} \mid X, U, Z = 0]] = \delta_{A,1}(X) - \delta_{A,0}(X).
\end{equation*}

Hence combining the above derivations, we obtain the same identification results.

Next, we derive the efficient influence function.

For a given distribution $P$ in the nonparametric statistical model $\mathcal{M}$, we let $p$ denote the density of $P$ with respect to some dominating measure $\nu$. For all bounded $h \in L_{2}(P)$, define the parametric submodel $p_{\epsilon} = (1 + \epsilon h) p$, which is valid for small enough $\epsilon$ and has score $h$ at $\epsilon = 0$.

We study the following statistical functional
\begin{equation*}
\Psi (P) = E_{P} \left[\frac{E_{P}[Y_{1} - Y_{0} \mid X = x, Z = 1] - E_{P}[Y_{1} - Y_{0} \mid X = x, Z = 0]}{E_{P}[A_{1} - A_{0} \mid X = x, Z = 1] - E_{P}[A_{1} - A_{0} \mid X = x, Z = 0]}\right],
\end{equation*}
and would establish that $\Psi (P)$ is pathwise differentiable with respect to $\mathcal{M}$ at $P$ with efficient influence function $\phi_{P}$ if we have that for any $P \in \mathcal{M}$
\begin{equation*}
\frac{\partial}{\partial\epsilon} \Psi (P_{\epsilon}) \bigg|_{\epsilon = 0} = \int \phi_{P} (o) h(o) dP(o).
\end{equation*}

We denote $\delta_{Y,z,\epsilon}(x) = E_{P_{\epsilon}}[Y_1 - Y_0 \mid X = x, Z = z]$, $\delta_{A,z,\epsilon}(x) = E_{P_{\epsilon}}[A_1 - A_0 \mid X = x, Z = z]$, $S = \partial \log p_{\epsilon} / \partial\epsilon$, and compute
\begin{align*}
\frac{\partial}{\partial\epsilon} \Psi (P_{\epsilon}) \bigg|_{\epsilon = 0} &= \frac{\partial}{\partial\epsilon} E_{P_{\epsilon}}\left[\frac{\delta_{Y,1,\epsilon}(X) - \delta_{Y,0,\epsilon}(X)}{\delta_{A,1,\epsilon}(X) - \delta_{A,0,\epsilon}(X)}\right] \bigg|_{\epsilon = 0} \\
&= \frac{\partial}{\partial\epsilon} E_{P}\left[(1 + \epsilon S) \frac{\delta_{Y,1,\epsilon}(X) - \delta_{Y,0,\epsilon}(X)}{\delta_{A,1,\epsilon}(X) - \delta_{A,0,\epsilon}(X)}\right] \bigg|_{\epsilon = 0} \\
&= E_{P}\left[S \frac{\delta_{Y,1}(X) - \delta_{Y,0}(X)}{\delta_{A,1}(X) - \delta_{A,0}(X)}\right] \\
&\quad + E_{P}\left[\frac{1}{\delta_{A,1}(X) - \delta_{A,0}(X)} \left(\frac{\partial}{\partial\epsilon} \delta_{Y,1,\epsilon}(X) \bigg|_{\epsilon = 0} - \frac{\partial}{\partial\epsilon} \delta_{Y,0,\epsilon}(X) \bigg|_{\epsilon = 0}\right)\right] \\
&\quad - E_{P}\left[\frac{\delta_{Y,1}(X) - \delta_{Y,0}(X)}{\{\delta_{A,1}(X) - \delta_{A,0}(X)\}^2}\left(\frac{\partial}{\partial\epsilon} \delta_{A,1,\epsilon}(X) \bigg|_{\epsilon = 0} - \frac{\partial}{\partial\epsilon} \delta_{A,0,\epsilon}(X) \bigg|_{\epsilon = 0}\right)\right].
\end{align*}

Then we need to compute
\begin{align*}
\frac{\partial}{\partial\epsilon} \delta_{Y,z,\epsilon}(X) \bigg|_{\epsilon = 0} &= \frac{\partial}{\partial\epsilon} E_{P_{\epsilon}}[Y_1 - Y_0 \mid X, Z = z] \bigg|_{\epsilon = 0} \\
&= \frac{\partial}{\partial\epsilon} \frac{\delta_{Y,z}(X) + \epsilon E_{P}[S (Y_1 - Y_0) \mid X, Z = z]}{1 + \epsilon E_{P}[S \mid X, Z = z]} \bigg|_{\epsilon = 0} \\
&= E_{P}[S (Y_1 - Y_0) \mid X, Z = z] - \delta_{Y,z}(X) E_{P}[S \mid X, Z = z] \\
&= E_{P}\left[S \frac{(Y_1 - Y_0 - \delta_{Y,z}(X)) I\{Z = z\}}{z \pi_{Z} (X) + (1-z)(1 - \pi_{Z} (X))} \mid X \right],
\end{align*}
and 
\begin{align*}
\frac{\partial}{\partial\epsilon} \delta_{A,z,\epsilon}(X) \bigg|_{\epsilon = 0} &= \frac{\partial}{\partial\epsilon} E_{P_{\epsilon}}[A_1 - A_0 \mid X, Z = z] \bigg|_{\epsilon = 0} \\
&= \frac{\partial}{\partial\epsilon} \frac{\delta_{A,z}(X) + \epsilon E_{P}[S (A_1 - A_0) \mid X, Z = z]}{1 + \epsilon E_{P}[S \mid X, Z = z]} \bigg|_{\epsilon = 0} \\
&= E_{P}[S (A_1 - A_0) \mid X, Z = z] - \delta_{A,z}(X) E_{P}[S \mid X, Z = z] \\
&= E_{P}\left[S \frac{(A_1 - A_0 - \delta_{A,z}(X)) I\{Z = z\}}{z \pi_{Z} (X) + (1-z)(1 - \pi_{Z} (X))} \mid X \right].
\end{align*}

In summary, we obtain the efficient influence function
\begin{align*}
\phi_{P}(O) &= \frac{E[Y_{1} - Y_{0} \mid X, Z = 1] - E[Y_{1} - Y_{0} \mid X, Z = 0]}{E[A_{1} - A_{0} \mid X, Z = 1] - E[A_{1} - A_{0} \mid X, Z = 0]} \\
&\quad + \frac{Z - \pi_{Z}(X)}{\pi_{Z}(X) (1 - \pi_{Z}(X)) (\delta_{A,1}(X) - \delta_{A,0}(X))^2} \left\{(Y_{1} - Y_{0})(\delta_{A,1}(X) - \delta_{A,0}(X)) \right. \\
&\quad \left. - (A_{1} - A_{0})(\delta_{Y,1}(X) - \delta_{Y,0}(X)) + \delta_{Y,1}(X) \delta_{A,0}(X) - \delta_{Y,0}(X) \delta_{A,1}(X)\right\} - \Psi(P).
\end{align*}

Finally, it follows to prove Theorem~\ref{thm:panel.policy} by Equation~\eqref{eq:opt.policy}.

\section{Additional simulations}
\label{sm.sec:add.simu}

In this section, we report additional simulation results to illustrate how different sample sizes and the strength of the IV affect the performance of the estimated policies.

\subsection{Sensitivity analysis}

In this section, we study how the strength of the IV affects the performance of the estimated policies. The data generation process is the same as Section~\ref{sec:simu}, except that the treatment assignment mechanism is given by
\begin{align*}
Pr(A_{0} = 1 \mid Z, U, X) & = \text{expit}(1.5 - 3 Z + 0.2 U_{0} + 2 X_{1}), \\
Pr(A_{1} = 1 \mid Z, U, X) & = \text{expit}(-1.5 + 2 Z - 0.15 U_{1} + 1.5 X_{2}),
\end{align*}
for weak IV strength, and 
\begin{align*}
Pr(A_{0} = 1 \mid Z, U, X) & = \text{expit}(3 - 7 Z + 0.2 U_{0} + 2 X_{1}), \\
Pr(A_{1} = 1 \mid Z, U, X) & = \text{expit}(-3 + 5 Z - 0.15 U_{1} + 1.5 X_{2}),
\end{align*}
for strong IV strength, respectively. Simulation results are reported in Figures~\ref{fig:str.para} and \ref{fig:str.ml}. 

\begin{figure}[p]
    \centering
    \includegraphics[scale=0.8]{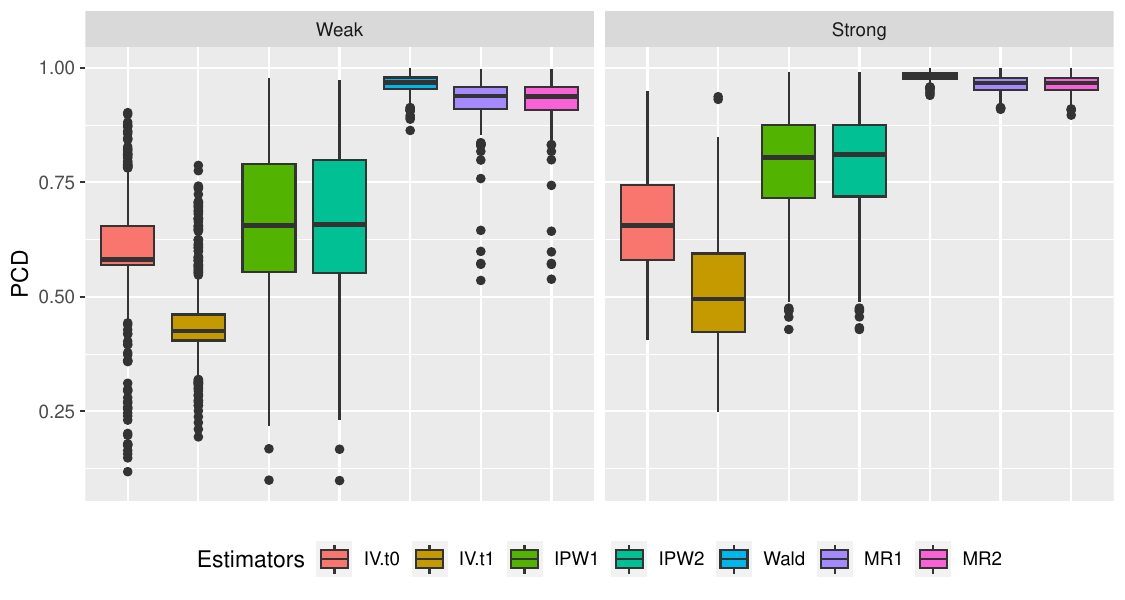}
    \caption{The percentage of correct decisions (PCD) results of the estimated optimal policies using parametric models, under weak (left) or strong (right) IV strength.}
    \label{fig:str.para}
\end{figure}

\begin{figure}[p]
    \centering
    \includegraphics[scale=0.8]{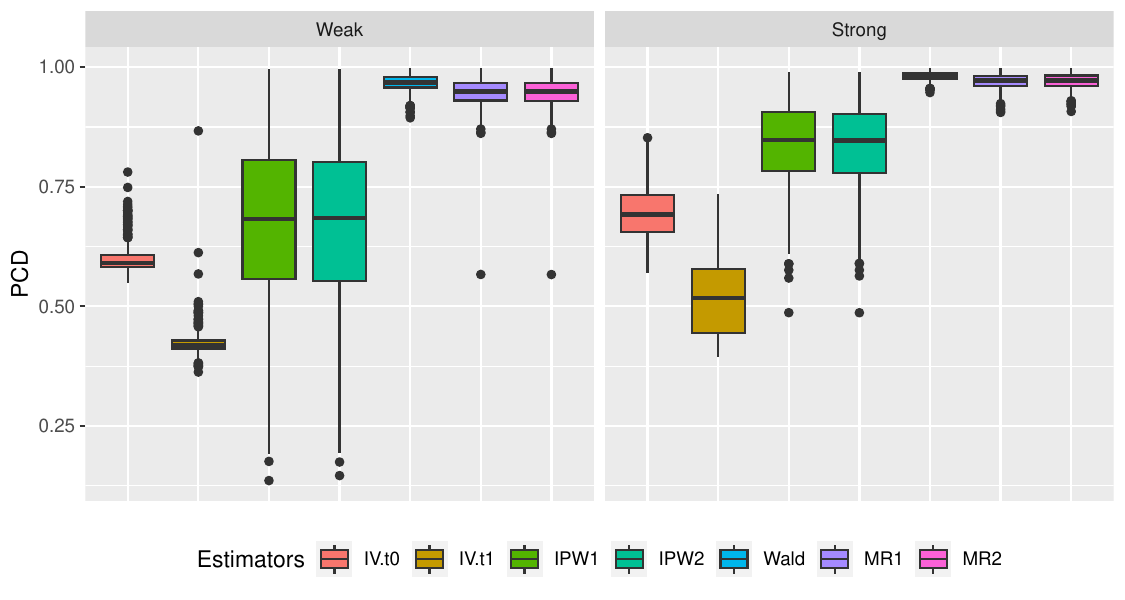}
    \caption{The percentage of correct decisions (PCD) results of the estimated optimal policies using machine learning, under weak (left) or strong (right) IV strength.}
    \label{fig:str.ml}
\end{figure}

\subsection{Sample size}

In this section, we study how different sample sizes affect the performance of the estimated policies. The data generation process is the same as Section~\ref{sec:simu}. The sample sizes are $n = 2500, 10000$ when using parametric models, and $n = 5000, 20000$ when using machine learning. Simulation results are reported in Figures~\ref{fig:ss.para} and \ref{fig:ss.ml}. 

\begin{figure}[p]
    \centering
    \includegraphics[scale=0.8]{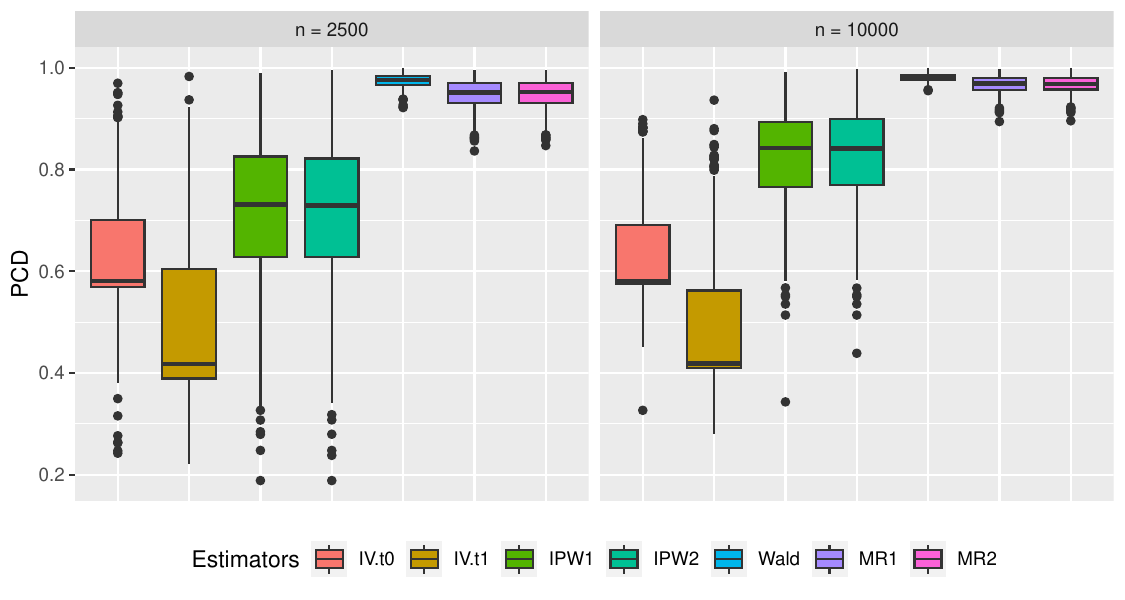}
    \caption{The percentage of correct decisions (PCD) results of the estimated optimal policies, using parametric models with sample size $n=2500$ (left) or $n=10000$ (right).}
    \label{fig:ss.para}
\end{figure}

\begin{figure}[p]
    \centering
    \includegraphics[scale=0.8]{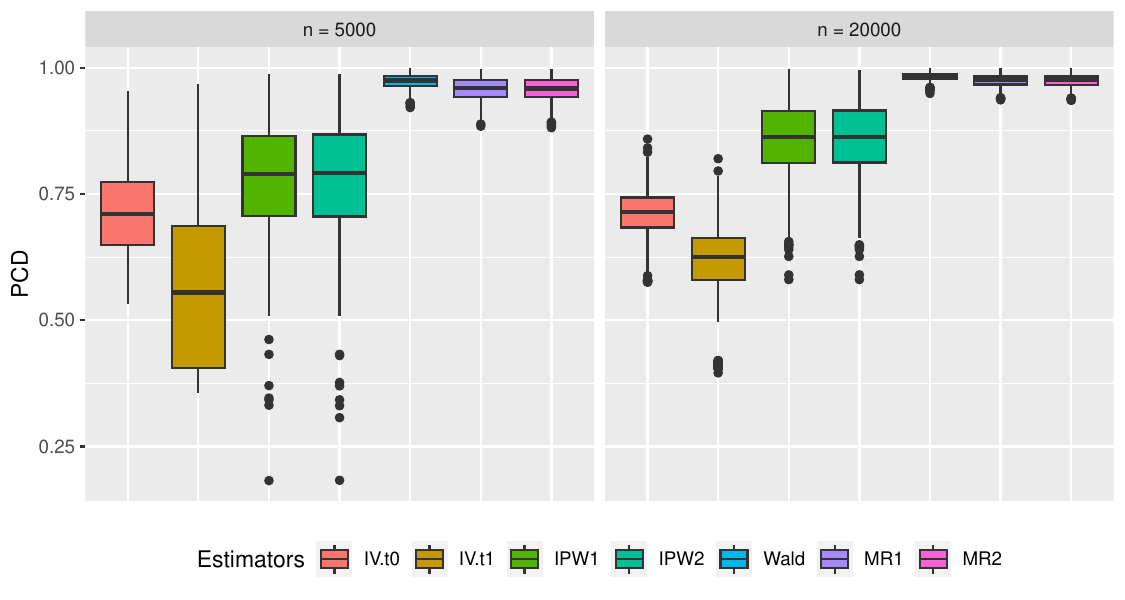}
    \caption{The percentage of correct decisions (PCD) results of the estimated optimal policies, using machine learning with sample size $n=5000$ (left) or $n=20000$ (right).}
    \label{fig:ss.ml}
\end{figure}

\section{Australian Longitudinal Survey}
\label{sm.sec:als.data}

In this section, we provide supplementary information on our data analysis of the Australian Longitudinal Survey. The data can be accessed by making a request to the \href{https://dataverse.ada.edu.au/dataverse/australian-longitudinal-survey}{Australian Data Archive} (Australian National University).

We follow \cite{su2013local,cai2006functional} and use an index of labor market attitudes as the instrumental variable in our analysis. The survey includes seven questions about work, social roles and school attitudes towards working women. Individuals respond to these questions with scores ($1$) strongly agree, ($2$) agree, ($3$) don’t know, ($4$) disagree, and ($5$) strongly disagree. This survey design implies that a response with a higher score indicates more positive attitude towards the education benefit of women and also their active role in the labor market. Following \cite{su2013local}, we use only six out of the seven questions to construct our attitudes index, since questions 2 and 3 are actually very similar, thus might be repetitive. We choose question 2 over question 3.
Summary statistics of our data from the 1984 and 1985 waves are reported in Table~\ref{tab:ALS_sample1984} and \ref{tab:ALS_sample1985}, respectively. Replication code is available at \href{https://github.com/panzhaooo/policy-learning-instrumented-DiD}{GitHub}.

\begin{table}[p]
    \centering
    \begin{tabular}{c c c c c c}
        \hline\hline
        Variable & Source & Mean & SD & Min & Max \\
        \hline
        \texttt{born\_australia} & A12                             & $0.82$ & $0.38$ & $0$ & $1$ \\
        \texttt{married}         & A9                              & $0.07$ & $0.25$ & $0$ & $1$ \\
        \texttt{uni\_mem}        & G10                             & $0.34$ & $0.48$ & $0$ & $1$ \\
        \texttt{gov\_emp}        & G9                              & $0.21$ & $0.41$ & $0$ & $1$ \\
        \texttt{age}             & A4                              & $20.07$ & $2.45$ & $14$ & $26$ \\
        \texttt{year\_expe}      & F3-4, F7-10, F31-33, G21-23     & $0.94$ & $1.40$ & $0$ & $11$ \\
        \texttt{attitude}        & O1-7                            & $17.94$ & $3.48$ & $6$ & $28$ \\
        \texttt{year\_edu}       & E4, E7, E10, E14, E16, E23, E25 & $11.14$ & $1.93$ & $3$ & $20$ \\
        \texttt{wage\_hour}      & G3-5, G7-8                      & $4.83$ & $2.01$ & $0.57$ & $21.43$ \\
        \hline\hline
    \end{tabular}
    \caption{The 1984 wave summary statistics of variables \texttt{born\_australia}: whether a person is born in Australia; \texttt{married}: marital status; \texttt{uni\_mem}: union membership; \texttt{gov\_emp}: government employment; \texttt{age}: age; \texttt{year\_expe}: work experience; \texttt{attitude}: index of labor market attitudes; \texttt{year\_edu}: education levels; \texttt{wage\_hour}:  hourly wage. Source indicates which questions in the survey provide the information.}
    \label{tab:ALS_sample1984}
\end{table}

\begin{table}[p]
    \centering
    \begin{tabular}{c c c c c c}
        \hline\hline
        Variable & Source & Mean & SD & Min & Max \\
        \hline
        \texttt{born\_australia} & B3                             & $0.84$ & $0.36$ & $0$ & $1$ \\
        \texttt{married}         & A7                             & $0.15$ & $0.36$ & $0$ & $1$ \\
        \texttt{uni\_mem}        & G11                            & $0.38$ & $0.49$ & $0$ & $1$ \\
        \texttt{gov\_emp}        & G10                            & $0.22$ & $0.42$ & $0$ & $1$ \\
        \texttt{age}             & A4                             & $20.22$ & $2.87$ & $15$ & $26$ \\
        \texttt{year\_expe}      & F3-4, F7-10, F31-33, F23-25    & $1.82$ & $2.13$ & $0$ & $16$ \\
        \texttt{attitude}        & O1-7                           & $18.75$ & $3.49$ & $6$ & $30$ \\
        \texttt{year\_edu}       & E3, E5, E8, E12, E14, E21, E23 & $11.69$ & $2.11$ & $2$ & $20$ \\
        \texttt{wage\_hour}      & G3-5, G7-8                     & $7.48$ & $2.94$ & $0.375$ & $75.00$ \\
        \hline\hline
    \end{tabular}
    \caption{The 1985 wave summary statistics of variables \texttt{born\_australia}: whether a person is born in Australia; \texttt{married}: marital status; \texttt{uni\_mem}: union membership; \texttt{gov\_emp}: government employment; \texttt{age}: age; \texttt{year\_expe}: work experience; \texttt{attitude}: index of labor market attitudes; \texttt{year\_edu}: education levels; \texttt{wage\_hour}:  hourly wage. Source indicates which questions in the survey provide the information.}
    \label{tab:ALS_sample1985}
\end{table}

\end{document}